\newtheorem{conjecture}{\bf Conjecture}
\begin{document}


\title {\Large{Topologies and Price of Stability of Complex Strategic Networks with Localized Payoffs : Analytical and Simulation Studies}}


\author[$\dagger$,*]{Rohith Dwarakanath Vallam,}
\author[$\dagger$]{\hspace{-0.1in},\hspace{0.1in}C. A. Subramanian}
\author[$\ddagger$]{\hspace{-0.1in},\hspace{0.1in}Ramasuri Narayanam}
\author[$\dagger$]{\hspace{-0.1in},\hspace{2.7in} Y Narahari}
\author[$\dagger$]{\hspace{-0.1in},\hspace{0.1in}Srinath Narasimha}
\affil[$\dagger$]{Indian Institute of Science, Bangalore, India}
\affil[$\ddagger$]{IBM India Research Lab, Bangalore, India}

\maketitle

\let\oldthefootnote\thefootnote
\renewcommand{\thefootnote}{\fnsymbol{footnote}}
\footnotetext[1]{Any correspondence can be addressed to \url{rohithdv@gmail.com} }
\let\thefootnote\oldthefootnote

\begin{abstract}
Several real-world networks exhibit a complex structure and are formed due to strategic interactions among rational and intelligent individuals. 
In this paper, we analyze a network formation game in a strategic setting where  payoffs of individuals depend only on their immediate neighbourhood. We call these payoffs as localized payoffs. In this network formation game, the payoff of each individual captures (1) the gain from immediate neighbors, (2) the bridging benefits, and (3) the cost to form links. This implies that the payoff of each individual can be computed using only its single-hop neighbourhood information. Based on this simple and appealing model of network formation, our study explores the structure of networks that form, satisfying one or both of the properties, namely, pairwise stability and efficiency. 
We analytically prove the pairwise stability of several interesting network structures, notably, the complete bi-partite network, complete equi-k-partite network, complete network and cycle network, under various configurations of the model. We validate and further extend these results through extensive simulations. We then characterize topologies of efficient networks by drawing upon classical results from extremal graph theory and discover that the Turan graph (or the complete equi-bi-partite network) is the unique efficient network under many configurations of parameters. We next examine the tradeoffs between topologies of pairwise stable networks and efficient networks using the notion of price of stability, which is the ratio of the sum of payoffs of the players in an optimal pairwise stable network to 
that of an efficient network. Interestingly, we find that price of stability is equal to
$1$ for almost all
configurations of parameters in the proposed model; and for the rest of the configurations of the parameters, we obtain a lower bound of $0.5$ on the price of stability. This leads to another key insight of this paper: under mild conditions, efficient networks will form when  strategic individuals choose to add or delete links based on only localized payoffs. 
\end{abstract}




\newtheorem{theorem}{Theorem}
\newtheorem{lemma}{Lemma}
\newtheorem{proposition}{Proposition}
\newtheorem{corollary}{Corollary}
\newtheorem{example}{Example}
\newtheorem{definition}{Definition}


\section{Introduction}
\label{introduction}
Several real world networks such as the Internet, social networks, organizational networks,
biological networks, food webs, co-authorship networks, citation networks, and many more
exhibit complex network structures.
Complex networks, generally modeled as graphs in most of the mathematical literature,
have been extensively studied in recent years and they are pervasive in today's science and technology \cite{barrat:08, newman:06, strogatz:01,newman:03}. Studying the properties of the complex network structures helps to understand the underlying phenomena and developing new insights into the system such as small-world phenomena, scale-free topology, and structural holes \cite{watts:98,albert:02,newman:03,song:05, burt1}.

Complex networks have also been studied extensively in the social
sciences \cite{ newman:03, easley:10, brandes:05, wasserman:94} (and the references therein). These studies reveal that 
complex social networks play an important role in spreading information
\cite{boorman:75, schelling:78, rogers:95, cooper:82, valente:95, strang:98}.
Individuals that participate in the process of information dissemination in such networks
receive various kinds of social and economic incentives and at the same time they also
incur costs in forming and maintaining the contacts (i.e. links) with other individuals
in terms of time, money, and effort. For this reason, individuals do act strategically while
selecting their neighbors. Thus, in several contexts, the behavior of the system is driven by
the strategic actions of a large number of individuals, each motivated by self-interest
and optimizing an individual objective function. Thus, it is important to study the dynamics of strategic
interaction among the individuals in complex social networks in order to understand
how such networks form and this is the primary motivation for this paper.


Many recent studies on network formation have used game theoretic approaches \cite{myerson:91, jackson:08, goyal:07, demange:05, slikker:01,dutta:00,dvt:98,borgs:11,brautbar:11} based on the observation that individuals are strategic and are
interested in maximizing their payoffs from
the social interactions. These models capture the strategic interactions among
individuals and the analysis of these models satisfactorily deduces the topologies of equilibrium networks. In this domain, networks that are enforced by a central authority
are known as efficient networks. Understanding the compatibility between
equilibrium networks and efficient networks has been the primary focus of research
in network formation \cite{elias:11, jackson:08, goyal:07, hummon00,
doreian06, corbo:05, galeotti:06, jackson:02}.

The crux of most of the models for network formation in the
literature \cite{elias:11, jackson-wolinsky:96, anshelevich:03, anshelevich:08,
fabrikant:03, corbo:05, galeotti:06} is the underlying
strategic form game where the players, strategies, and utilities (also termed as payoffs) are defined as follows: (i) the
individual agents in the complex network are the players, (ii) the strategy of each agent
is a subset of other agents with which it wishes to form links, and
(iii) the utility of each agent depends on the structure of the
network.

Another key aspect of most of the existing work in the literature
is that the process of network formation is  modeled in a decentralized fashion
where the individuals in the network take autonomous decisions regarding
whether to form or delete links with other agents.
However, most of these models require the agents to know the complete global
structure (that is, information about all nodes as well as all the
links between the nodes) of the network to compute their respective payoffs.
In many practical scenarios, this will be a very demanding requirement
making the utility computation a cumbersome and often intractable task.
Moreover, empirical evidence \cite{burt1, burt2} has clearly shown that
a significant fraction of the perceived social and economic
benefits for the individuals is derived from their $1$-hop or $2$-hop neighborhood.
Motivated by this, a few models of network formation
have been investigated that use local information (such as information about $1$-hop or $2$-hop neighborhood). For instance, Kleinberg and co-authors \cite{kstw:08} propose a
network formation model where the utility function of each node is based
on $2$-hop neighborhood information. However, in several real-world examples,
we observe that complete knowledge about $2$-hop information may be infeasible and nodes
may need to get a reasonably accurate estimate of their payoffs by using
just their immediate neighborhood (or $1$-hop) information.
In fact, we can observe such constraints in several real-world examples
like distributed sensor networks and real-life social networks.
In distributed sensor networks, coalitions of sensors can work together to
track targets of interest and each sensor knows only its immediate neighborhood.
In real-life social networks, it may not be possible for an individual to
know all the friends of his/her immediate friends. Note that individuals can know partial information about their $2$-hop neighborhood (i.e. friends of friends); however, this partial information is inadequate to accurately compute the payoffs of the individuals. Hence, in such settings, it becomes important to
study the network formation process using only single hop neighborhood information and this is the primary motivation behind our work in this paper.

In this paper, we explore a novel model of network formation process from
an economic perspective in which individuals derive payoffs
(consisting of benefits from immediate neighbors as well as structural holes and the costs to form links)
using purely local neighbourhood information and we refer to this setting as {\em network formation with localized payoffs}. The primary contribution of
our work is to come up with a game theoretic model in the above setting
and study the topologies of the equilibrium networks and efficient networks
that emerge in such a network formation process.
We next examine the tradeoffs between topologies of equilibrium networks and
efficient networks using the notion of price of stability \cite{anshelevich:08}.
Informally, price of stability is the ratio of the sum of payoffs of the players
in an optimal (in terms of sum of payoffs of the players) pairwise stable network to
that of an efficient network. Interestingly, we find that price of stability is $1$ for almost all
configurations of the parameters in the proposed model; and for the rest of
the configurations of the parameters in the proposed model, we obtain a lower bound of $0.5$ on price of stability. This indicates that, when some
mild conditions are satisfied, efficient networks will form when strategic individuals
choose to add or delete links based on localized payoffs.

We note that our model assumes that a link forms with the consent
of both the individuals (refer to Section~\ref{utilitymodel}), as
social contacts usually emerge in this manner. This assumption is widely considered in several models of network formation in the literature \cite{doreian06, jackson-wolinsky:96, hummon00, jackson:03, xie:08a,xie:08b}. In such situations, an appropriate choice for the notion of equilibrium is \textit{pairwise
stability} \cite{jackson-wolinsky:96}. Informally, we call a
network pairwise stable if no agent can improve its utility by
deleting any link and no two unconnected individuals can form a link to
improve their respective payoffs. We call a network {\em efficient} if
the sum of payoffs of the individuals is maximal. In this framework,
our objective is to investigate the tradeoff between topologies
of pairwise stable and efficient networks. In the rest of the
paper, we use the terms {\em graph\/} and {\em network\/}
interchangeably. We thus use the terms nodes and individuals interchangeably throughout the paper. As a game-theoretic approach is used, we sometimes use the terms players and individuals interchangeably throughout the paper.


\subsection{Relevant Work}
\label{relevant-work}



The field of network formation has been extensively studied in diverse fields such as sociology, physics, computer science,
economics, mathematics and biology 
\cite{jackson:08, goyal:07, demange:05, slikker:01,hummon00,doreian06, buskens-vanderijt:07,goyal-vegaredondo:07, kstw:08,johnson:00, bloch:07, calvo:00, dvt:98, dutta:00, galeotti:06, jackson-wolinsky:96, jackson:02, jackson:05, jackson:05b, jackson:03, doreian08a, doreian08b, borgs:11, brautbar:11}. 
In this section, we have included a discussion of the
models that are most relevant to our work.

The modeling of strategic formation in a general network setting
was first studied in the seminal work of Jackson and Wolinsky \cite{jackson-wolinsky:96}. They basically consider a value function and an allocation
rule model where the value function defines a value to each network
and the allocation rule distributes this value to the nodes
in the network. They investigate whether efficient networks will
form when self-interested individuals can choose to form links
and/or break links. The authors define two stylized models. For
these models, the authors observe that for high and low costs the
efficient networks are pairwise stable, but not always for medium
level costs. They also examine the tension between efficiency and
stability and derive various conditions and allocation rules for
which efficiency and pairwise stability are compatible. An important
feature their model does not capture is that of the intermediary
benefits that nodes gain by being intermediaries lying on the paths
between non-neighbor nodes. In particular, they do not capture the
benefits due to structural holes.

Hummon \cite{hummon00} carries out several interesting investigations
to unravel more specific topologies using a specific model proposed
by Jackson and Wolinsky \cite{jackson-wolinsky:96}. Two different agent-based
simulation approaches, the multi-thread model and the discrete
event simulation model, are used in the analysis done by Hummon \cite{hummon00}
to explore the dynamics of network evolution based on a model
proposed in Jackson and Wolinsky \cite{jackson-wolinsky:96}. Hummon identifies certain
pairwise stable structures that are more specific than those
anticipated by the formal analysis of Jackson and Wolinsky \cite{jackson-wolinsky:96}.
Doreian \cite{doreian06} explores the same issue in a systematic manner and
establishes the conditions under which different pairwise structures
are generated. Some gaps in the analysis of Doreian \cite{doreian06}
are addressed by Xie and Cui \cite{xie:08a,xie:08b}.

Jackson \cite{jackson:03} reviews several models of network formation
in the literature with an emphasis on the tradeoffs between efficiency
with stability. This work also studies
the relationship between pairwise stable and efficient networks
in a variety of contexts and under three different definitions of
efficiency. A later paper by Jackson \cite{jackson:05} presents a family of allocation
rules (for example, networkolus) that incorporate information
about alternative network structures when allocating the network
value to the individual nodes. The author provides a general
method of defining allocation rules in network formation games.

Goyal and Vega-Redondo \cite{goyal-vegaredondo:07} propose a non-cooperative
game model in which a node $i$ can benefit from serving as an intermediary
between a pair of nodes $x$ and $y$. In their model, a node $i$
could lie on an arbitrarily long path between $x$ and $y$. The authors
assume, however, that the benefits from farther nodes are not subject
to decay. They also assume that the benefit of communication
between any pair of nodes is always $1$ unit. This $1$ unit is distributed
to the two communicating nodes and only to certain so called essential nodes \cite{goyal-vegaredondo:07} on the paths between
the two communicating nodes. In this setting, the authors show
that a star graph is the only non-empty robust equilibrium graph.
The authors also study the implications of capacity constraints in
the ability of individual nodes to form links to other nodes and show
that a cycle network emerges.

Ramasuri and Narahari \cite{ramasuri:11} propose a generic model of network formation that essentially builds on the model of
Jackson-Wolinsky \cite{jackson-wolinsky:96}. This model simultaneously captures four key determinants of network formation:
(i) benefits from immediate neighbors through links, (ii) costs of maintaining
the links, (iii) benefits from non-neighboring nodes and decay of these benefits with distance, and (iv)
intermediary benefits that arise from multi-step paths. The authors \cite{ramasuri:11} analyze the proposed model to determine
the topologies of stable and efficient networks.

The aforementioned models of network formation have the limitation
that each individual (or node) needs
to know global information about the structure of the network
in order to compute its utility. A few
recent models \cite{buskens-vanderijt:07, arcaute:08, kstw:08} in the literature
make an attempt to overcome the above limitation.
\begin{itemize}

\item Buskens and van de Rijt \cite{buskens-vanderijt:07} propose a model that requires each individual agent to know
just its immediate neighbors (or $1$-hop neighborhood) to optimize its own utility.
However, the model captures only the cost to nodes and ignores various benefits that
nodes can derive from the network such as direct benefits from the neighbors
and the bridging benefits.

\item Arcaute, Johari, and Mannor \cite{arcaute:08} study the myopic dynamics in network
formation games. A key aspect of the dynamics studied in this model
is the local information and the authors show that these dynamics converge to
efficient or near efficient outcomes. However, the model  does not characterize
the topologies of equilibrium and efficient networks. Moreover, the model
works with Pareto efficiency whereas we work with a more natural notion
of efficiency, namely maximizing the sum of payoffs of all the nodes.

\item Kleinberg and co-authors \cite{kstw:08} characterize the
structure of stable networks with {\em Nash equilibrium\/} as the notion of
stability. The authors propose a polynomial time algorithm for a node
to determine its best response in a given graph as nodes can choose to
link to any subset of other nodes. They also show that stable networks have
a rich combinatorial structure. However, the model  needs each individual agent
to know its $2$-hop neighborhood (the set of all individuals that are
reachable within two hops) to compute and optimize its own utility.
The model works with Nash equilibrium while our proposed model works with the more natural
notion of pairwise stability as the notion of equilibrium. Also, our model considers only
single hop neighbourhood which is more appropriate for certain kinds of social networks
as already explained. Moreover, the model \cite{kstw:08}
does not study the tradeoff between the topologies of stable networks and
the topologies of efficient networks.
\end{itemize}


\subsection{Our Contributions}
\label{results}

To the best of our knowledge, our current study is the first one to
comprehensively explore the tradeoff between pairwise stability and efficiency using the notion of price of stability in the context of strategic network formation with localized payoffs, while taking
into account several key factors such as link costs, link benefits, and bridging benefits.
The following are the specific contributions of our paper.
\begin {itemize}
\item \textit{Section~\ref{utilitymodel}: An Elegant Model for Network Formation with Localized Payoffs:}
We propose a strategic form game to model the process of
network formation with localized payoffs and we term the game as {\em network formation (game) with localized payoffs} (NFLP). The utility of each player in the proposed game takes into account not only the benefits ($\delta$) that arise from routing information to and from its neighbors but also the cost ($c$) to maintain a link to each of its neighbors. 

\item \textit{Section~\ref{sec:Stability}: Analytical Characterization of Topologies of Pairwise Stable Networks:}
We first analytically characterize the topologies of the pairwise stable networks using the NFLP model. Some of the networks that we consider for analysis include the cycle, star, complete and null networks. In addition, we also derive pairwise stability conditions for certain classes of k-partite networks namely bipartite complete networks, complete equi-tri-partite networks and complete equi-k-partite networks. We note that our findings extend the possible topologies for pairwise stable networks compared to that of other models in the
literature.

\item \textit{Section~\ref{sec:Simulations}: Simulation of Network Formation Process and Additional Insights:}
Next, we simulate strategic dynamics in NFLP to understand how pairwise stable networks evolve over time. Our simulation results validate our analytical deductions and also reveal additional interesting insights on the topologies of pairwise stable networks.
In addition, we study the emergent pairwise stable topologies during the network formation process and study the evolution of pairwise stable network and its properties like the clustering co-efficient, convergence time, etc. over different configuration parameters.

\item \textit{Section~\ref{sec:Efficiency}: Analytical Characterization of Topologies of Efficient Networks:}
Next, we analytically characterize topologies of efficient networks by drawing upon classical results from extremal graph theory. Our work leads to sharp deductions about the efficient networks in NFLP. A striking discovery of our study here is that the equi-bi-partite graph (popularly known as the Turan graph) emerges as the unique efficient network under many regions of values of $\delta$ and $c$.

\item \textit{Section~\ref{POS}: Price of Stability Investigations:}
The quality of optimal (in terms of the sum of payoffs of the individuals in the network) pairwise stable networks is best understood through the notion of price of stability (PoS). PoS allows us to explore the middle ground between centrally enforced solution and completely unregulated anarchy \cite{anshelevich:08}. In most real-world applications, the nodes are not completely unrestricted in their strategic behavior but rather agree upon a prescribed equilibrium solution. In such scenarios, the prescription can be chosen to be the best equilibrium thus making the price of stability an important issue to study. We study the PoS in NFLP to reveal tradeoffs between pairwise stable networks and efficient networks. Intriguingly, we find that PoS is $1$ for almost all configurations of $\delta$ and $c$. For the remaining configurations of $\delta$ and $c$, we obtain a lower bound of $\frac{1}{2}$ on PoS. This implies, under mild conditions on $\delta$ and $c$, that the proposed NFLP model produces pairwise stable networks that are efficient. 
\end {itemize}

%


\section{A Model for Network Formation with Localized Payoffs}
\label{utilitymodel}

We model network formation using a
strategic form game \cite{myerson:91}. We consider a network setup with
$n$ players denoted by $N=\{1,2,\ldots, n\}$. A strategy $s_i$ of a player $i$ is any subset of players with which
the player would like to establish links. We assume that the
formation of a link requires the consent of both the players.
Assume that $S_i$ is the set of strategies of player $i$. Let
$s=(s_1, s_2, \ldots, s_n)$ be a profile of strategies of the
players. Also let $S$ be the set of all such strategy profiles.
Each strategy profile $s$ leads to an undirected graph and we
represent it by $G(s)$. If there is no confusion, we just use $G$.
If players $x$ and $y$ form a link $(x,y)$ in a graph $g$, then we
represent the new graph by $g + (x,y)$. We assume that players in
the network communicate using shortest paths - this is a standard
assumption used in the literature for ease of modeling. In the
rest the paper, we use the terms players, nodes, and agents
interchangeably.

{\em Degree of Node:} The degree $d_i$ of node $i$ represents the number of neighbors
of node $i$.

{\em Costs:} If nodes $i$ and $j$ are connected by a link, then we
assume that the link incurs a cost $c \in (0,1)$ to each node. That is,
if the degree of node $i$ is $d_i$, then node $i$ incurs a cost of
$cd_i$.

{\em Benefits from Immediate Neighbors:} Assume that $\delta \in (0,1)$. If node $i$ is connected to a node $j$ by a direct link, then we
assume that node $i$ gains a benefit of $\delta$.
That is, if the degree of node $i$ is $d_i$, then node $i$ gains a
benefit of $\delta d_i$ from its immediate neighbors.

{\em Bridging Benefits:} Consider a node $i$. Assume that nodes $j$ and $k$ are two neighbors of node $i$ such that $j$ and $k$ are not connected by a direct link.
Suppose that nodes $j$ and $k$ communicate using the length $2$ path through node $i$, then (i) we assume that a benefit of
$\delta^{2}$ arises due to this communication, and (ii) we also
assume that the benefit $\delta^{2}$ entirely goes to node $i$. We
refer to $\delta^{2}$ as the bridging benefit to node $i$. The
main motivation for this kind of bridging benefits is by
sociological studies suggesting that in practice most of the
bridging benefits arise from bridging the communication between
pairs of non-neighbor nodes in the network \cite{burt:07}.

In this framework, we define the utility of node $i$ such that it
depends on the benefits from immediate neighbors, the costs to
maintain links to these immediate neighbors, and the bridging
benefits. More formally, for any $i \in N$, the utility $u_i$ of
node $i$ in an undirected graph $G$ is defined as follows:
\begin{align}\label{proposedutilitymodel}
u_i(G) &= d_i(\delta - c) + d_i \Biggl(1-\frac{\sigma_i}{{d_i \choose 2}}\Biggr) \delta^2
\end{align}
where $\sigma_i$ is the number of links among the neighbors of
node $i$ in $G$. There are two terms in this utility function. The
first term specifies the net benefit to node $i$ from its
immediate neighbors. The second term specifies the sum of bridging
benefits to node $i$. Here $1-\frac{\sigma_i}{{d_i \choose 2}}$ is
the fraction of pairs of neighbors of node $i$ that are
non-neighbors and $d_i$ normalizes the level of bridging benefits
that node $i$ gains in the network. 
\begin{figure}[!hbtp]
\begin{center}
\includegraphics[scale=0.357]{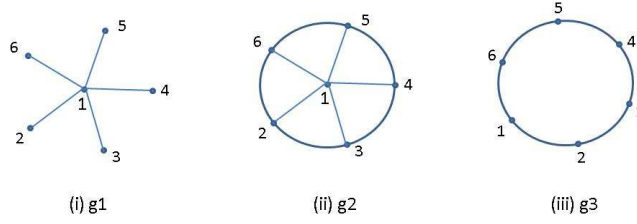}
\end{center}
\caption{An illustrative example
\label{model-illustration}}
\end{figure}
For example, the fraction of pairs of neighbors of node $1$ that are non-neighbors in both $g1$
and $g3$ in Figure~\ref{model-illustration} is $1.0$. However the degree of node $1$ in $g1$ is $d_1 = 5$ and the degree of node $1$
in $g3$ is $d_1 = 2$. The normalization term $d_i$ ensures that the bridging benefit for node $1$ is higher in $g1$ than in $g3$. 
Note that the bridging benefit of our proposed model can also be altered by introducing an arbitrary increasing, real-valued function of $d_i$ (call it $f(d_i)$). In this case, the utility model (Equation~\ref{proposedutilitymodel}) becomes as follows: 
\begin{align}
\nonumber u_i(G) &= d_i(\delta - c) + f(d_i) \Biggl(1-\frac{\sigma_i}{{d_i \choose 2}}\Biggr) \delta^2.
\end{align}
For ease of analysis, we work with $f(d_i)=d_i$ throughout this paper.

Note: Assume that node $i$ bridges the communication between
$j$ and $k$; and a benefit of $\delta^{2}$ is generated. In the
literature, there are three well known ways of distributing the
benefit $\delta^{2}$ to nodes $i$, $j$, and $k$: (i) only node $i$
gets entire $\delta^{2}$, (ii) node $i$ gets $0$, and (iii) nodes
$i$, $j$, and $k$ get equal share of $\delta^{2}$. In this paper,
we work with scenario (i). A similar approach is utilized in
\cite{kstw:08} as well. We note that the analysis that we perform
using scenario (i) can be easily extended to other two scenarios.

\subsection{The Network Formation Game}
The above framework defines a strategic form game $\Gamma =
\Bigl(N, (S_i)_{i \in N}, (u_i)_{i \in N} \Bigr)$ that models
network formation with localized payoffs. We refer to this as
network formation game with localized payoffs (NFLP). The
following example illustrates NFLP.

\begin{example}
\label{nfgl-illustration}
Assume that $N=\{1,2,3,4,5,6\}$ is the
set of $6$ players. If $s_1 = \{2,3,4,5,6\}$, $s_2 = \{1\}$, $s_3
= \{1\}$, $s_4 = \{1\}$, $s_5 = \{1\}$, $s_6 = \{1\}$, then the
resultant graph $g1$ is the star graph as shown in Figure
\ref{model-illustration}.(i). Note that an edge forms with the
consent of both the nodes.

Following the NFLP model, the payoffs of the players in the star graph are as follows:
$u_1(g1)=5(\delta-c) + 5\delta^{2}$ and $u_2(g1) = u_3(g1) =
u_4(g1) = u_5(g1) = u_6(g1) = (\delta-c)$.

If $s_1 = \{2,3,4,5,6\}$, $s_2 = \{1,3,6\}$, $s_3 = \{1,2,4\}$,
$s_4 = \{1,3,5\}$, $s_5 = \{1,4,6\}$, $s_6 = \{1,2,5\}$, then the
resultant graph $g2$ is the wheel graph as shown in Figure
\ref{model-illustration}.(ii). Following the NFLP model, the payoffs of the players in the
wheel graph are as follows: $u_1(g2)=5(\delta-c) +
\frac{5\delta^{2}}{2}$ and $u_2(g2) = u_3(g2) = u_4(g2) = u_5(g2)
= u_6(g2) = 3(\delta-c) + \delta^{2}$.

On similar lines, if $s_1 = \{2,6\}$, $s_2 = \{1,3\}$, $s_3 =
\{2,4\}$, $s_4 = \{3,5\}$, $s_5 = \{4,6\}$, $s_6 = \{1,5\}$, then
the resultant graph $g3$ is the cycle graph as shown in
Figure~\ref{model-illustration}.(iii). Following the NFLP model, the payoffs of the
players in the cycle graph are as follows: $u_1(g3) = u_2(g3) =
u_3(g3) = u_4(g3) = u_5(g3) = u_6(g3) = 2(\delta-c) +
2\delta^{2}$.
\end{example}

\section{Analytical Deductions on Topologies of Pairwise Stable Networks}\label{sec:Stability}

In this section, we first recall the notion of pairwise stability.
Then, we characterize the topologies of pairwise stable networks.
To begin with, we note that the notion of pairwise stability is
defined by Jackson and Wolinsky~\cite{jackson-wolinsky:96}. Formally,
we call an undirected graph \text{$G=(V,E)$} pairwise
stable~\cite{jackson-wolinsky:96} if (i) $\forall (i,j) \in E,
u_i(G) \geq u_i(G-(i,j))$ and $\text{$u_j(G) \geq u_j(G-(i,j))$}$,
(ii) $\forall (i,j) \notin E$, if $u_i(G) < u_i(G+(i,j))$ then
$u_j(G) > u_j(G+(i,j))$.

We now focus on characterizing the topologies of the pairwise
stable networks that may emerge following the framework in NFLP.
Characterizing pairwise stable networks under various network formation models has been addressed in the
literature \cite{jackson:08}, \cite{goyal:07}, \cite{buskens-vanderijt:07},
\cite{goyal-vegaredondo:07}, \cite{kstw:08}, \cite{fabrikant:03},
\cite{corbo:05}, \cite{galeotti:06}, \cite{jackson-wolinsky:96}, \cite{doreian06}, \cite{doreian08a}, \cite{doreian08b}.  
In our approach, we consider the topologies of certain standard
networks (such as complete network, cycle network, star
network, multi-partite networks) and then study whether such
topologies are pairwise stable following the framework of NFLP. We now present few results to establish certain standard networks are pairwise stable in the framework of NFLP.


\begin{proposition}
\label{lem:StabilityConditions1} If $(\delta-c)\leq\delta^{2}$ and
$(c-\delta)\leq\delta^{2}$, then the complete bipartite network is
pairwise stable.
\end{proposition}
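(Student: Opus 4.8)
The plan is to fix a complete bipartite network with parts $A$ (size $a$) and $B$ (size $b$), assume $a,b \geq 2$ so that every degree exceeds one, and verify the two defining conditions of pairwise stability directly from the utility formula in Equation~\ref{proposedutilitymodel}. First I would record the baseline utilities. A node $i \in A$ is joined to all of $B$ and to no node of $A$, so $d_i = b$, and since its neighbors lie entirely in $B$ they are mutually non-adjacent, giving $\sigma_i = 0$; hence $u_i = b(\delta-c) + b\delta^2$. Symmetrically every node of $B$ has utility $a(\delta-c)+a\delta^2$. These closed forms are what the two stability checks are compared against.

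Next I would handle condition (i), that no endpoint gains by deleting an incident link. Every edge runs between $A$ and $B$, so up to relabeling there is essentially one case. Deleting an edge $(i,j)$ drops $d_i$ from $b$ to $b-1$ while keeping $\sigma_i = 0$ (the surviving neighbors still induce no edges), so $i$'s utility falls by exactly $(\delta-c)+\delta^2$, and the same holds for $j$. Requiring both differences to be non-negative is precisely $(\delta-c)+\delta^2 \geq 0$, i.e.\ $c - \delta \leq \delta^2$, one of the two hypotheses.

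The heart of the argument is condition (ii), applied to the missing links, which are exactly the within-part pairs. Consider adding $(i,i')$ with $i,i' \in A$. The subtlety I expect to be the main obstacle is recomputing $\sigma_i$: the new neighbor $i'$ already lies in $A$ and is therefore adjacent to all $b$ nodes of $B$, which are $i$'s old neighbors, so $\sigma_i$ jumps from $0$ to $b$ while $d_i$ becomes $b+1$. The bridging term must then be simplified via $\frac{b}{\binom{b+1}{2}} = \frac{2}{b+1}$, after which $i$'s utility change collapses cleanly to $(\delta-c)-\delta^2$. By the symmetry of $A$ the change for $i'$ is identical, and the same computation with $a$ in place of $b$ covers within-$B$ pairs. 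Because both endpoints of a candidate link gain (or lose) the same amount, condition (ii) fails only when this common change is strictly positive, so stability holds exactly when $(\delta-c)-\delta^2 \leq 0$, which is the remaining hypothesis $\delta - c \leq \delta^2$. Combining the two checks yields pairwise stability under both stated inequalities. I would close by flagging the degenerate star-like case $K_{a,1}$, where some degree equals one, the $\binom{d_i}{2}$ denominator vanishes and the bridging term is read as zero; there the within-part addition yields a change of $(\delta-c)$ and would instead demand $\delta \leq c$, so such degenerate bipartite graphs must be excluded or treated separately.
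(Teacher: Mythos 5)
Your proof is correct and follows essentially the same route as the paper's: compute the baseline utility $b(\delta-c)+b\delta^{2}$, check that deleting a cross edge lowers utility by $(\delta-c)+\delta^{2}$ (giving the hypothesis $(c-\delta)\leq\delta^{2}$), and check that adding a within-partition edge changes utility by $(\delta-c)-\delta^{2}$ after the simplification $b/\binom{b+1}{2}=2/(b+1)$ (giving the hypothesis $(\delta-c)\leq\delta^{2}$). Two of your refinements actually improve on the paper: you attach the deletion step to the correct inequality $(c-\delta)\leq\delta^{2}$, whereas the paper's Step~2 mistakenly cites $\delta^{2}\geq(\delta-c)$ a second time; and your caveat about $K_{a,1}$ exposes a genuine gap the paper glosses over, since by the paper's own convention (in Example~\ref{nfgl-illustration} a star's leaves get utility $\delta-c$, with no bridging term) two leaves that link each gain $\delta-c$, which is strictly positive in part of the hypothesis region, so the proposition really does require both partitions to have at least two nodes.
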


\begin{proof}
%
%
%
%

Consider a complete bipartite network, $G$,  with $a_1$ and $a_2$
nodes respectively in the two partitions. The utility of node $i$
in a partition with $a_1$ nodes is
$u_{i}(G)=a_{2}(\delta-c)+a_{2}\delta^{2}$. This proposition can be
proved in two steps.\\
{\em Step 1:} Let us now add the edge $(i,j)$ to $G$ and call the
resultant graph $\overline{G}$. It can be readily checked that
$u_i(\overline{G}) = (a_{2}+1)(\delta-c)+(a_{2}-1)\delta^{2}$.
Since we are given that $\delta^{2} \geq (\delta-c)$, we get that
$u_i(G) = a_{2}(\delta-c)+a_{2}\delta^{2} \geq
(a_{2}+1)(\delta-c)+(a_{2}-1)\delta^{2} = u_i(\overline{G})$. That
is, no pair of non-neighbor nodes is better off by forming a
link in $G$.\\
{\em Step 2:} Assume that node $i$ severs an edge in $G$ and call
the resultant graph $\hat{G}$. It can be shown that
$u_{i}(\hat{G})=(a_{2}-1)(\delta-c)+(a_{2}-1)\delta^{2}$. Since we
are given that $\delta^{2} \geq (\delta-c)$, it is immediately seen
that $u_i(G) \geq u_i(\hat{G})$. Node $i$ is not better off by
severing a link in $G$.

Note that we can apply similar analysis with respect to each node
in the other partition. Hence the complete bipartite network is
pairwise stable.
\end{proof}

\begin{proposition}
\label{lem:StabilityConditions} (a) The complete network is pairwise stable if
$(c-\delta)\leq0$ (b) The cycle network is pairwise stable if
$1\leq(c-\delta)/\delta^{2}\leq2$, (c) The null (empty) network is
pairwise stable if $(\delta-c)\leq0$.
\end{proposition}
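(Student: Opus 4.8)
The plan is to verify the two defining conditions of pairwise stability directly for each of the three networks, exploiting the fact that the complete, cycle, and null networks are all vertex-transitive; it therefore suffices to examine a single representative node, and where relevant a single representative edge or non-edge. Throughout I would apply the utility formula~\eqref{proposedutilitymodel}, tracking how a node's degree $d_i$ and the number $\sigma_i$ of edges among its neighbors change under one link addition or deletion.

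Parts (a) and (c) are essentially dual and both immediate. In the complete network every node has degree $n-1$ with mutually adjacent neighbors, so $\sigma_i = \binom{n-1}{2}$, the bridging term vanishes, and $u_i = (n-1)(\delta-c)$; since there are no non-edges, only the no-deletion condition matters. Deleting an incident edge leaves the remaining $n-2$ neighbors mutually adjacent, so the bridging term stays zero and the utility drops to $(n-2)(\delta-c)$, whence stability is exactly $(\delta-c)\geq 0$, i.e.\ $(c-\delta)\leq 0$. In the null network only the no-addition condition matters; adding any edge raises both endpoints to degree $1$ (where the bridging term is again zero), giving each a utility of $\delta-c$ in place of $0$, so the edge is jointly attractive precisely when $\delta-c>0$ and stability is the complementary $(\delta-c)\leq 0$.

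The substantive case is part (b). Each node has degree $2$ with non-adjacent neighbors, so $\sigma_i = 0$ and $u_i = 2(\delta-c)+2\delta^2$. For the no-deletion condition, severing an incident edge drops the node to degree $1$ with utility $\delta-c$, and $2(\delta-c)+2\delta^2 \geq \delta-c$ rearranges to $(c-\delta)/\delta^2 \leq 2$, the upper bound. For the no-addition condition I would add a chord between two non-adjacent nodes and split on their cyclic distance: a distance-$2$ chord closes one triangle at each endpoint ($\sigma = 1$, $d = 3$) and changes each endpoint's utility by $(\delta-c)$, whereas a chord joining nodes at distance at least $3$ keeps $\sigma = 0$ and changes each endpoint's utility by $(\delta-c)+\delta^2$. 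The far chord is the most tempting, and ruling out its joint attractiveness, $(\delta-c)+\delta^2 \leq 0$, gives the lower bound $(c-\delta)/\delta^2 \geq 1$, which also subsumes the weaker requirement arising from distance-$2$ chords.

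I expect the main obstacle to lie in part (b): one must compute $\sigma_i$ correctly for each cyclic distance between the joined nodes and confirm that, within the stated range, no chord is jointly beneficial, so that the distance-$\geq 3$ chord is the binding constraint fixing the lower bound. Some care is also needed with small cycles, since the argument presumes $n \geq 4$ so that a node's two neighbors are non-adjacent and $\sigma_i = 0$ (for $n = 3$ the cycle is $K_3$ and falls under part (a)); chords at distance $\geq 3$ exist only when $n \geq 6$, but one can check directly that the stated range remains sufficient for $C_4$ and $C_5$ as well.
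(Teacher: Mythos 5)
Your proposal is correct and follows essentially the same route as the paper, which proves this result by the same direct verification of the link-addition and link-deletion conditions used for the complete bipartite case (Proposition~\ref{lem:StabilityConditions1}). Your treatment is in fact more careful than the paper's terse remark, since you explicitly handle the chord-distance cases for the cycle and the degenerate small cycles $C_3$, $C_4$, $C_5$.
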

The result can be proved easily by using arguments similar to
that in Proposition~\ref{lem:StabilityConditions1}.
\begin{proposition}\label{kpartite-result}
For $k \geq 3$, the complete $k$-partite network is pairwise
stable if (i) $\delta = c$, and (ii) $a_i=a, \forall i \in \{
1,2,...,k\}$ where $a_i$ is the number of nodes in partition $i$
in \text{$k$-partite} network and $a$ is any positive integer.
\end{proposition}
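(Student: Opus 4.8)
The plan is to lean entirely on the symmetry of the complete equi-$k$-partite network: every node has the same degree $d_i=(k-1)a$ and an isomorphic local neighborhood, so it suffices to verify the two pairwise-stability conditions for one representative node $i$ (say in partition $p$) against one representative deleted edge and one representative added edge. Because we are given $\delta=c$, the linear term $d_i(\delta-c)$ vanishes identically, so throughout I only need to track the bridging term $d_i\bigl(1-\sigma_i/\binom{d_i}{2}\bigr)\delta^2$. The cleanest bookkeeping device is the quantity $P_i:=\binom{d_i}{2}-\sigma_i$, the number of \emph{non-adjacent} pairs among $i$'s neighbors, which for a node of a complete multipartite graph equals $\sum_q\binom{|N_i\cap q|}{2}$, the number of same-partition pairs inside its neighborhood. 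As a baseline I would first note that the neighbors of $i$ are exactly the $(k-1)a$ nodes outside $p$, inducing a complete $(k-1)$-partite graph with parts of size $a$, whence $P_i=(k-1)\binom{a}{2}$ and, using $d_i/\binom{d_i}{2}=2/(d_i-1)$, the baseline utility simplifies to $u_i(G)=\frac{(k-1)a(a-1)}{(k-1)a-1}\,\delta^2$.

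For the deletion step I would sever $(i,j)$ with $j$ in some part $q\neq p$. Removing $j$ from $i$'s neighborhood drops the degree to $(k-1)a-1$ and shrinks part $q$ to $a-1$ neighbors, so the non-adjacent-pair count becomes $(k-2)\binom{a}{2}+\binom{a-1}{2}$. Writing out $u_i(G)\ge u_i(G-(i,j))$ and clearing the positive denominators $(k-1)a-1$ and $(k-1)a-2$, I expect the difference to collapse, via $\binom{a}{2}-\binom{a-1}{2}=a-1$, to the manifestly nonnegative quantity $(a-1)\bigl(\tfrac{(k-1)a}{2}-1\bigr)$, which is $\ge 0$ precisely because $k\ge 3$ and $a\ge 1$ force $(k-1)a\ge 2$. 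By symmetry the same bound holds for $j$, so no node gains by deleting a link.

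For the addition step I would add a non-edge $(i,j)$; in a complete $k$-partite graph the only non-edges are within-partition pairs, so $j$ lies in $i$'s own part $p$ and is therefore adjacent to every existing neighbor of $i$. Hence the non-adjacent-pair numerator stays fixed at $(k-1)\binom{a}{2}$ while the degree rises to $(k-1)a+1$, and the new utility simplifies neatly to $(a-1)\delta^2$. Comparing with the baseline reduces, after dividing by $(a-1)$ when $a\ge 2$, to $\frac{(k-1)a}{(k-1)a-1}\ge 1$, which holds (strictly); the case $a=1$ gives equality with both sides zero. Thus adding any within-partition edge makes $i$ (and symmetrically $j$) weakly worse off, so the premise of pairwise-stability condition (ii) never triggers, and both conditions are met.

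The main obstacle I anticipate is purely bookkeeping rather than conceptual: correctly accounting for how $\sigma_i$ changes under an edge move, and in particular spotting the two structural facts that make everything cancel — that a deleted neighbor $j$ in part $q$ carries away all its edges to $i$'s remaining neighbors (shrinking exactly one part of the neighborhood by one), and that an added same-partition neighbor is adjacent to \emph{all} current neighbors (leaving $P_i$ invariant while raising the degree). I would also flag the degenerate low-degree cases ($a=1$, or $(k-1)a\le 2$) in which $\binom{d_i}{2}$ is small: here I adopt the natural convention that the bridging term is zero whenever a node has fewer than two neighbors, and these cases collapse to equalities, remaining consistent with the complete-graph instance ($a=1$) covered by Proposition~\ref{lem:StabilityConditions}(a).
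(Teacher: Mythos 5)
Your proof is correct, and while it follows the same two-step skeleton as the paper (verify edge addition and edge deletion for a representative node, using $\delta=c$ to kill the linear term), the algebraic route is genuinely different and cleaner. The paper keeps $\sigma_i$ symbolic: for addition it derives the general identity $u_i(\overline{G})-u_i(G)=(\delta-c)-\delta^2+\delta^2\frac{2\sigma_i}{d_i(d_i-1)}$ (valid for any complete multipartite graph, not just equi-partite), but for deletion it must establish a separate claim ($expr_1\leq 1$) via contradiction plus case analysis on $a=1,2$, using a formula for $\sigma_i$ in terms of partition sizes --- a considerably more involved argument. You instead track the non-adjacent-pair count $P_i=\binom{d_i}{2}-\sigma_i$ and exploit the equi-partition structure to get exact closed forms: baseline utility $\frac{(k-1)a(a-1)}{(k-1)a-1}\delta^2$, while \emph{both} deviations (deleting a cross-partition edge, adding a within-partition edge) land the deviator at exactly $(a-1)\delta^2$, so stability reduces to the transparent inequality $\frac{(k-1)a}{(k-1)a-1}\geq 1$, or equivalently your collapsed difference $(a-1)\bigl(\tfrac{(k-1)a}{2}-1\bigr)\geq 0$. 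This buys two things the paper lacks: it bypasses the contradiction-based claim entirely, and your explicit handling of the degenerate case $(k-1)a=2$ (i.e., $k=3$, $a=1$, the triangle) patches a genuine gap in the paper, whose deletion expression $expr_1$ has $d_i-2$ in a denominator and is thus undefined exactly there. The trade-off is that your computation is tied to equal partition sizes throughout, whereas the paper's addition step would transfer unchanged to unequal partitions.
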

\begin{proof}
We start with a $k$-partite graph, $G$, satisfying condition (ii)
given in the statement of this proposition. Consider a node $i$ in the
$p^{th}$ partition of $G$ where $1\leq p \leq k$. We construct the
proof in two steps.

\textit{Step 1 (edge addition): } We can see that, in $G$,  the
only link that can be added from node $i$ is to a node $j$ in the
$p^{th}$ partition. Let $\overline{G}$ be the network obtained
after a new link $(i,j)$ is added to $G$. For pairwise stability,
we need $u_{i}(\overline{G}) - u_{i}(G) \leq 0 $. This implies
\begin{align}
\nonumber (\delta-c) + (d_i +1) \delta^2 \Biggl( 1 - \frac{\sigma_i^{'}}{\binom{d_i+1}{2}} \Biggr) - d_i\delta^2 \Biggl( 1 - \frac{\sigma_i}{\binom{d_i}{2}} \Biggr) &\leq 0
\end{align}
where $\sigma_i^{'}$ is the number of links among the neighbours
of node $i$ in $\overline{G}$ and $\sigma_i$ is the number of
links among the neighbours of node $i$ in $G$. Note that $d_i =
d_j$ since nodes $i$ and $j$ belong to the same partition in $G$.
Now we get that $\sigma_i^{'} = \sigma_i + d_j = \sigma_i + d_i $.
Simplifying, we get
\begin{align}
\label{lamma-ps-eqn} u_{i}(\overline{G}) - u_{i}(G) = (\delta-c) -
\delta^2 + \delta^2 \Biggl(\frac{2 \sigma_i}{d_i(d_i-1)}\Biggr)
\end{align}
Since the term $\displaystyle\frac{2\sigma_i}{d_i(d_i-1)}$ lies in
the interval $[0,1]$ and the fact that $\delta = c$ (given in the
statement of this proposition), we get that expression
(\ref{lamma-ps-eqn}) is non-positive. This implies that no pair of
nodes can form a link to improve their respective payoffs.

\textit{Step 2 (edge deletion): } In $G$, consider that node $i$
deletes a link to a node $j$ in the $q^{th}$ partition where $1
\leq q \leq k$ and $p\neq q$. Let $\overline{G}$ be the network
obtained after the link $(i,j)$ has been deleted from $G$. For
pairwise stability, we need $u_{i}(\overline{G}) - u_{i}(G) \leq 0
$. This implies
\begin{align}
\nonumber -(\delta-c) + (d_i -1) \delta^2 \Biggl( 1 - \frac{\sigma_i^{'}}{\binom{d_i-1}{2}} \Biggr) - d_i\delta^2 \Biggl( 1 - \frac{\sigma_i}{\binom{d_i}{2}} \Biggr) &\leq 0
\end{align}
where $\sigma_i^{'}$ denotes the number of links among the neighbours of node $i$ in $\overline{G}$. We can see that $\sigma_i^{'} = \sigma_i -d_j + a_i$. Simplifying,
\begin{align}\label{deletionedge}
-(\delta-c) - \delta^2 + \delta^2 \underbrace{\Bigl(\frac{-2 \sigma_i + 2d_j-2a_i}{d_i-2} + \frac{2 \sigma_i}{d_i -1}\Bigr)}_{expr_1} \leq 0
\end{align}
\textit{Claim:} $expr_1 \leq 1$.
\\ \smallskip \noindent {\textit{Proof of the Claim:}}
We know that $d_i = \sum_{j\neq i} a_j$ . Now, we derive an expression for $\sigma_i$.
\begin{align}\label{sigmai}
\sigma_i &= \binom{d_i}{2} - \sum_{j\neq i} \binom{a_j}{2}
&= \frac{d_i(d_i-1)}{2} - \frac{1}{2} \Biggl( \sum_{j\neq i} a_j^2 - \sum_{j\neq i} a_j \Biggr)
&= \frac{d_i^2 - \sum_{j \neq i} a_j^2}{2}
\end{align}
Now, we show that $expr_1 \leq 1$. The proof is by contradiction. Suppose $expr_1  > 1$.
\begin{align}\label{simplify1}
\nonumber \Bigl(\frac{-2 \sigma_i + 2d_j-2a_i}{d_i-2} + \frac{2 \sigma_i}{d_i -1}\Bigr) &> 1 \\
\nonumber 2(d_j-\sigma_i-a_i)(d_i-1) + (2\sigma_i)(d_i-2) &> (d_i-2)(d_i-1)\\
(2d_jd_i-2\sigma_i-2a_id_i-2d_j+2a_i) &> (d_i^2 -3d_i+2)
\end{align}
From condition \textit{(2)} in Proposition~\ref{kpartite-result}, we have $a_i=1, \forall i$ and $d_i=d_j=(k-1)a$. Also, using Equation~(\ref{sigmai}) in Equation~(\ref{simplify1}) and simplifying, we have
\begin{align}\label{simplify2}
(k+1)a - (k-1)a^2 &>2 \\
\nonumber \Rightarrow (k+1)a &> 2+ (k-1)a^2 > (k-1)a^2 \\
\nonumber \Rightarrow a &< \Biggl( \frac{k+1}{k-1}\Biggr)
\end{align}
Let $y(k) = \bigl(\frac{k+1}{k-1}\bigr)$. As we know that the function $y(k)$ is a decreasing function of $k$ (as derivative of $y(k)$ with respect to $k$   is $< 0$), we can write
\begin{align}
\nonumber a &< y(2) \Rightarrow a < 3
\end{align}
So, clearly we can conclude that $expr_1 > 1$ for $0< a < 3$ (i.e., $a=2$ and $a=1$) and $expr_1 \leq 1$ for $a \geq 3$.

Now we will examine what happens when $a=1$ and $a=2$. Substituting $a = 1$ in Equation~(\ref{simplify2}) and simplifying, we get $2>2$ which is absurd. Substituting $a = 2$ in Equation~(\ref{simplify2}) and simplifying, we get $k<2$ which violates the hypothesis that $k \geq 3$. Hence, by the above arguments, \text{$expr_1 \leq 1, \forall a \in \{1,2, ...\}, \forall k \geq 3$}. This completes the proof of the claim.

Note that we are given that $\delta=c$. Thus, from Equation~(\ref{deletionedge}),
\begin{align}\label{deletionedge2}
\nonumber - \delta^2 + \delta^2 \underbrace{\Bigl(\frac{-2 \sigma_i + 2d_j-2a_i}{d_i-2} + \frac{2 \sigma_i}{d_i -1}\Bigr)}_{\leq 1} \leq 0 \;\;
\Rightarrow  \;\; u_{i}(\overline{G}) - u_{i}(G) &\leq 0
\end{align}

Thus, node $i$ does not have any incentive to add an edge to $G$
or delete an edge from $G$ when the conditions given in the
statement of the proposition are satisfied. As node $i$ is chosen
arbitrarily from $G$, we have that $G$ is pairwise stable.
\end{proof}

Using a similar approach, we can prove the stability results for other standard networks. We summarize these results in Table \ref{summarytable2}\footnote{Note that the legends in the figure correspond to the numbering specified in Table~\ref{summarytable2}} and the graphical illustration of these results is depicted in Figure~\ref{summaryfigure3}.

\begin{figure*}[h]
\begin{tabular}{cc}
\begin{minipage}{8cm}
\centering
\scriptsize
\begin {tabular} {||l||l||l||}
\hline
\hline
{\textbf{Parameter }}  &  {\textbf{Additional }} &{\textbf{P.S.}\footnotemark[1]} \\
{\textbf{Region}}  & {\textbf{Conditions}}&{\textbf{networks}} \\
\hline
\hline
\multirow{5}{*}
& $\textbf{(1a) }(\delta - c) \geq \delta^2$ & Complete \tabularnewline
\cline{2-3}
$\textbf{(1) }\delta > c$ & $\textbf{(1b) }(\delta - c) < \delta^2$ & Complete \\
& & C.B.P \footnotemark[4] \\
\cline{2-3}
& $\textbf{(1c) } (\delta - c) < 2/3\delta^2$ & C.E.T.P \footnotemark[6] \\
& & Complete \\
& & C.B.P \\
\cline{2-3}
\hline
\multirow{4}{*}
& & Complete, Null, \\
\textbf{(2) }$\delta = c$ & & C.B.P, \\
 & & C.E.K.P\footnotemark[5]  \\
\hline
\multirow{5}{*}
& $\textbf{(3a) } (c - \delta) > 2\delta^2$& Null \\
\cline{2-3}
& $\textbf{(3b) }(c - \delta) \leq \delta^2$  & C.B.P \\
& & Null \\
\cline{2-3}
\textbf{(3) }$\delta < c$ & $\textbf{(3c) }\delta^2 \leq (c - \delta) \leq 2\delta^2$  & Cycle \\
& & Null \\
\cline{2-3}
& $\textbf{(3d) }(c - \delta) < 2/3\delta^2$ & C.E.T.P \\
& & Null \\
& & C.B.P \\
\hline
\hline
\end {tabular}
\\
\footnotemark[1]{P.S: Pairwise Stable}
\footnotemark[4]{C.B.P: Complete BiPartite }

\footnotemark[5]{C.E.K.P: Complete Equi $K$-Partite }

\footnotemark[6]{C.E.T.P: Complete Equi Tri-Partite }
\captionof{table}{Characterization of pairwise stable \\network topologies in the proposed utility model}\label{summarytable2}
\end{minipage}
&
\begin{minipage}{7.5cm}
\begin{tabular}{l}
\centering
\epsfig{height=8cm, width=8cm, angle=0.0,figure=./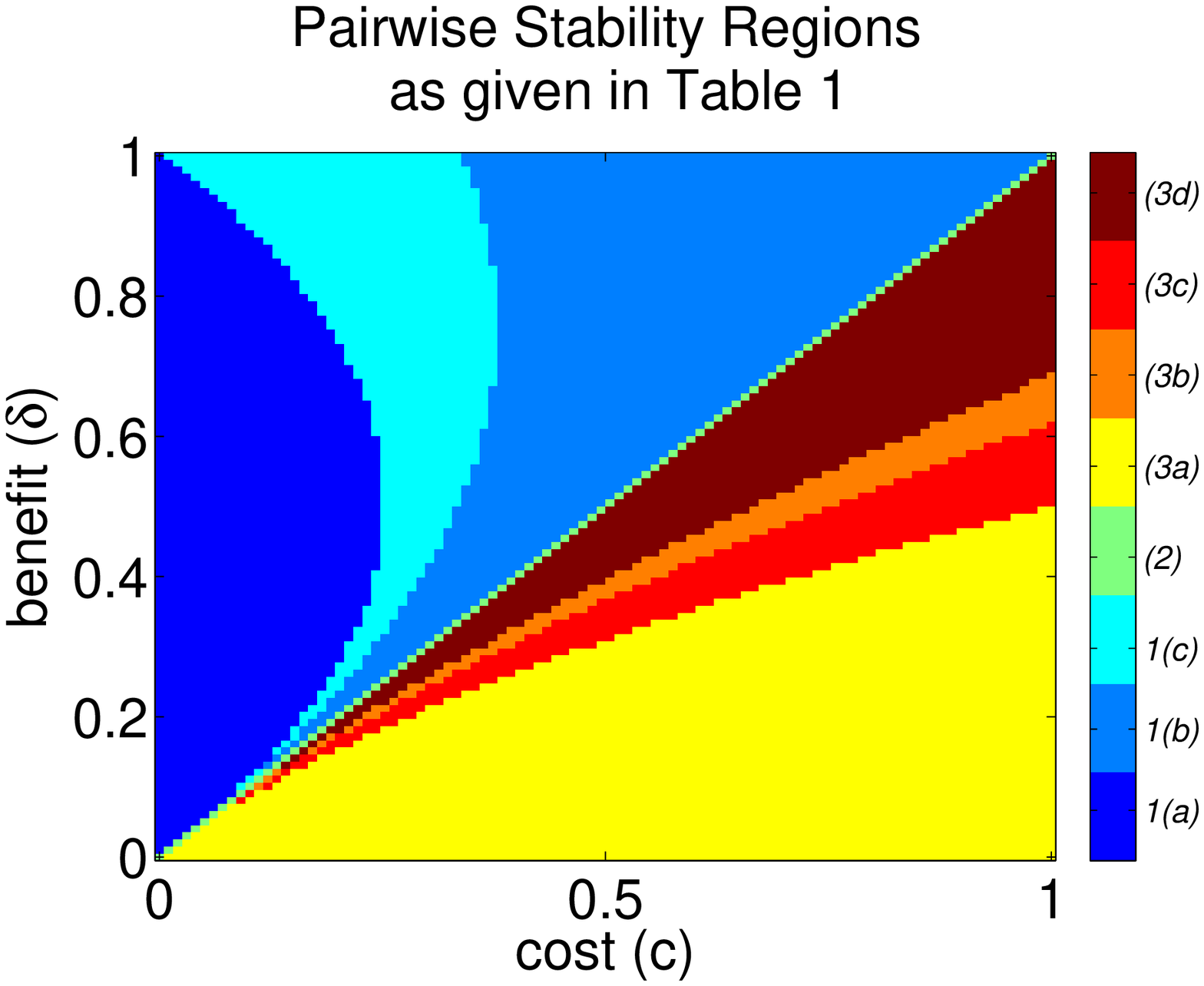, keepaspectratio }
\end{tabular}
\captionof{figure}{Graphical Illustration }\label{summaryfigure3}
\end{minipage}
\end{tabular}
\end{figure*}

\section{Simulation: Validation and Additional Insights on Topologies}\label{sec:Simulations}

In this section, we investigate various aspects of the network formation game through extensive simulations. The main purpose of this exercise is to get a better understanding of the network formation process as theoretical analysis has limited scope in enabling the understanding of the cumulative effects of many of the parameters like the initial network density, cost-benefit values, scheduling order of the nodes, etc that influence the network formation process.

In the network formation process, starting from some initial configuration of a network, the resultant topology of pairwise stable network may not be any of the standard networks considered in the previous section. In other words, these simulation results reveal that there could exist certain other topologies that satisfy pairwise stability apart from these standard networks.

Starting with some initial network (the null network, for example), the network structure changes with time as various nodes in the network add or remove links to their neighbors, so as to maximize their own individual utility from the network. It would be interesting to determine if, in the long run, the network reaches a stable state (an equilibrium or a near-equilibrium state). If the network does reach a stable state, it would be interesting to know the structure (i.e. shape) of the stable network and if this stable network is unique.
One way of approaching this is to start with the initial network and model the dynamics of the system as a function of time (or an analogous parameter) and analytically study the asymptotic network structure in the limit as time tends to infinity. However, the dynamics of the system can become very complex even in a moderately sized network, making such an approach infeasible. Further, such results would only be valid for those particular initial networks.

Another approach is to analyze the stability of some of the standard networks (complete network, cycle network, star network etc.) under our utility model (as presented in Table~\ref{summarytable2}). It would then mean that if the network reaches any of these standard stable networks, it is guaranteed to not deviate from this network. However, one problem with this approach is that starting from some initial network, we may not reach any of these standard networks. That is, some non-standard networks could be stable and the dynamic network could emerge into one of these non-standard networks.

\subsection{Simulation Setup}
We built a custom simulator using the C++ programming language in order to model the network formation process under our proposed network model. To implement the standard graph routines, we used the BOOST C++ libraries~\cite{boost}  which has efficient implementations of fundamental graph data structures and routines. We start with a random initial network consisting of $n$ nodes. The number of edges between these nodes is determined by the parameter $density (\gamma)$. For example, if $\gamma = 0$, we  start with an empty network; if $\gamma = 0.35$, we start with a network that contains $35\%$ of the possible $\binom{n}{2}$ edges. These edges are chosen uniformly at random.
As noted in Section~\ref{utilitymodel}, a node obtains a benefit of $\delta$ $(0\leq\delta\leq1)$ and incurs a cost ($c$ $(0\leq c\leq1)$) for maintaining a direct relationship (represented by an edge) with another node. In addition, each node reaps additional indirect benefit because of its potential to bridge its unconnected neighbors (determined by sparsity of relationships among his neighbors).

\subsection{The Simulation Process}
We run the simulations for each combination of possible values of $\delta$ and $c$ as shown in Table~\ref{tab:Simulation-parameters} given below. A single simulation run refers to a simulation with a particular value of of $\delta$ and $c$. Further, each simulation run is repeated multiple times as per the \textit{Num-Repetitions} parameter. We now describe the details of a single simulation run below. 
 
In a particular simulation run, each node is given an opportunity to act, based on a random schedule. Each node, when scheduled, considers three actions - namely, add an edge to a node that it is not directly connected to, delete an existing edge to a node, or do nothing. Each node chooses the action that maximizes its individual payoff (which is based on the parameters $\delta$ and $c$), breaking ties randomly.
Node~$i$, when adding an edge to node~$j$, may be allowed to do so only if it is beneficial to both or if node~$j$ is at least not worse off (mutual add (MA)). Similarly, node~$i$, when deleting an existing edge to node~$j$, may be allowed to do so unilaterally (unilateral delete). We study pairwise stable network evolution under these conditions.

Table \ref{tab:Simulation-parameters} lists the various simulation parameters. At some stage in the simulation, the network could evolve into a stable state where no node has any incentive to modify the network. One iteration in
which no node modifies the network is an \textit{idle iteration}, and the parameter \textit{Num-Idle-Terminate} indicates the number of idle iterations before we conclude that the network has reached a stable state. This is the case of normal termination of a simulation run. However, there may be cases where the network does not emerge into a stable state and cycles through  previously visited states even after many iterations (the case of \textit{dynamic-equilibrium} as noted in Hummon~\cite{hummon00}). The parameter \textit{Max-Iterations} indicates the number of iterations before we forcibly terminate the simulation run. However, we have observed that all the simulation runs achieved convergence much before the maximum iterations allowed indicating that the formation of dynamic equilibrium is not possible in our utility model. However, we leave the formal proof of this observation as a future work. The parameter \textit{Num-Repetitions} indicates the number of times each simulation run was repeated. The simulations were averaged out over different initial conditions and random schedules.

\begin{table}[h]
 \begin{tabular}{cc}
\begin{minipage}{8cm}
\footnotesize
\begin{center}
\begin{tabular}{|l|l|}
\hline
Parameters & Values\tabularnewline
\hline
\hline
N & 3, 4, 5, 10, 20\tabularnewline
\hline
Cost (c) & 0.05 to 1, in steps of 0.05\tabularnewline
\hline
Benefit ($\delta$) & 0.05 to 1, in steps of 0.05\tabularnewline
\hline
Density ($\gamma$)& 0, 0.35, 0.7\tabularnewline
\hline
Experiment & Mutual-Add, Unilateral-Delete\tabularnewline
\hline
Num-Iterations & 1000\tabularnewline
\hline
Num-Repetitions & 100\tabularnewline
\hline
Num-Idle-Terminate & 30\tabularnewline
\hline
\end{tabular}

\captionof{table}{Simulation parameters and Values\label{tab:Simulation-parameters}}
\end{center}
\end{minipage}
&
\begin{minipage}{8cm}
\centering
\includegraphics[scale=0.4]{./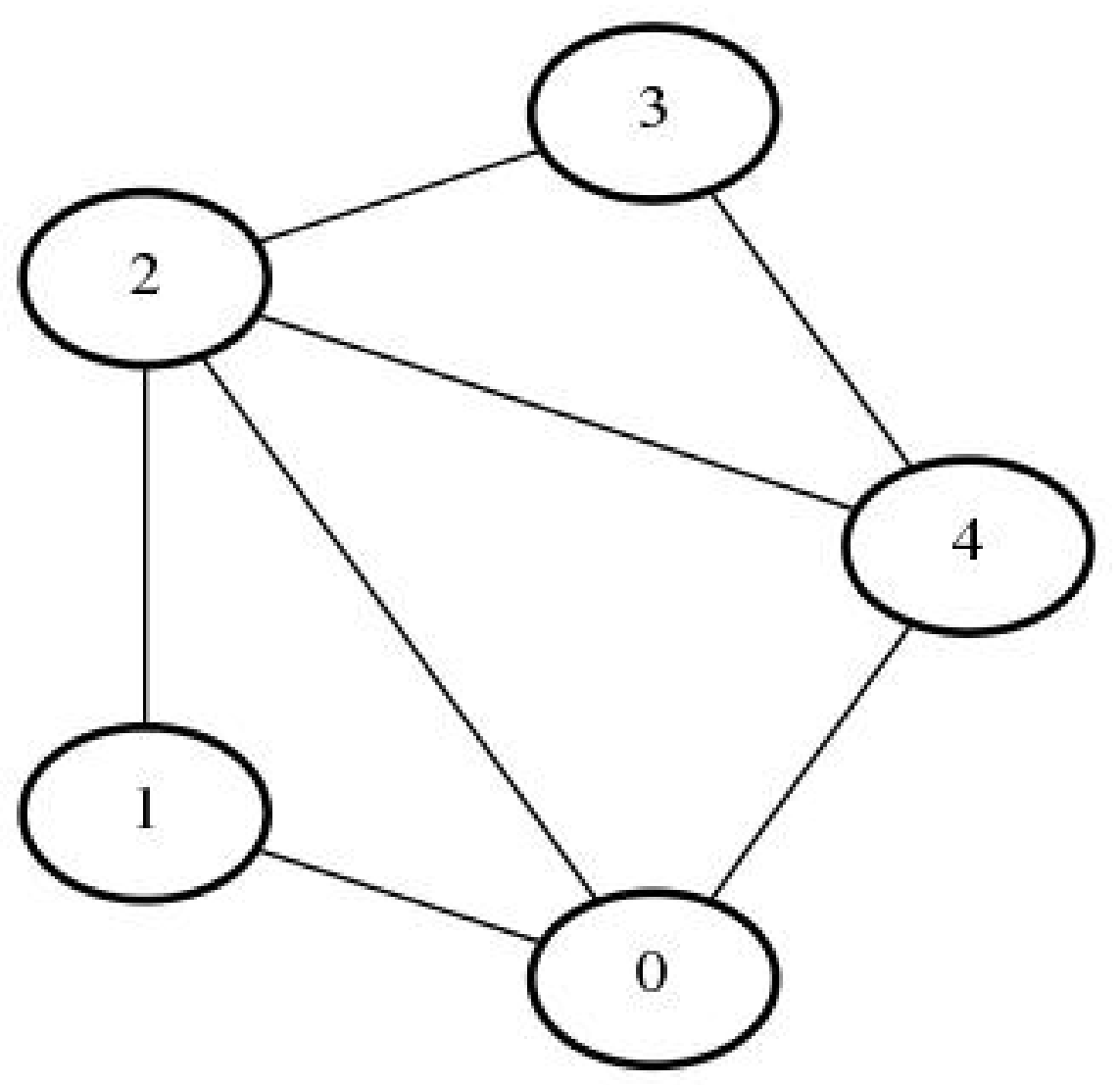}
\captionof{figure}{A stylized 5-node network\label{fig:5-actor-network}}
\end{minipage}
\end{tabular}
\end{table}
\vspace{-0.2in}
\subsection{Metrics Recorded\label{sub:Metrics-Recorded}}

At the end of \textit{Num-Repetitions} number of repetitions, a number of metrics were recorded. The following lists some of the important metrics recorded.
\begin{enumerate}
\item The network structure (shape) for each repetition
\item The frequency with which each of the network structures in Section
\ref{sub:Classification-of-Network-Structures} resulted (across all
repetitions)
\item The mean utility of the final network (across all repetitions)
\item The mean time to reach the final network (across all repetitions)
\item The mean number of acts to reach the final network (across all repetitions)
\end{enumerate}

Before we present the results, we briefly describe the classification criteria used to identify pairwise stable networks.

\subsection{Classification of Pairwise Stable Network Structures\label{sub:Classification-of-Network-Structures}}

Once the network reaches a stable state, we classify the network structure as one of the network structures shown in Table~\ref{networkstructures}.  As in Hummon~\cite{hummon00}, we use the sorted (descending order) degree vector to characterize the structure of the stable network. For example, the Null network has a sorted degree vector of (0, 0, \ldots{}, 0), the Star network (n-1, 1, 1, \ldots{}, 1) and the Complete network (n-1, n-1, \ldots{}, n-1). We refer to a network structure a shared network if it is a regular network (i.e., all nodes have same degree) of some uniform degree. For example, a cycle is a $2$-regular graph and hence is a shared network.

Also as in Hummon~\cite{hummon00}, we use total mean squared deviation (MSD) to classify the resultant stable network as Near-``standard network'' (for example, Near-complete network). Further, if the mean squared deviation is above a certain threshold ($\tau$) then we know its not close to any of the above topologies, we then color the graph using a greedy coloring algorithm ~\cite{boost} and then classify it either as a general k-partite graph (where $k$ equals the number of colors required to color the graph) or any of the other network structures shown in Table~\ref{networkstructures}. In our simulations, we use the maximum deviation ($(n-1)^2$) for calculating the $\tau$, i.e., $\tau=0.1\times(n-1)^2$. 

Note that whenever we classify a network as any type of K-Partite network, we implicitly mean that $K\geq 3$. The case of $K=2$ is the same as bipartite network and is handled as a separately as shown in Table~\ref{networkstructures}. Turan network refers to a complete bipartite network with the sizes of the two partitions to be as equal as possible. If $N$ is even, then the Turan network has equal sized partitions whereas if $N$ is odd, the size of one partition is one less than the other partition. 

For classification of a sorted degree network as a near-shared network, we first need to calculate the order of the regular network with which this degree vector needs to be compared. As in Hummon~\cite{hummon00}, to compute the total mean squared deviation for the shared structure, the ideal order is defined by average number of ties in the in-out degree vector, rounded to the nearest whole tie. In this example, if the degree vector is (3,2,1,1,1), the average is 1.6, and the ideal type shared structure is (2,2,2,2,2).
However, note that a cycle network is necessarily a shared network but a shared network need not always be a cycle network.

\begin{table}[h]
\centering
\scriptsize
\begin{tabular}{|c|c|c|c|}
\hline 
NULL& STAR& SHARED& COMPLETE\\
\hline
NEAR-NULL& NEAR-STAR& NEAR-SHARED& NEAR-COMPLETE\\
\hline 
BI-PARTITITE-COMPLETE & TURAN & EQUI-K-PARTITE-COMPLETE&EQUI-K-PARTITE \\
\hline 
K-PARTITE-COMPLETE & K-PARTITE & &\\
\hline 
\end{tabular}
\caption{Possible Network Structures considered in the simulations}\label{networkstructures}
\end{table}

The following example clarifies this procedure: Consider the $5$-node network as shown in Figure~\ref{fig:5-actor-network}. Suppose that we would like to classify this network as one of the following standard networks : Null, Star, Shared, Complete, Near-Null, Near-Star, Near-Shared or  Near-Complete. This is done as follows. Note that the given network does not classify as any of the first four networks in the list given above. Hence, we try to classify the given network as one of the remaining four networks (i.e., the `near' type networks).

We know that the sorted degree vector is $(4,3,3,2,2)$ for the given network. The ideal order for the shared network comparison is calculated by taking the average degree (which is $2.8$) and rounding to the nearest integer (which gives $3$). This means we have to compare the network to a $3$-regular network. The total MSD from the shared network is thus $((4-3)^{2}+(3-3)^{2}+(3-3)^{2}+(2-3)^{2}+(2-3)^{2}))/5 = 0.6$. The total MSD of this network from Star network is $((4-4)^{2}+(3-1)^{2}+(3-1)^{2}+(2-1)^{2}+(2-1)^{2}))/5 = 2$. Similarly, the total MSD from Null network is $8.4$, and the total MSD from the Complete Network is $2$. The value $0.6$ being the least among these and less than $10\%$ of maximum deviation $16$, we classify the above network structure as Near-Shared.

\subsection{Multiple Classification of Pairwise Stable Structures}
We note that the classification of pairwise stable network structures according to Table~\ref{networkstructures} is not mutually exclusive. There can exist networks which can be classified as more than one of the types described in Table \ref{networkstructures}. We illustrate a couple of interesting network structures that we encountered during our simulations here. Figure~\ref{fig:multipleclassification}(a) refers to a pairwise stable network that emerged when we ran the simulation with \textit{random\_seed }$ = 6875, \delta=0.7, c=0.55$. We observed that this network is both a Near-Shared network as well as a Tri-partite complete network whose parititions are ${(0,6,7,8), (1,2,5), (3,4,9)}$. In such cases, we classify the network structure as a K-Partite Complete network.

\begin{figure*}[h]
\begin{tabular}{cc}
\begin{minipage}{8 cm}
\vspace{0.2in}
\centering
\includegraphics[scale=0.3]{./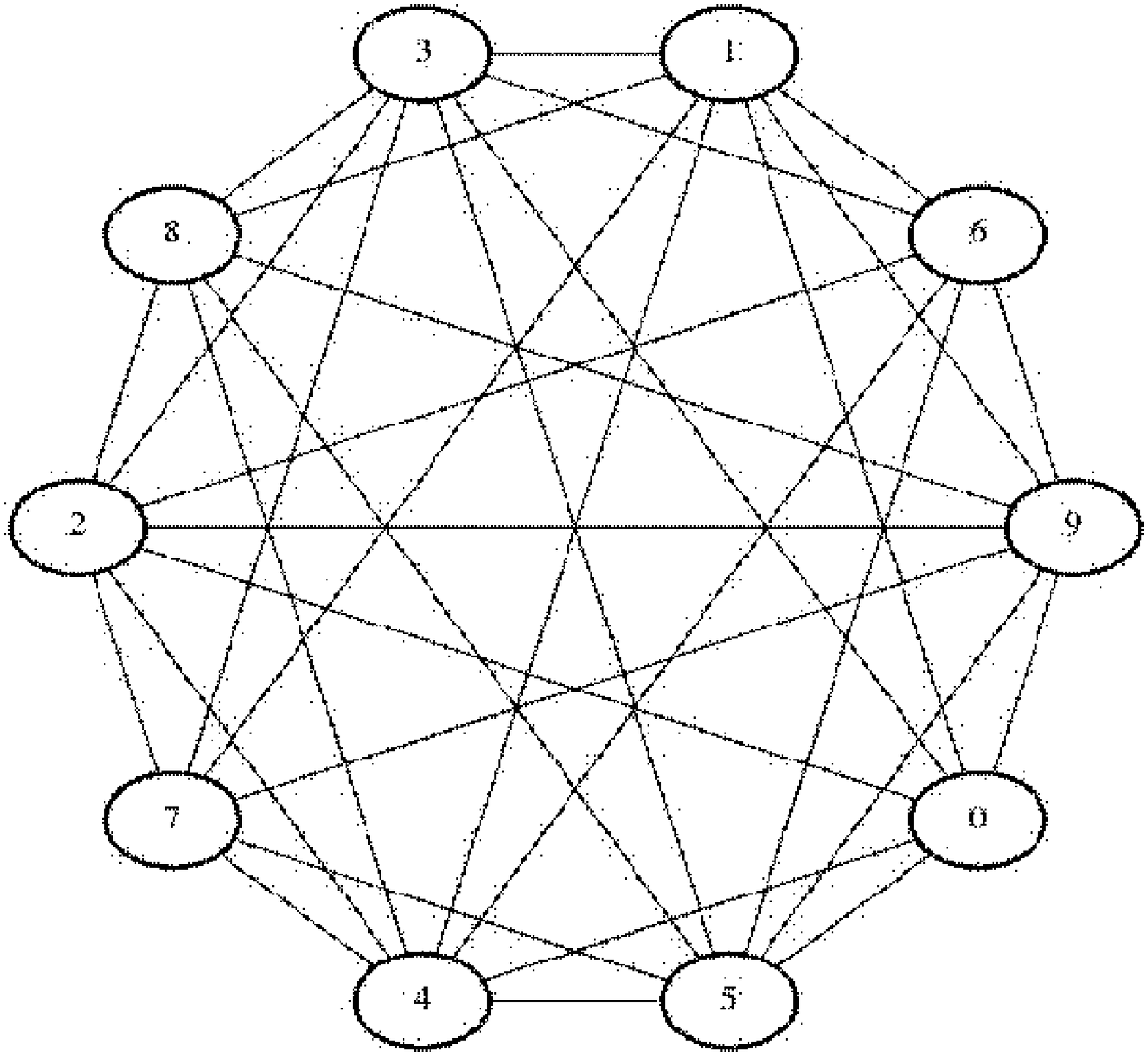}
\end{minipage}
&
\begin{minipage}{8 cm}
\vspace{0.2in}
\centering
\includegraphics[scale=0.3]{./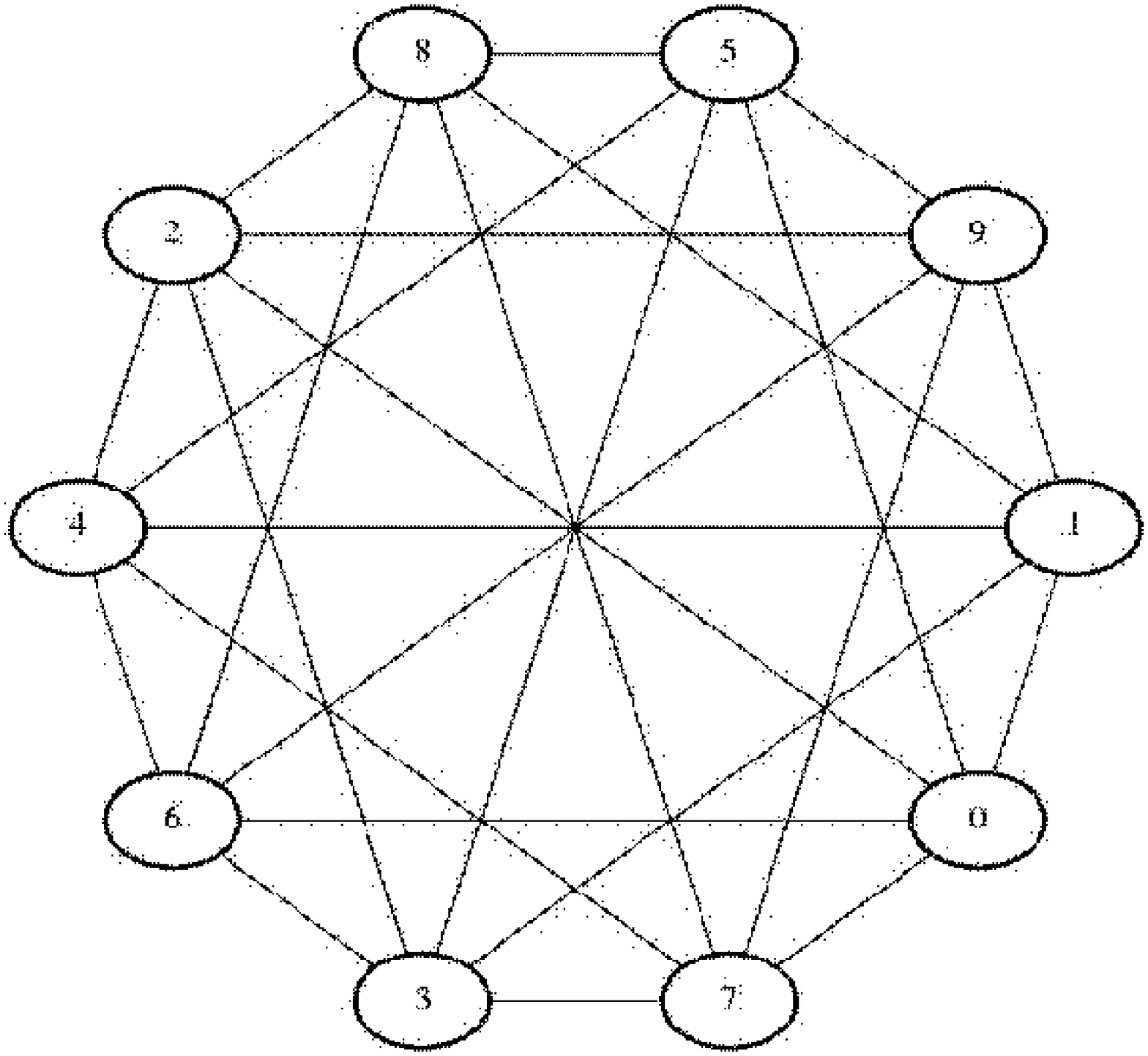}
\end{minipage}
\\
(a) & (b)
\end{tabular}
\caption{Possibility of multiple classifications for a given network structure\label{fig:multipleclassification}}
\end{figure*}
Another example is shown in Figure~\ref{fig:multipleclassification}(b) which is obtained when running simulations with \textit{random\_seed }$=15256, \delta=0.5, c=0.5$. We observe that this graph can be classified as a regular (or Shared) network with degree=$5$. However, it turns out that this graph is also an equi-partitioned bipartite network with partitions $(0, 3, 4, 8, 9), (1, 2, 5, 6, 7)$. In such cases, we classify the graph as equi-bipartite network (or the Turan network).

\subsection{Interpretation of Pairwise Stability}
In a pairwise stable network, if a node adds a link to another node and gains strictly from it, the other node should lose strictly. Hence, the  addition of the link becomes infeasible in this case. However, nodes in a pairwise stable network can still add links if adding these links does not change the payoffs of either of the nodes. In this case, the nodes are indifferent about adding the link. In the case of deletion, a node will delete a link from the current network unilaterally if it strictly benefits from doing so. We use this interpretation of pairwise stability during the course of our simulations.


\subsection{Model Validation}
We now proceed to understand some of the results of our simulations. First, in this section, we focus on the validation of our theoretical results on pairwise stability as shown in Figure~\ref{fig:validation}. We are interested in knowing the following aspects in the simulations.
\begin{itemize}
 \item Do the pairwise stable networks identified in Table~\ref{summarytable2} actually emerge in the simulation process?
 \item If so, under what values of $\delta$ and $c$ do they emerge?
 \item Do the conditions match with the theoretical results?
\end{itemize}

\begin{figure*}[htb!]
\centering
\begin{tabular}{cccc}
\begin{minipage}{3.3cm}
\centering
\epsfig{height=3cm, width=3cm, angle=0.0,figure=./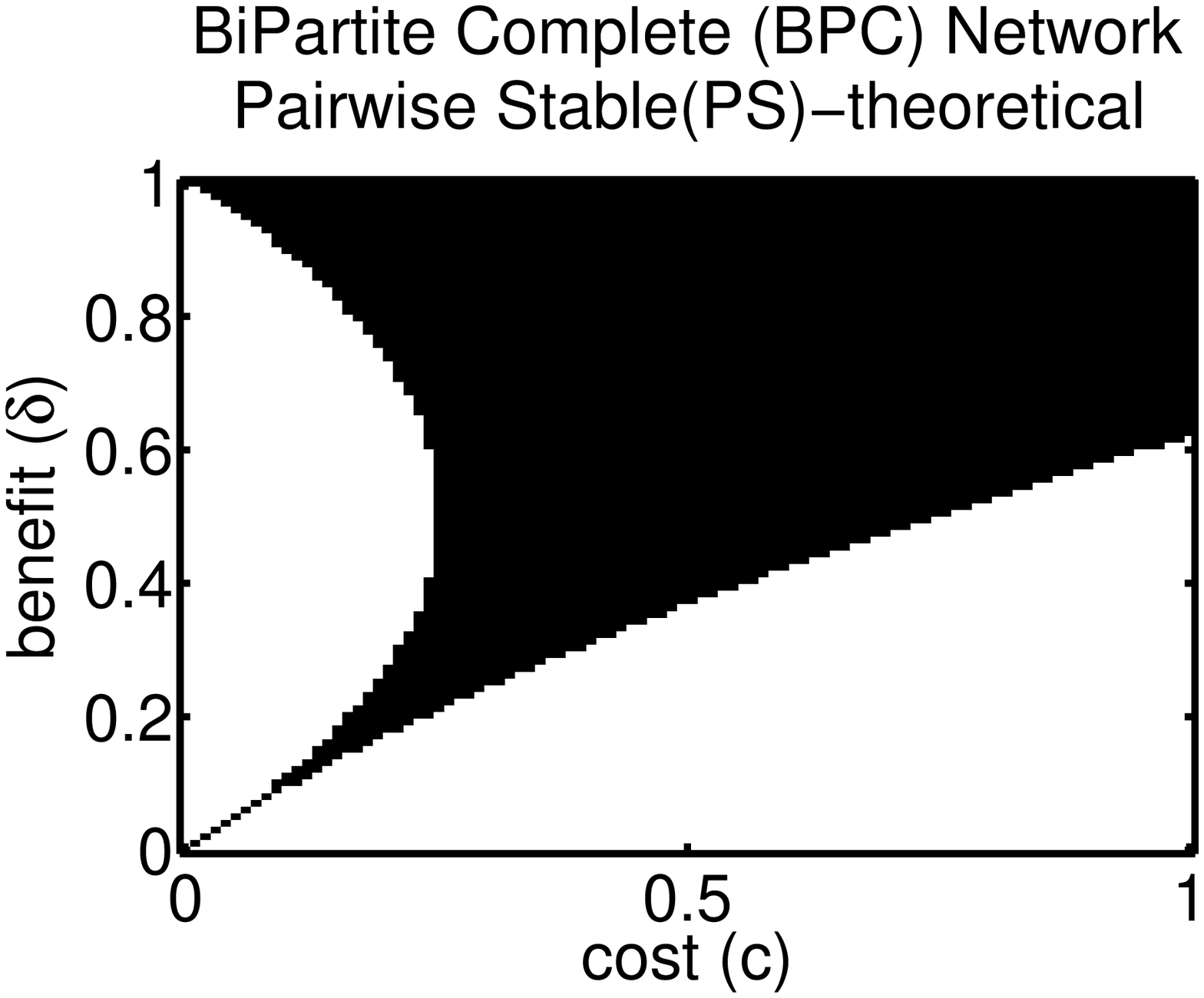, keepaspectratio}
\end{minipage}
&
\begin{minipage}{3.3cm}
\centering
\epsfig{height=3cm, width=3cm, angle=0.0,figure=./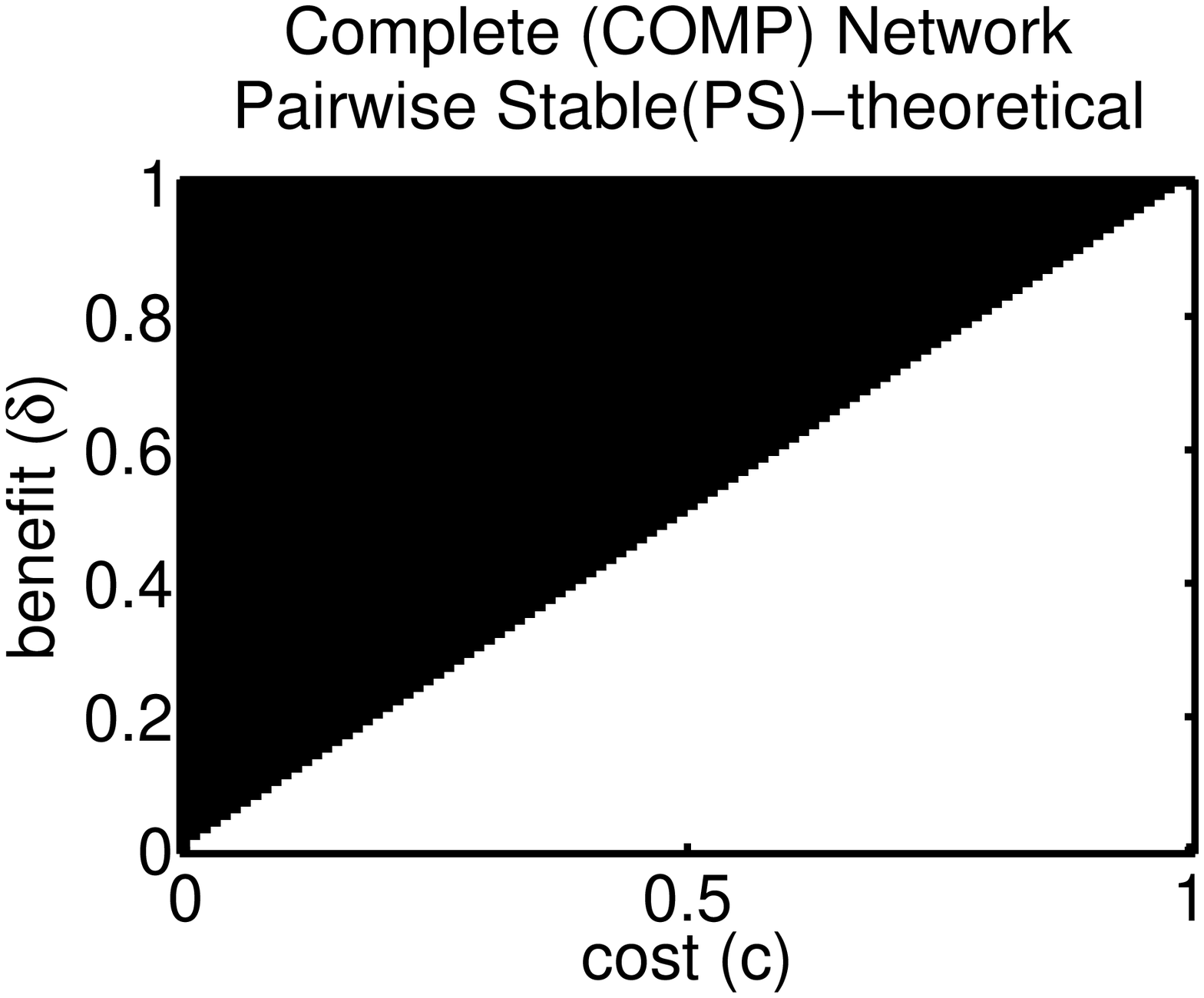}
\end{minipage}
&
\begin{minipage}{3.3cm}
\centering
\epsfig{height=3cm, width=3cm, angle=0.0,figure=./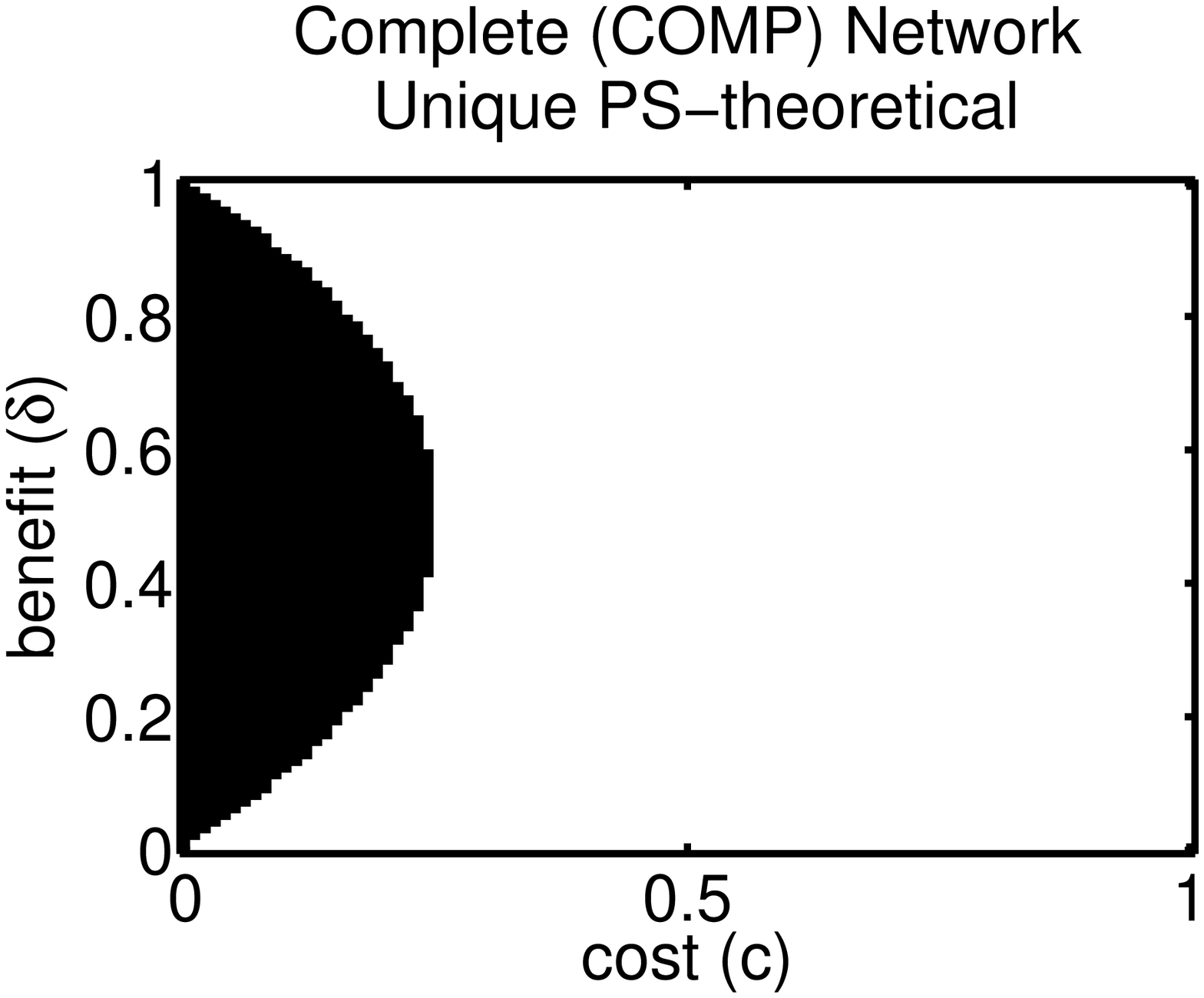}
\end{minipage}
& 
\begin{minipage}{3.3cm}
\centering
\epsfig{height=3cm, width=3.1cm, angle=0.0,figure=./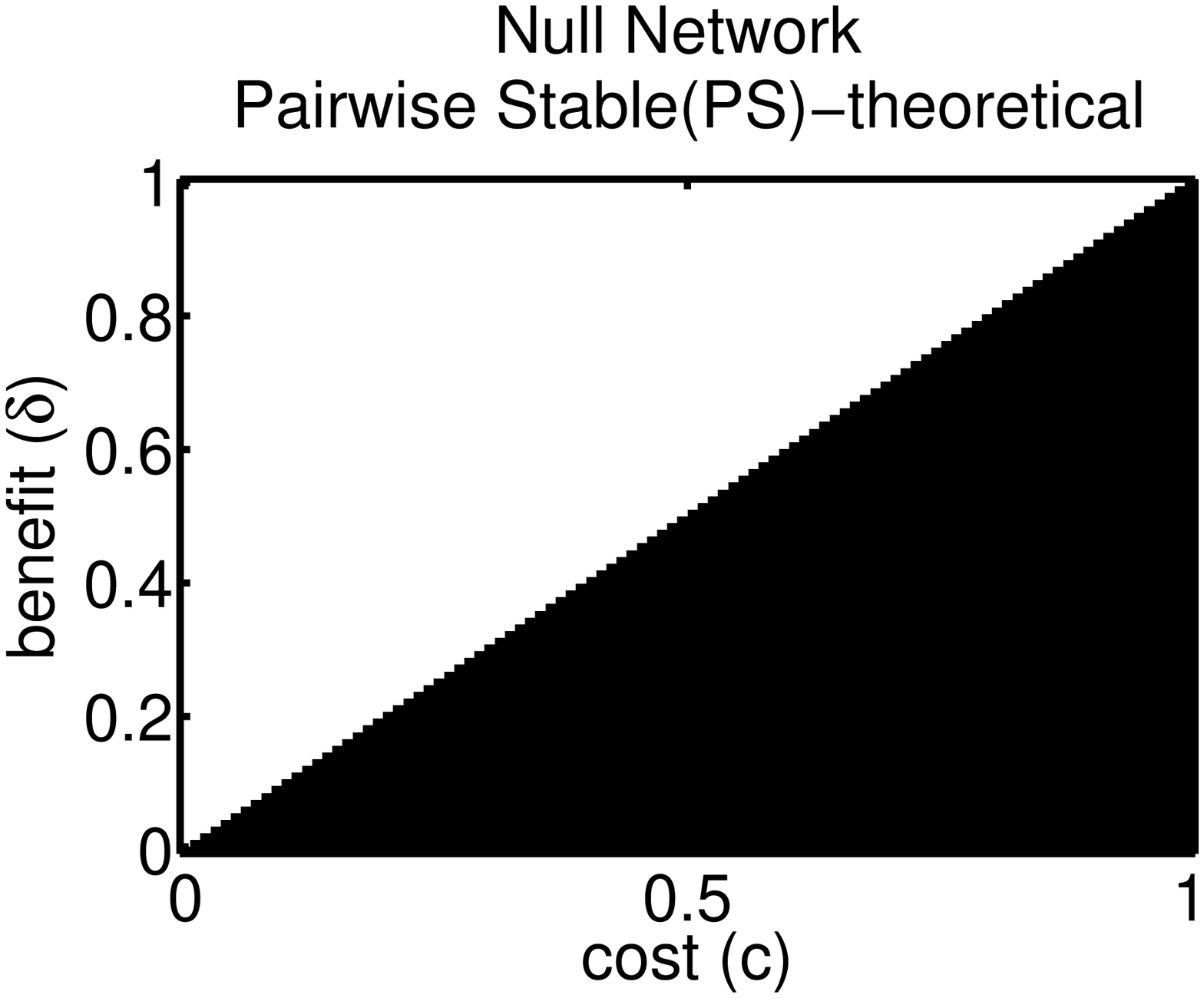}
\end{minipage}
\\
(a) & (b) & (c) & (d) \\
\begin{minipage}{3cm}
\centering
\epsfig{height=3cm, width=3cm, angle=0.0,figure=./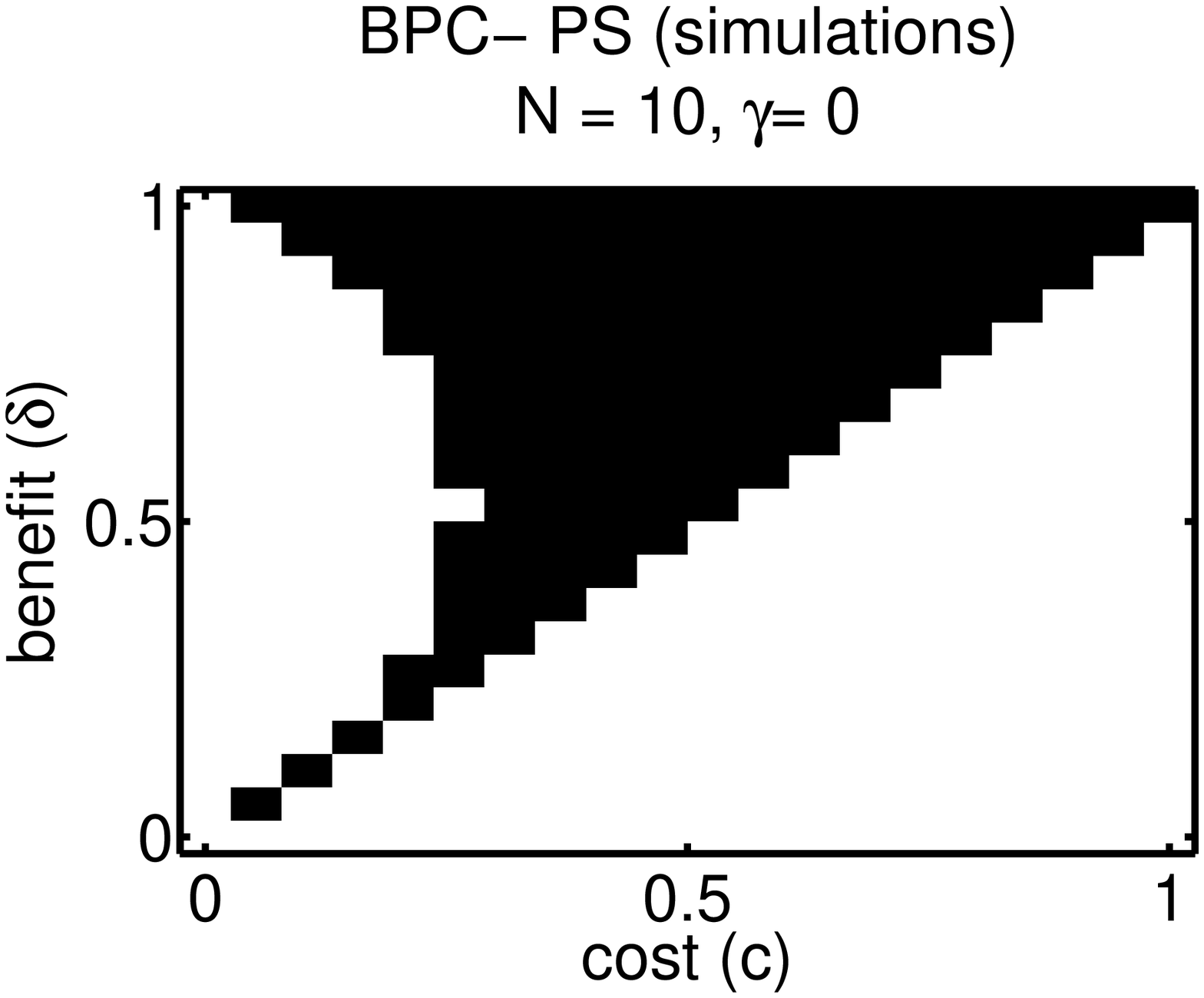}
\end{minipage}
&
\begin{minipage}{3 cm}
\centering
\epsfig{height=3cm, width=3cm, angle=0.0,figure=./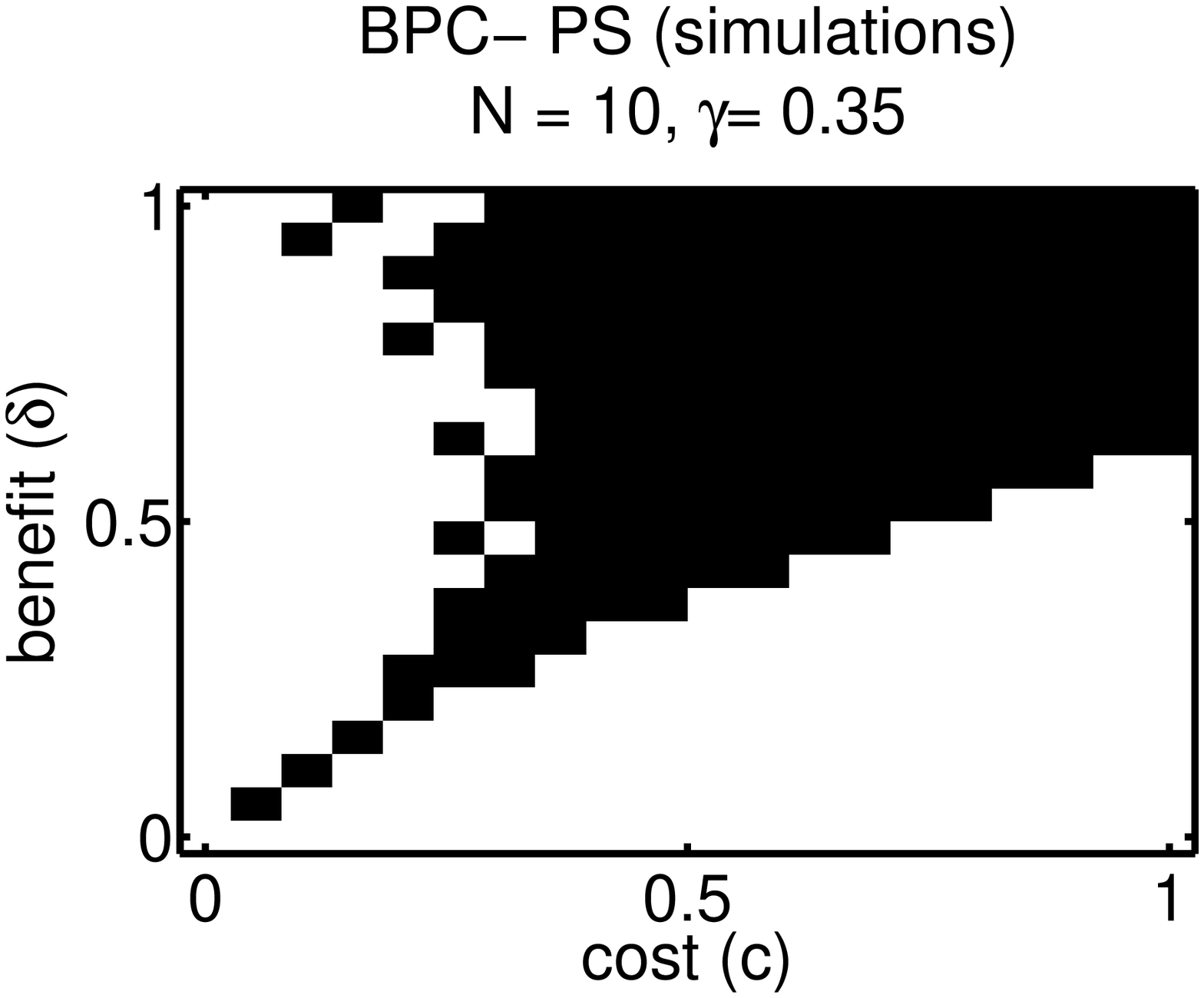}
\end{minipage}
&
\begin{minipage}{3 cm}

\centering
\epsfig{height=3cm, width=3cm, angle=0.0,figure=./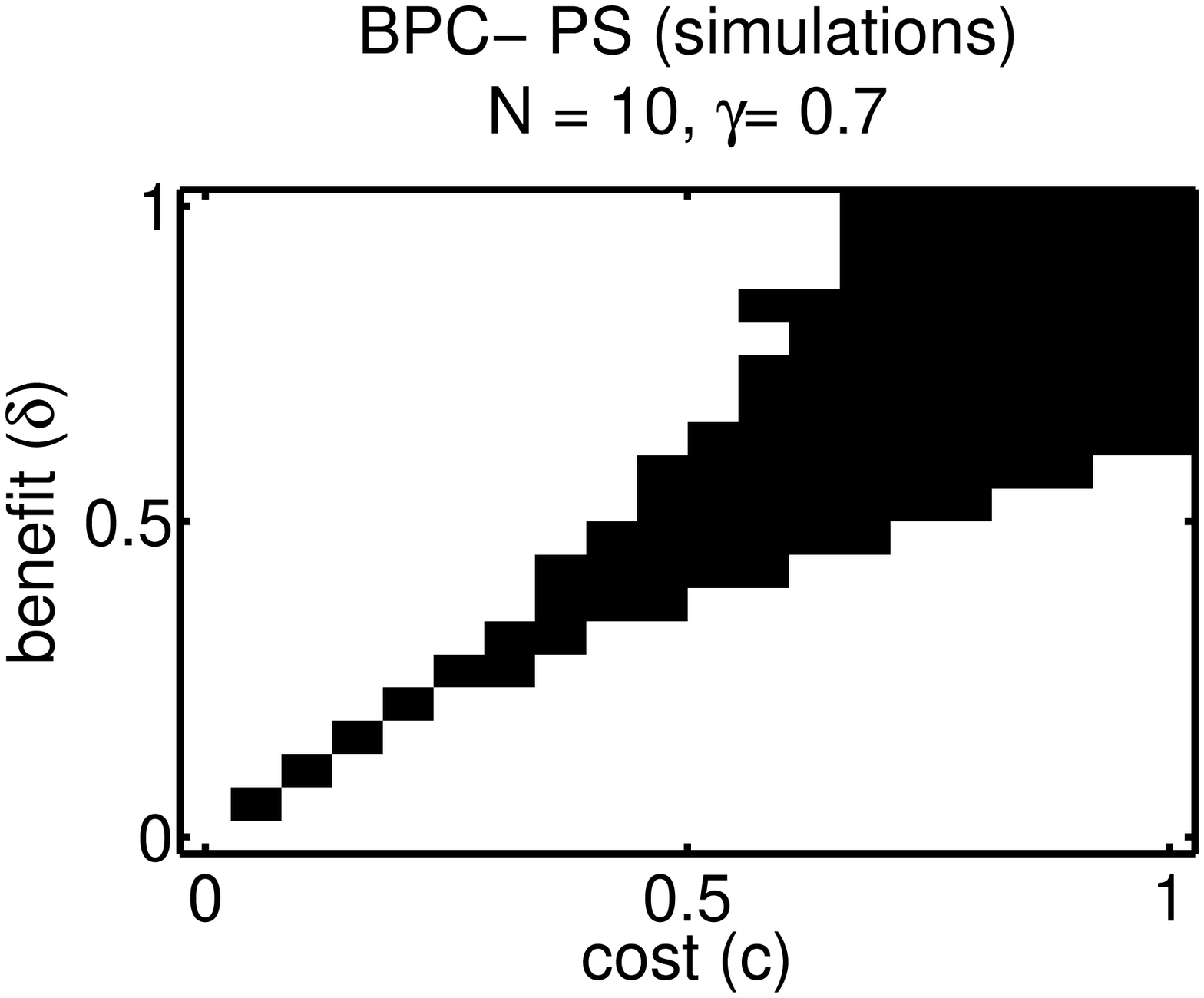}
\end{minipage}
& 
\begin{minipage}{3.5cm}

\centering
\epsfig{height=3cm, width=3.1cm, angle=0.0,figure=./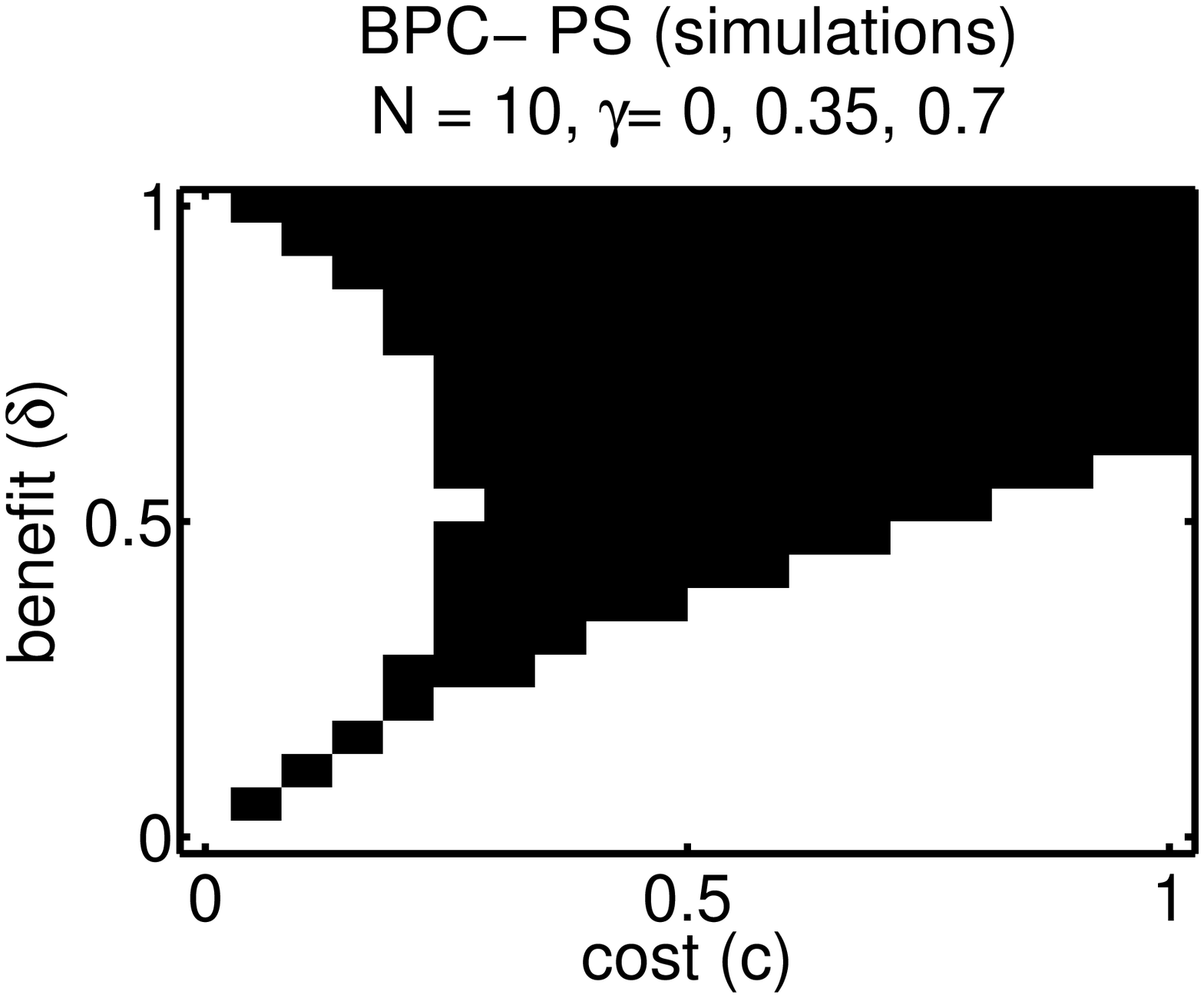}
\end{minipage}
\\
(e) & (f) & (g) & (h) \\
\begin{minipage}{3cm}

\centering
\epsfig{height=3cm, width=3cm, angle=0.0,figure=./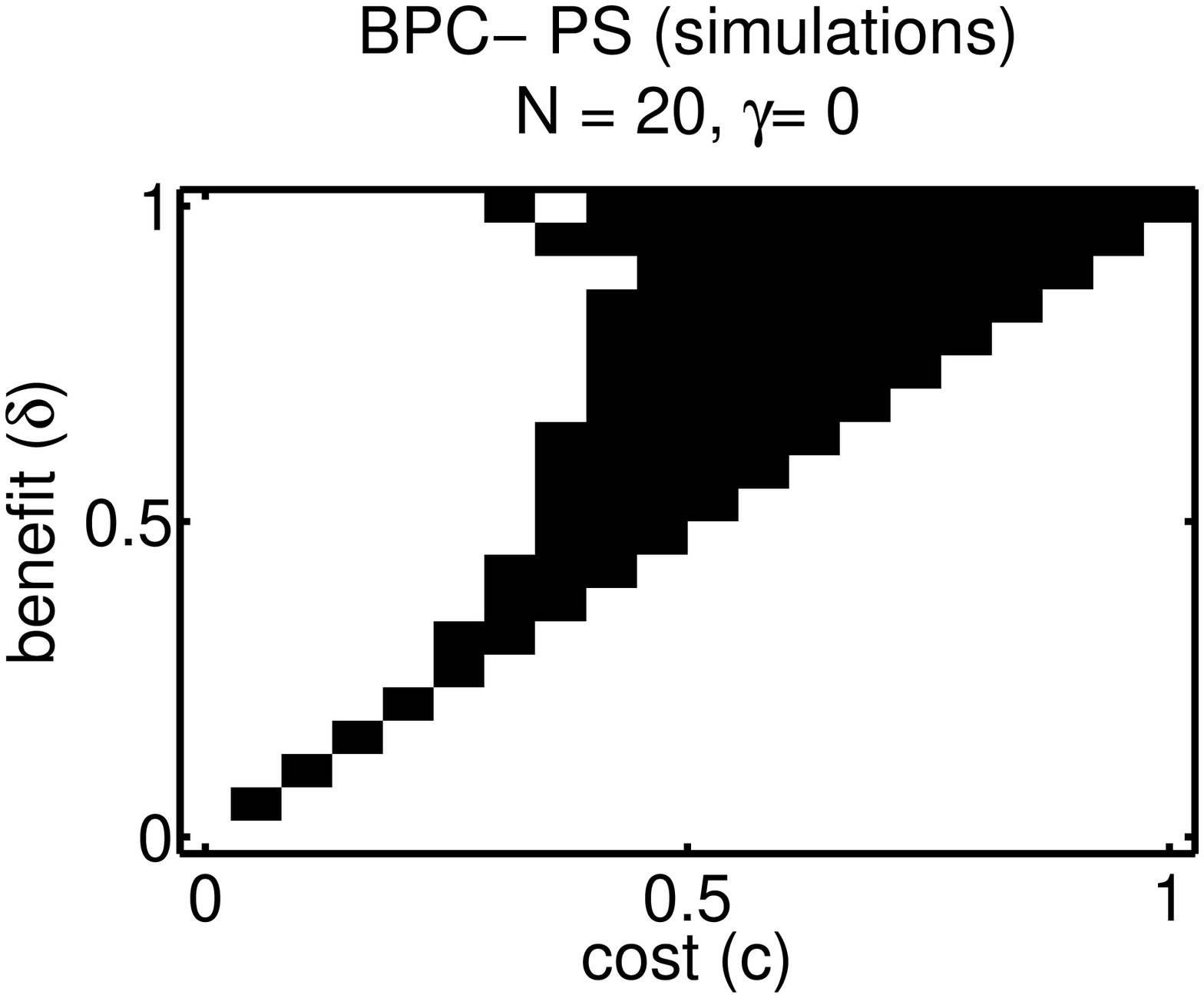}
\end{minipage}
&
\begin{minipage}{3 cm}

\centering
\epsfig{height=3cm, width=3cm, angle=0.0,figure=./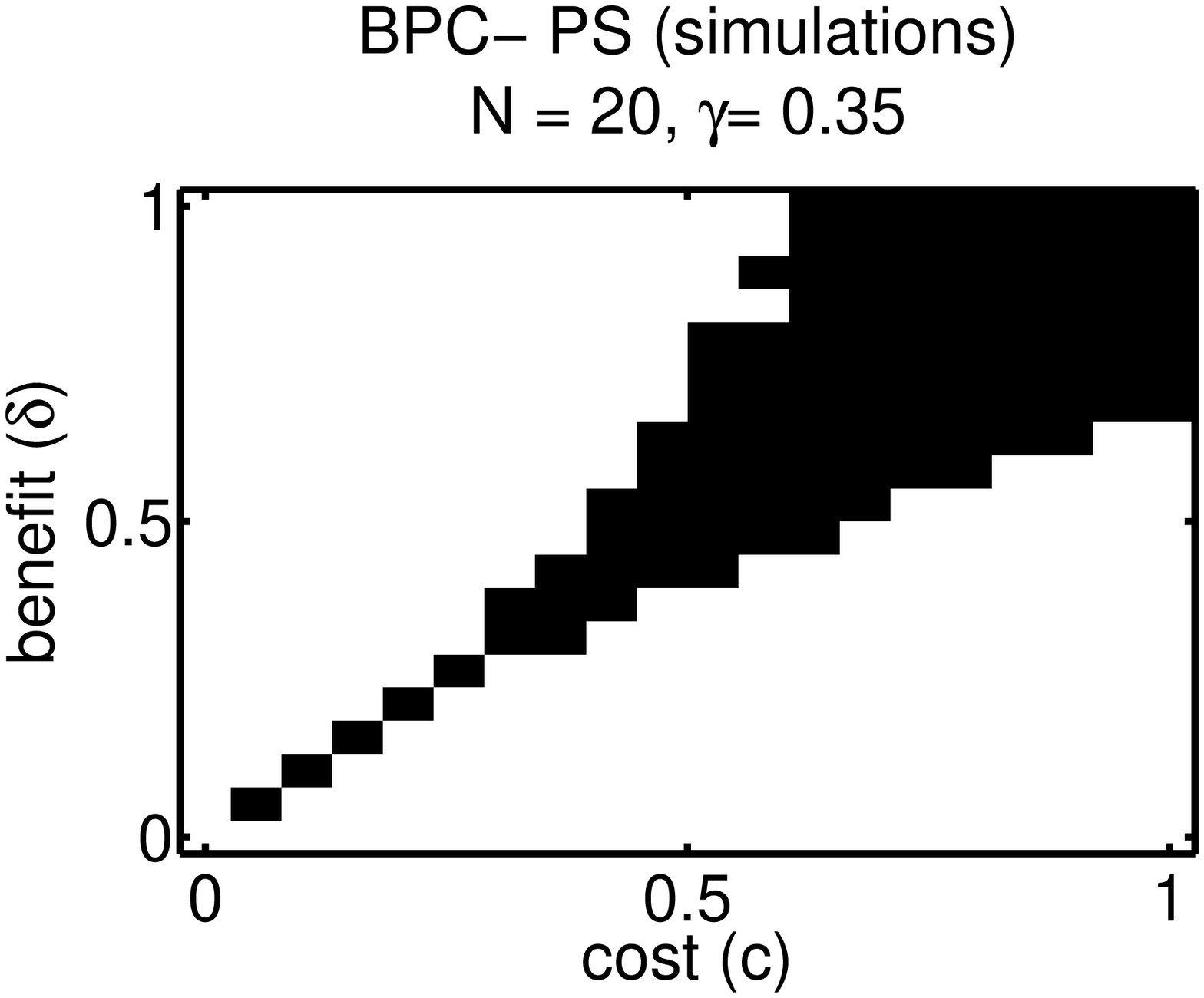}
\end{minipage}
&
\begin{minipage}{3 cm}

\centering
\epsfig{height=3cm, width=3cm, angle=0.0,figure=./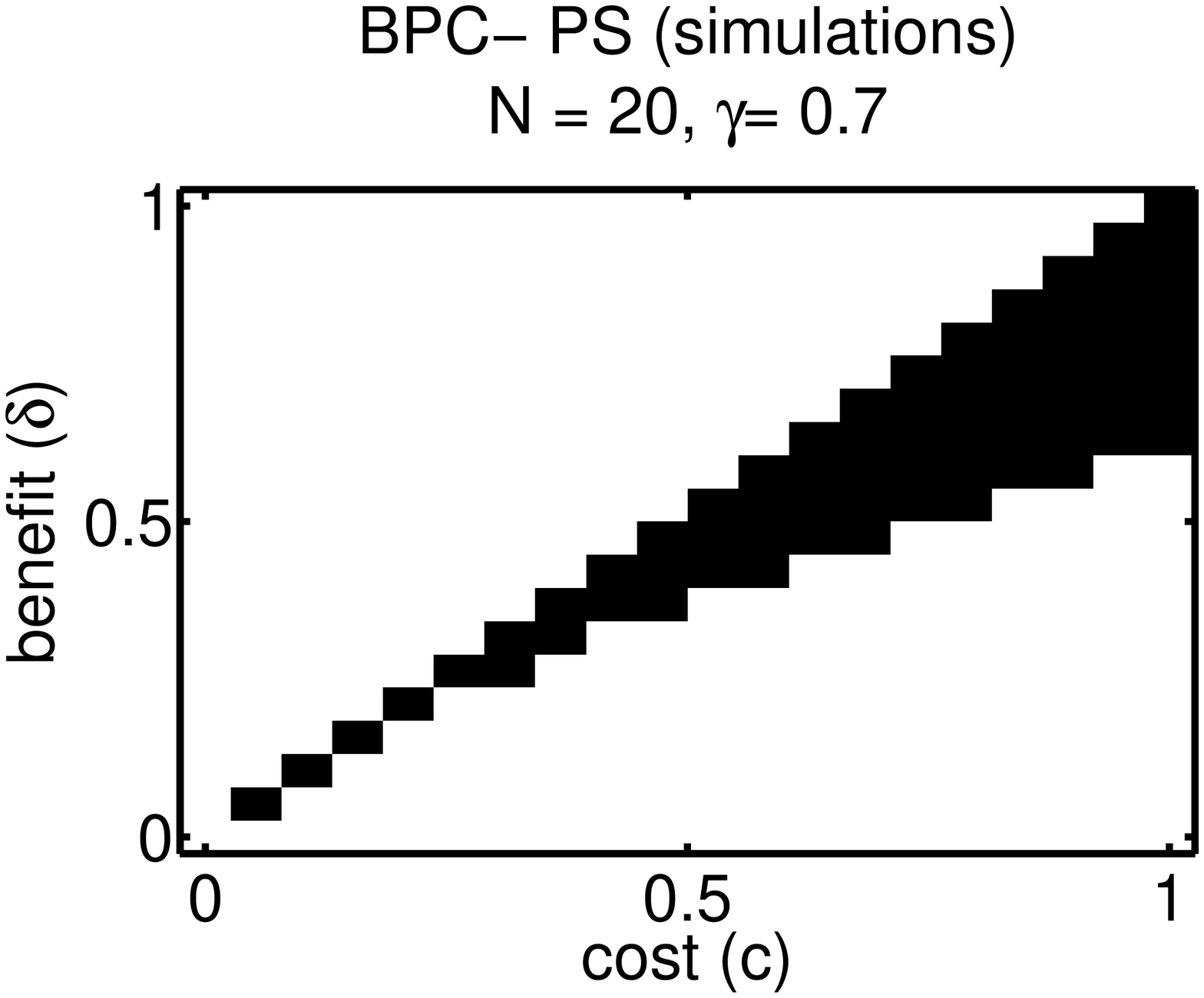}
\end{minipage}
& 
\begin{minipage}{3.5cm}

\centering
\epsfig{height=3cm, width=3.1cm, angle=0.0,figure=./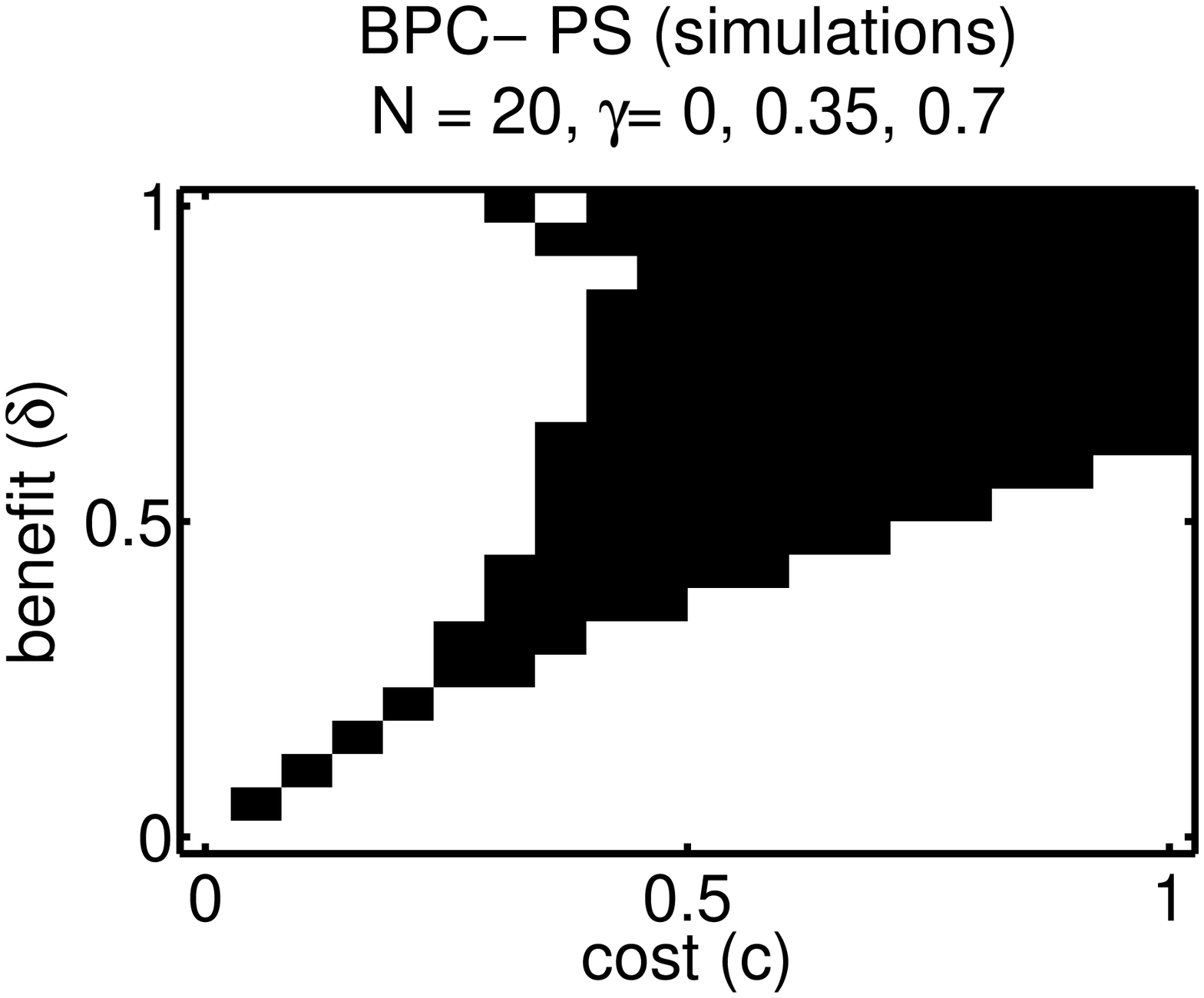}
\end{minipage}
\\
\small (i) & (j) & (k) & (l) \\ \normalsize 
\begin{minipage}{3cm}

\centering
\epsfig{height=3cm, width=3cm, angle=0.0,figure=./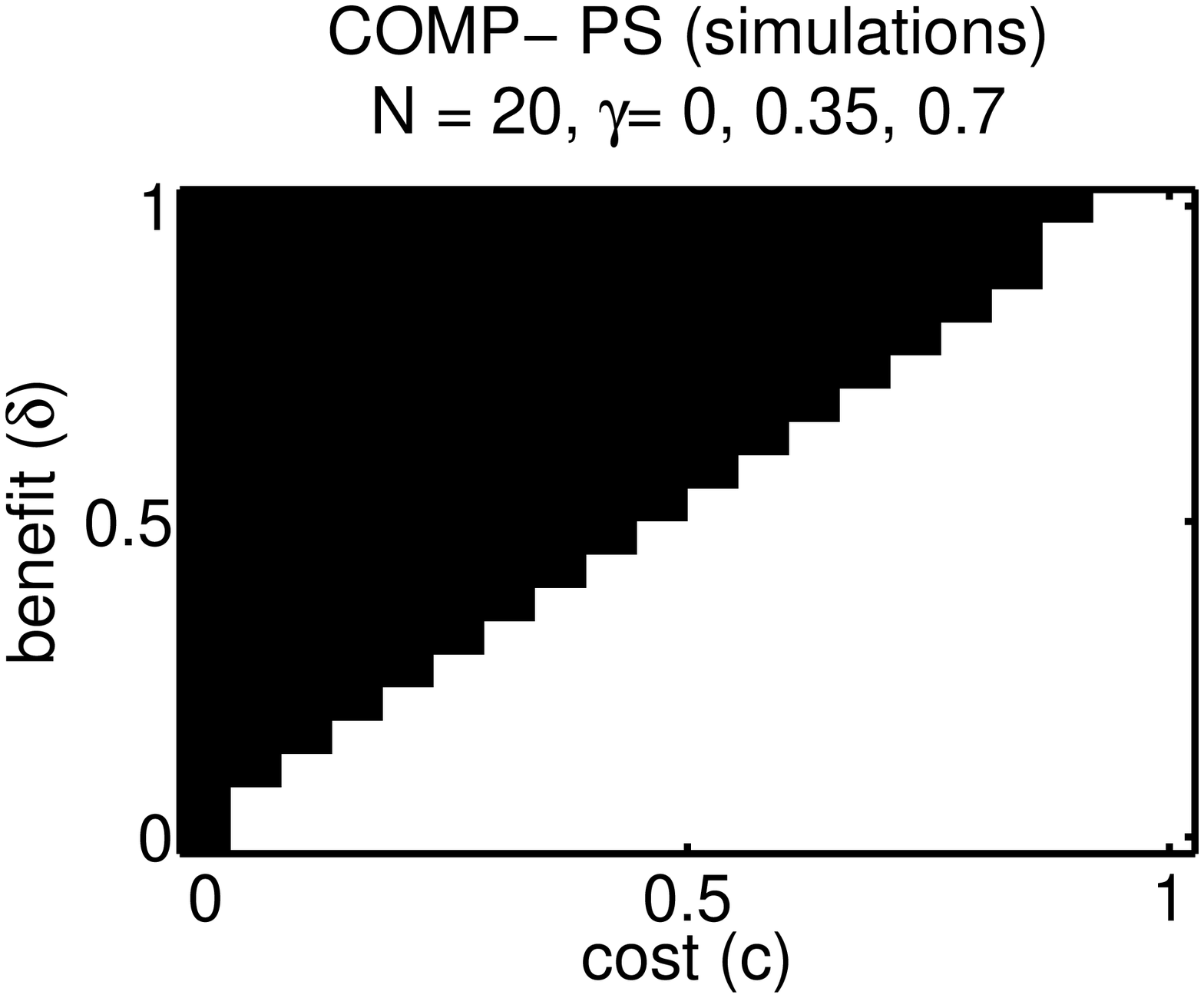}
\end{minipage}
&
\begin{minipage}{3 cm}

\centering
\epsfig{height=3cm, width=3cm, angle=0.0,figure=./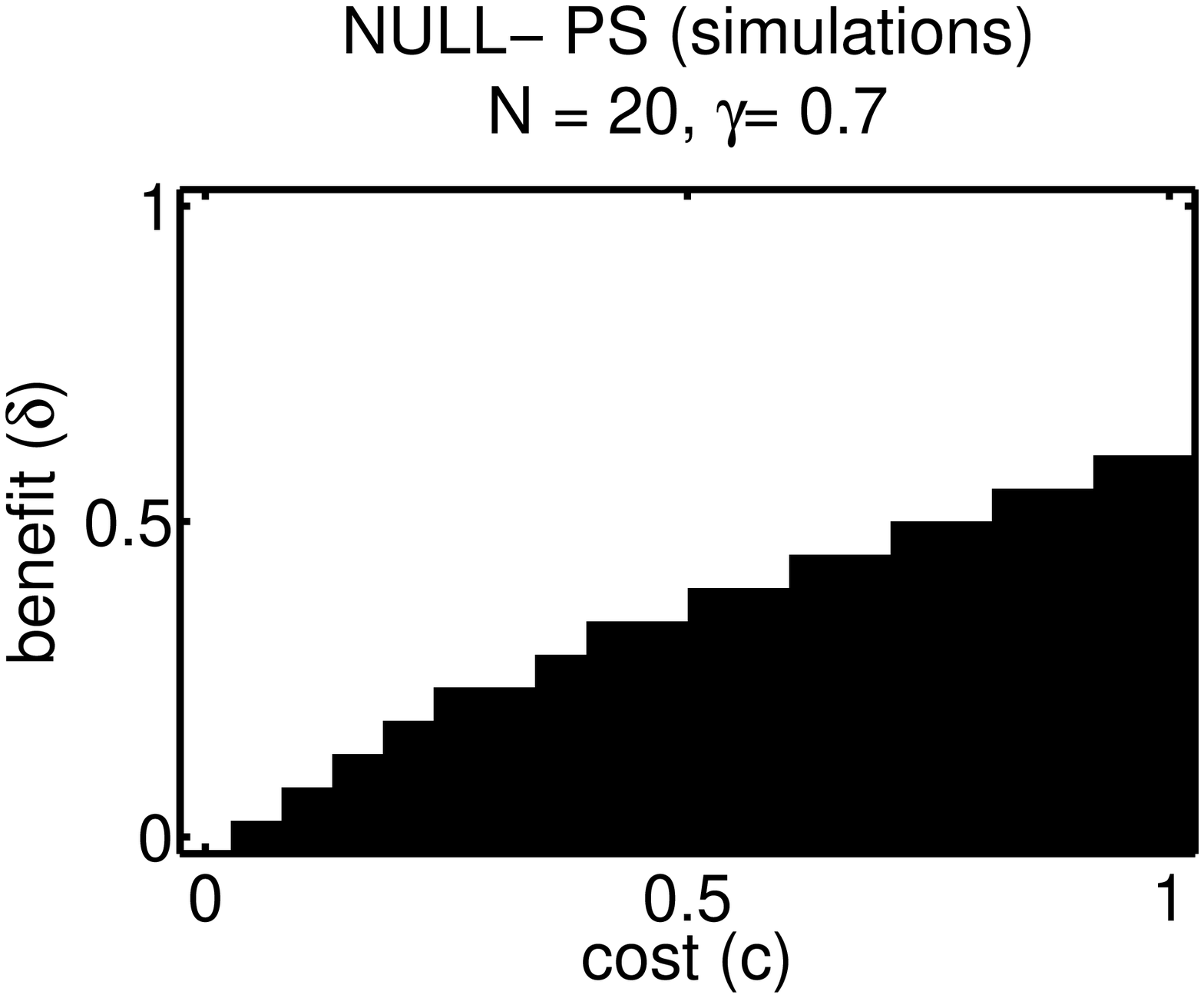}
\end{minipage}
&
\begin{minipage}{3 cm}

\centering
\epsfig{height=3cm, width=3cm, angle=0.0,figure=./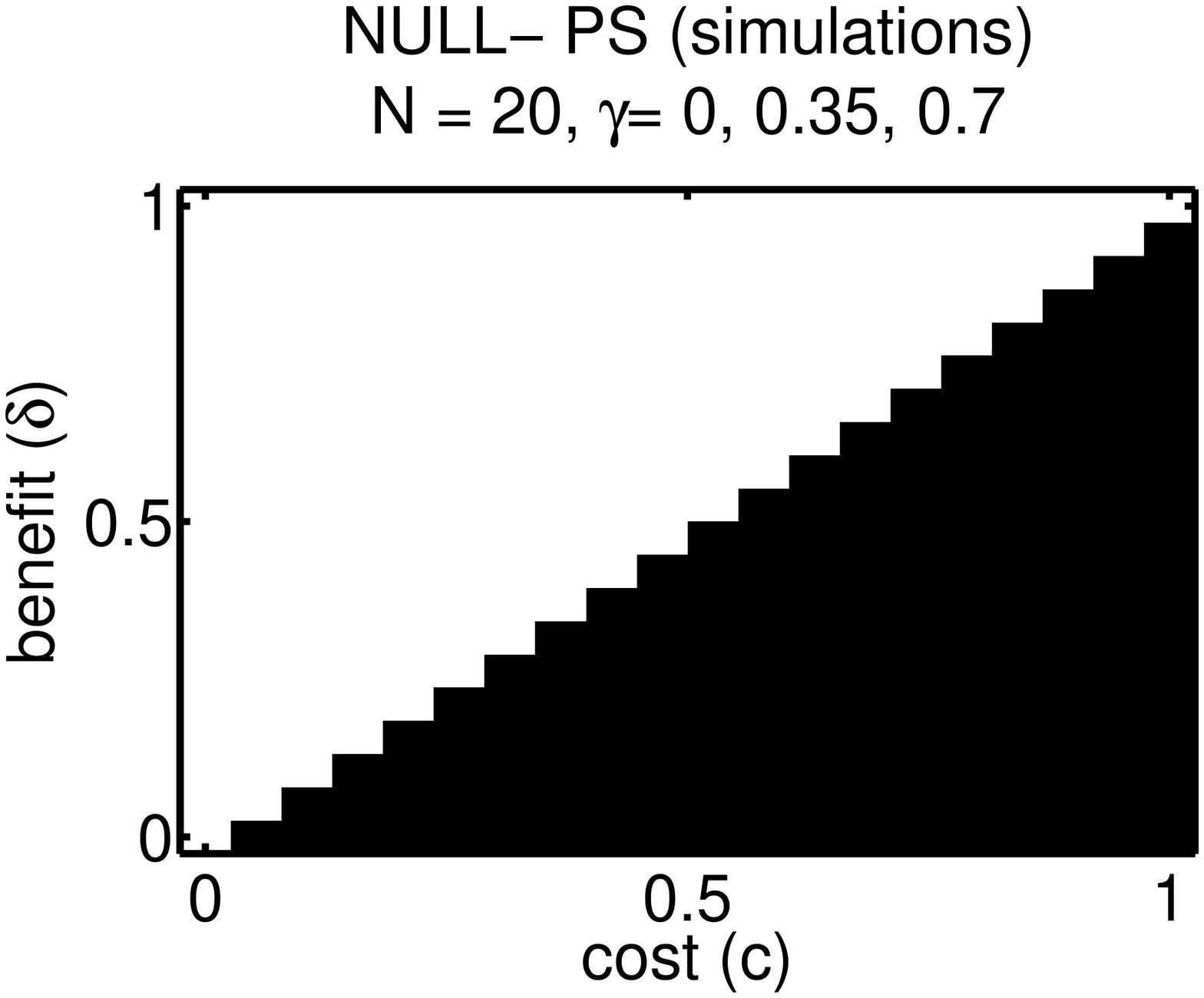}
\end{minipage}
& 
\begin{minipage}{3.5cm}

\centering
\epsfig{height=3cm, width=3.1cm, angle=0.0,figure=./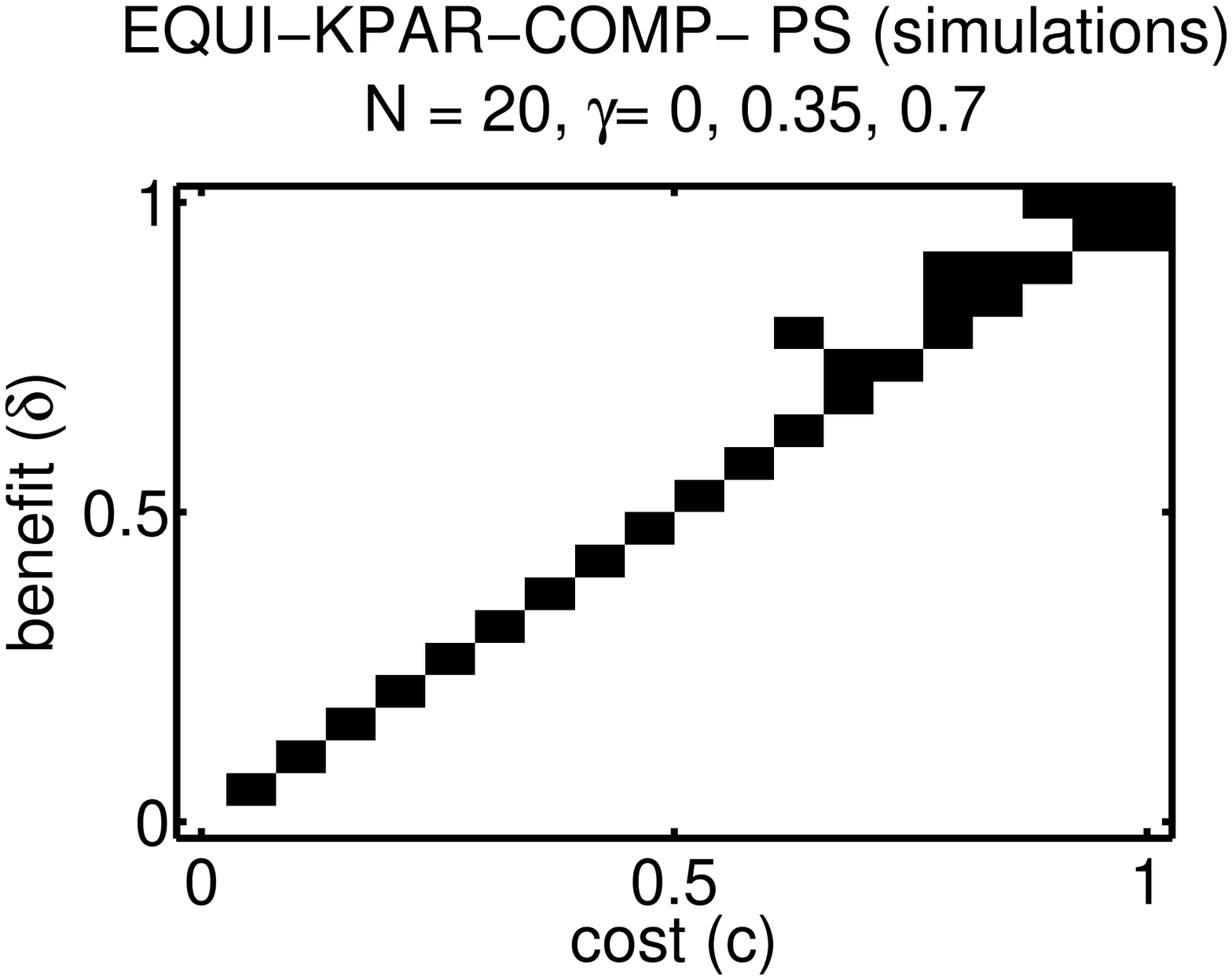}
\end{minipage}\\
(m) & (n) & (o) & (p) \\
\begin{minipage}{3cm}

\centering
\epsfig{height=3cm, width=3cm, angle=0.0,figure=./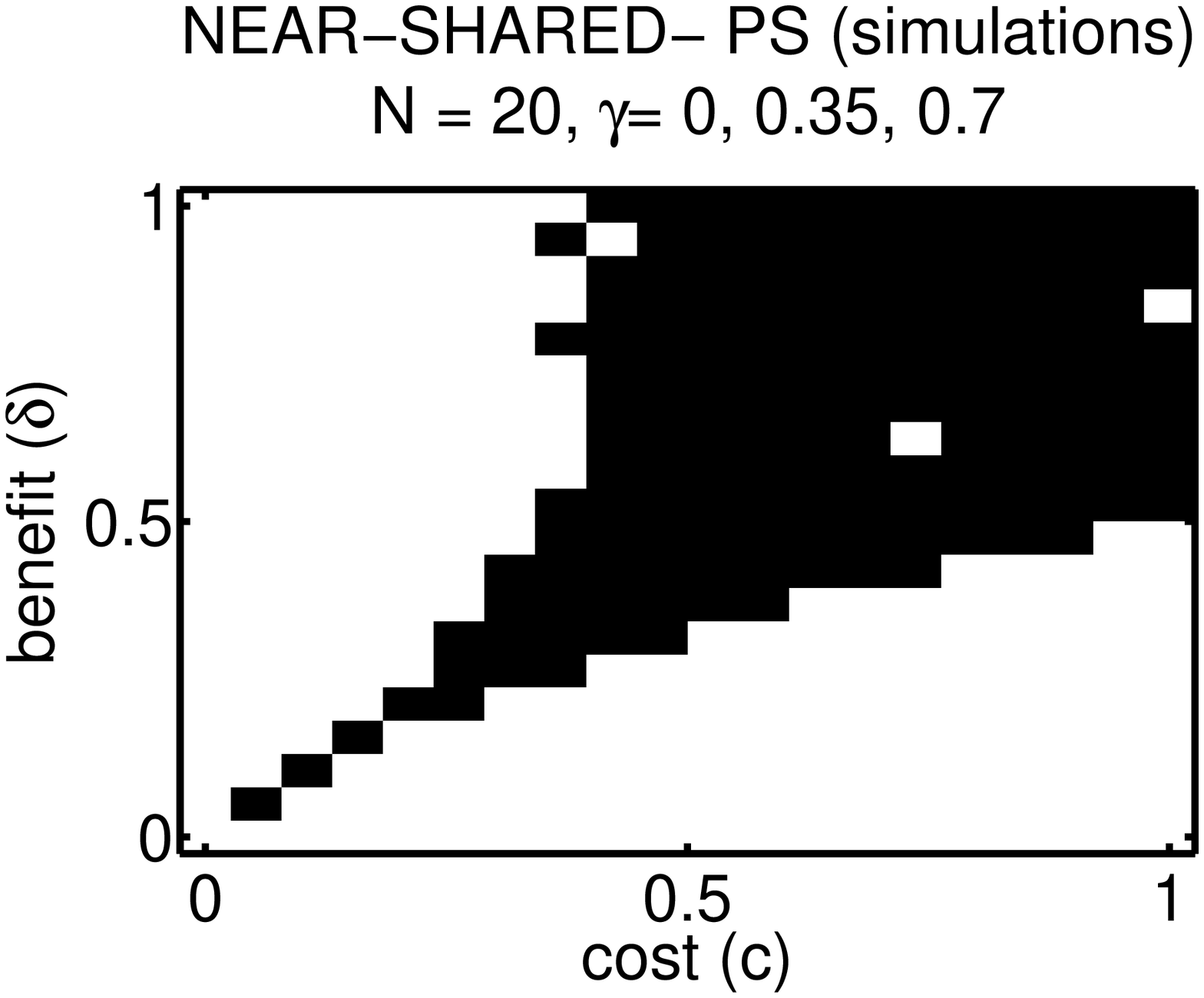}
\end{minipage}
&
\begin{minipage}{3 cm}

\centering
\epsfig{height=3cm, width=3cm, angle=0.0,figure=./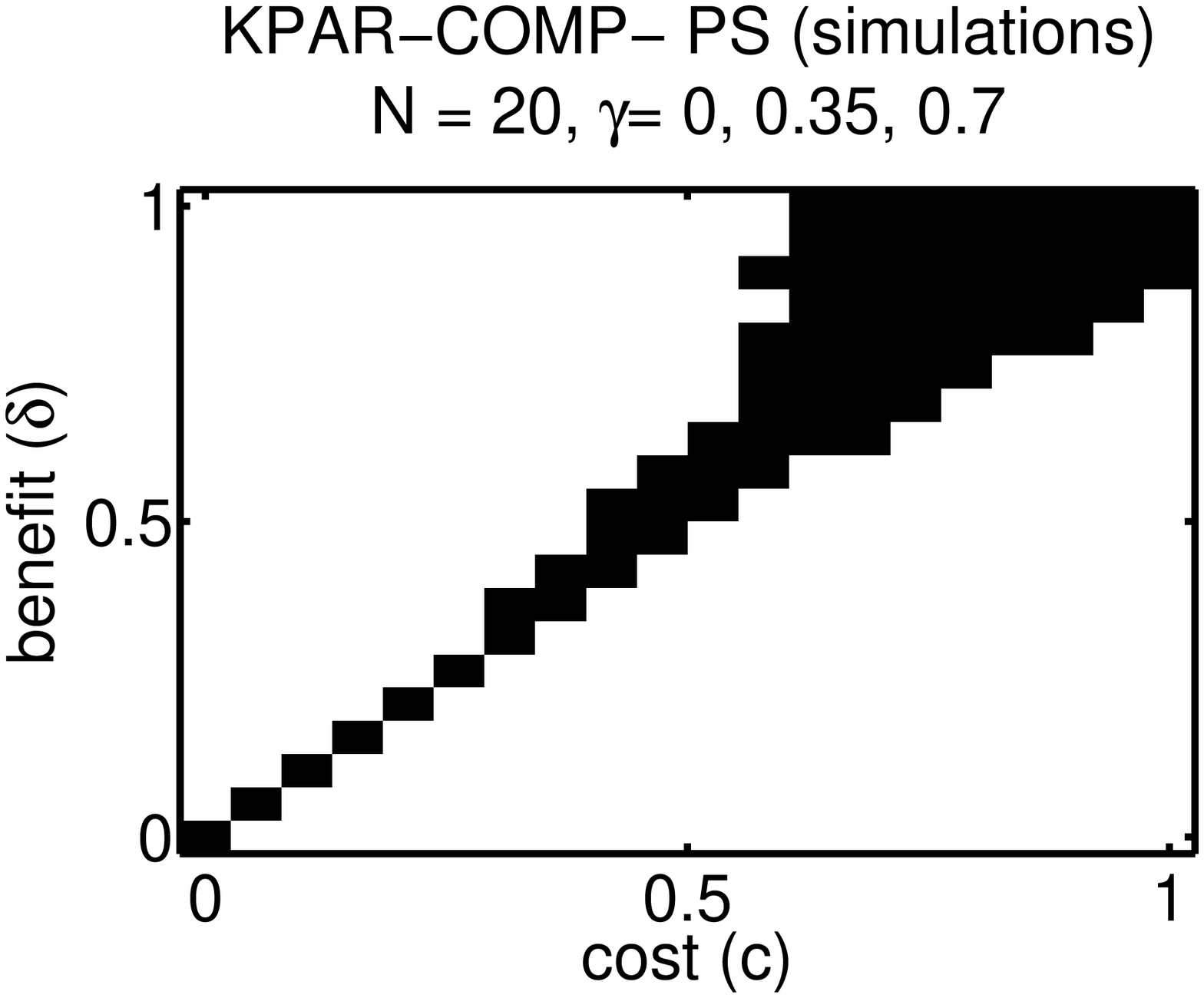}
\end{minipage}
&
\begin{minipage}{3 cm}

\centering
\end{minipage}
& 
\begin{minipage}{3.5cm}

\centering
\end{minipage}\\
(q) & (r) &  &  \\
\end{tabular}
\caption{Validation of theoretical results through simulations [$\text{Repetitions}=100$: for each ($\delta$, $c$) pair ]\label{fig:validation}}
\end{figure*}

We conducted our simulations for all combinations of $\delta$ and $c$ as explained before. Figure~\ref{fig:validation}(a)-Figure~\ref{fig:validation}(r) validate the analytical results derived in Table~\ref{summarytable2} . The vertical axis of each plot in Figure~\ref{fig:validation} is the benefit value ($\delta$), ranging from $0$ to $1$, and the horizontal axis represents the cost parameter ($c$), ranging from $0$ to $1$.
In general, given a particular value of $\delta$ and $c$, there may be multiple network structures that may be pairwise stable. The type of network structure emerging in the network formation process depends on a number of factors like the initial network, the scheduling order of the nodes along with the parameters of $\delta$ and $c$. Hence, we run each simulation run \textit{Num-Repetitions} times each time starting with random schedules and starting with different initial networks with the hope of getting all possible pairwise stable networks. In particular, we start with three different initial networks with densities $(0, 0.35, 0.7)$ respectively as shown in Table~\ref{tab:Simulation-parameters}.

We plot the pairwise stable regions for different networks namely bipartite complete network, null network, complete network, etc and compare with the theoretical predictions. Figure~\ref{fig:validation}(a)-(d) show theoretical results and Figure~\ref{fig:validation}(e)-(r) show the results from the simulations.

Figure~\ref{fig:validation}(e)  shows the regions where the Bipartite Complete (BPC) network emerged as one of the pairwise stable network when the simulation run was started with number of nodes ($N=10$) and initial network with density($\gamma=0$). Clearly, we can see that BPC does not emerge as pairwise stable in the regions where $\delta<c$ as the null network (which coincides with the initial network) is also pairwise stable and the nodes prefer not to add any links to the initial network. However, Figure~\ref{fig:validation}(f) and Figure~\ref{fig:validation}(g) show that if the starting network is already having some existing links then nodes try to form BPC network even in the regions where $\delta < c$. This shows the importance of the initial network in the network formation process.
Figure~\ref{fig:validation}(h) is obtained by merging all the regions of Figure~\ref{fig:validation}(e)-(g) and this closely corresponds to the theoretical predictions of BPC stability shown in Figure~\ref{fig:validation}(a). Figure~\ref{fig:validation}(i)-(l) similarly show results for $N=20$. In this case, however, we observe that Figure~\ref{fig:validation}(l) is not as close to Figure~\ref{fig:validation}(a) which is due to the fact that there may be many more pairwise stable topologies that may emerge as the number of nodes increase which illustrates a fundamental difficulty in characterizing \textit{all} pairwise stable networks for \textit{every} possible value of number of nodes ($N$). 

Another observation is that the complete network is theoretically proven to be the unique pairwise stable network in the region shown in Figure~\ref{fig:validation}(c). We can clearly see the simulation results in Figure~\ref{fig:validation}(h) and Figure~\ref{fig:validation}(l) that this region is clearly excluded from the BPC stable region as starting with any initial network, only the complete graph emerges as unique the pairwise stable network in the region specified by Figure~\ref{fig:validation}(c).

We similarly show the stability regions for complete and null networks in Figure~\ref{fig:validation}(m) and Figure~\ref{fig:validation}(o) respectively which corresponds to the theoretical predictions of Figure~\ref{fig:validation}(b) and Figure~\ref{fig:validation}(d) respectively. As explained earlier, Figure~\ref{fig:validation}(n) again illustrates the importance of initial network in making the null network as the pairwise stable network.

As shown in Proposition~\ref{kpartite-result}, the equi-kpartite network is stable when $\delta=c$ and Figure~\ref{fig:validation}(p) shows that indeed in this region, the equi-kpartite network does emerge as the pairwise stable network when $N=20$. Proposition~\ref{kpartite-result} was only a sufficient condition, we observe from the figure that there are other regions of $\delta$ and $c$ (which we have not analytically characterized) at which equi-kpartite network emerges as the pairwise stable network.

As explained earlier, our characterization of pairwise stable network structures as shown in Table~\ref{summarytable2} is not exhaustive and  hence, we used simulations to depict the region of stability for important types of network structures namely the near-shared network and k-partite complete network. We show the results in Figure~\ref{fig:validation}(q) and Figure~\ref{fig:validation}(r).

\subsection{Emergent Network Topologies During Simulations}
\begin{figure*}[htb!]
\begin{tabular}{cc}
\begin{minipage}{8cm}
\vspace{0.2in}
\centering
\epsfig{height=8cm, width=8cm, angle=0.0,figure=./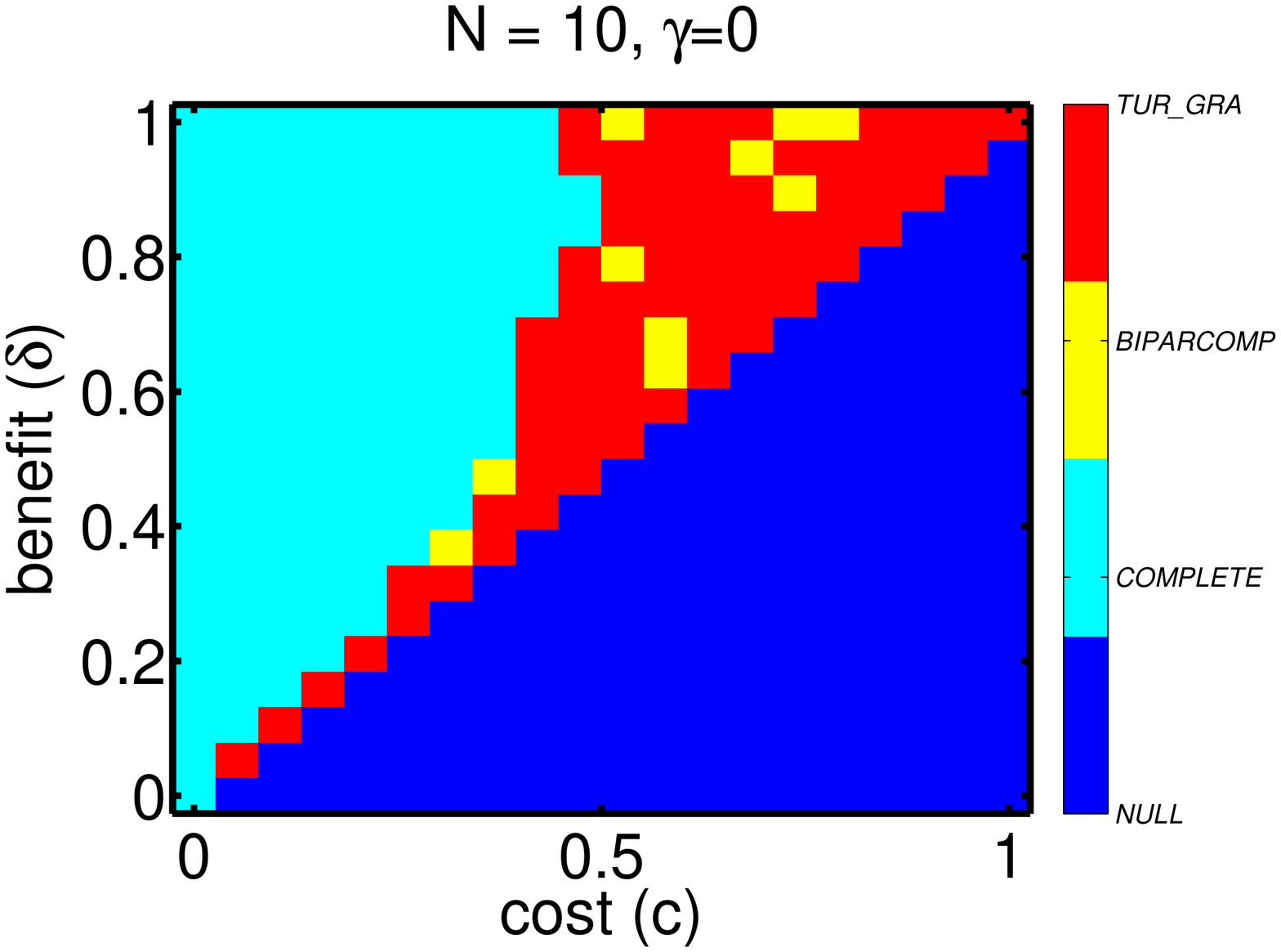, keepaspectratio}
\end{minipage}
&
\begin{minipage}{8cm}
\vspace{0.2in}
\centering
\epsfig{height=8cm, width=8cm, angle=0.0,figure=./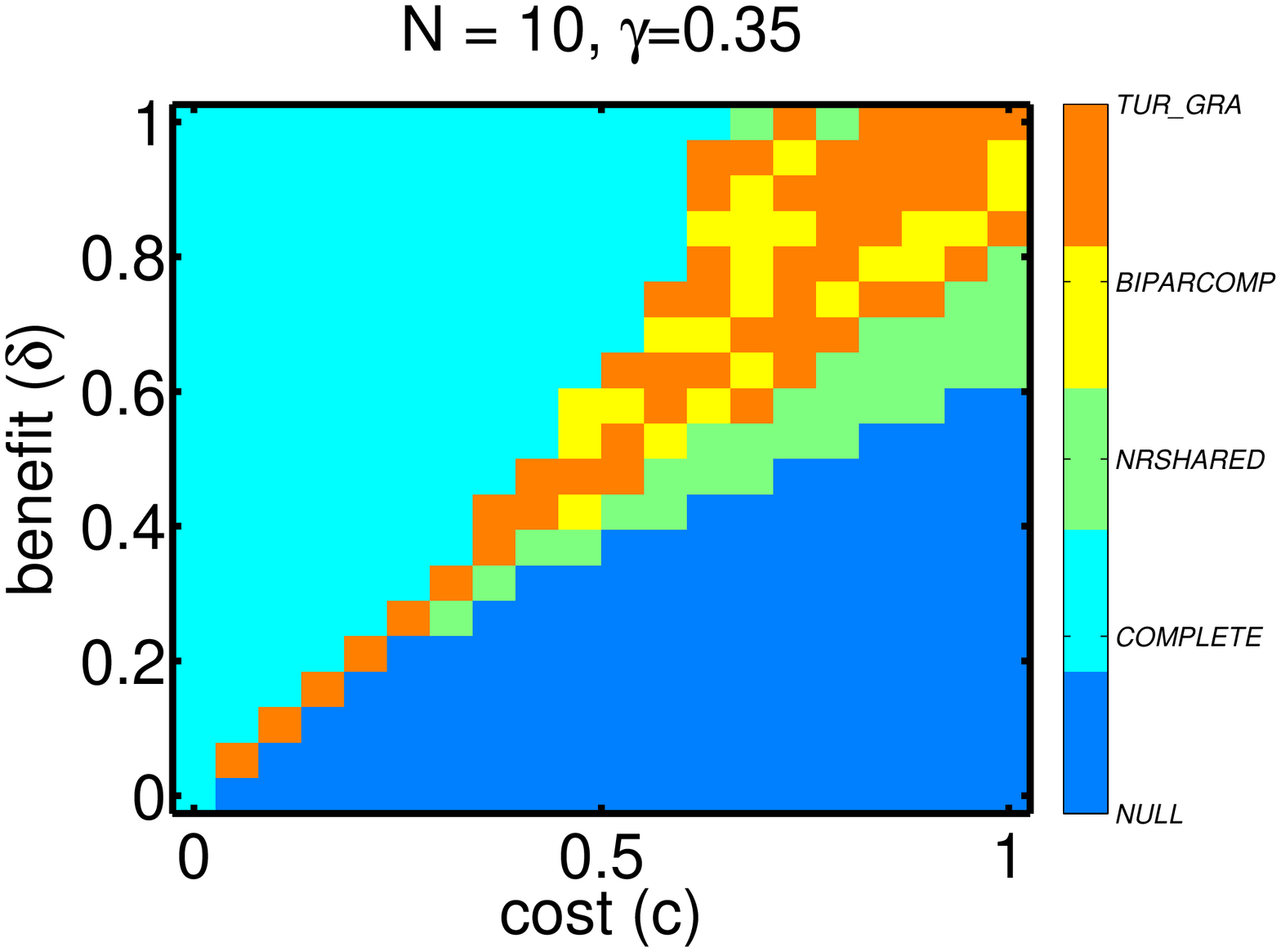, keepaspectratio}
\end{minipage}
\\
\begin{minipage}{8 cm}
\vspace{0.2in}
\centering
\epsfig{height=8cm, width=8cm, angle=0.0,figure=./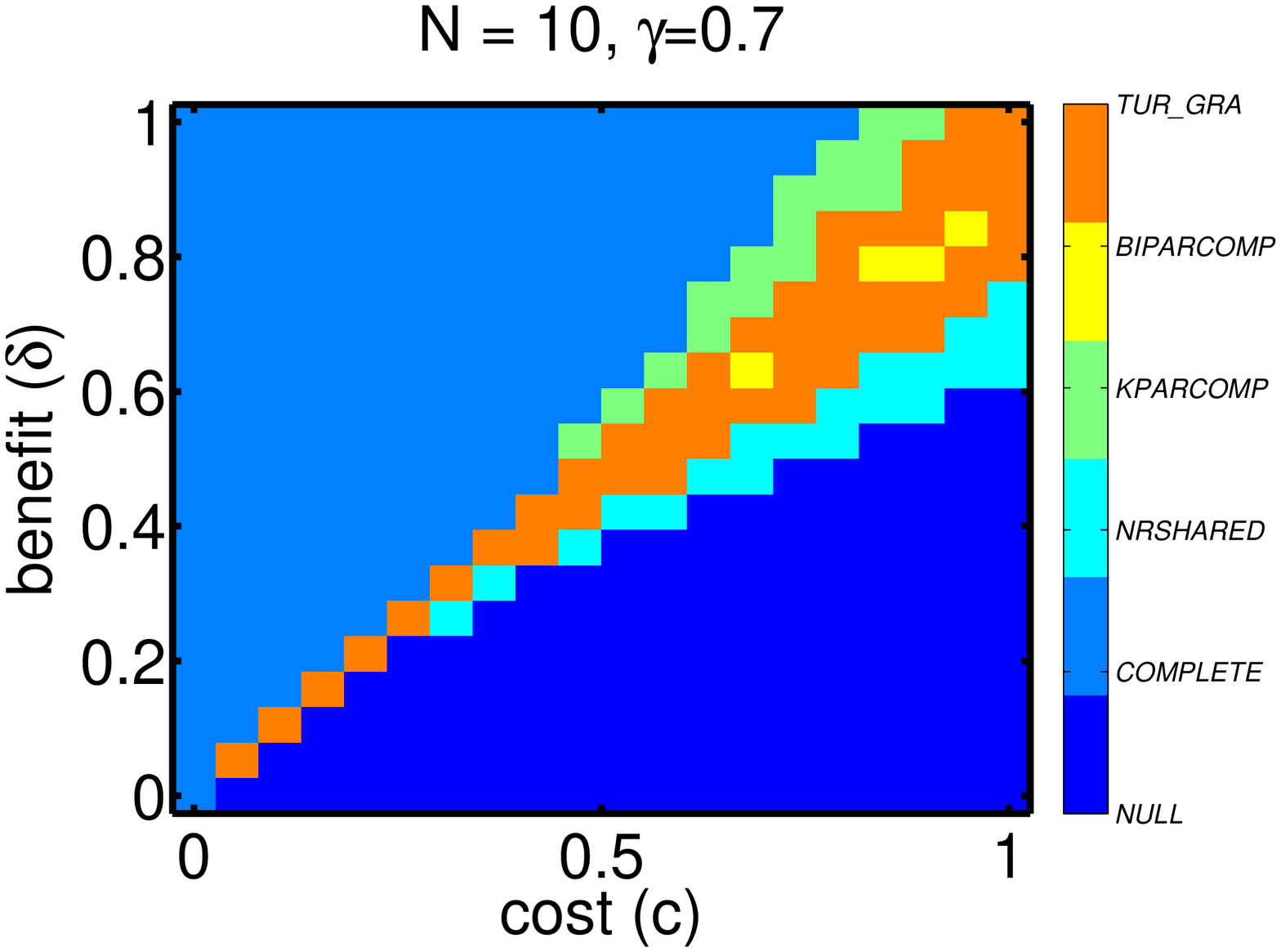, keepaspectratio}
\end{minipage}
&
\begin{minipage}{8cm}
\vspace{0.2in}
\centering
\epsfig{height=8cm, width=8cm, angle=0.0,figure=./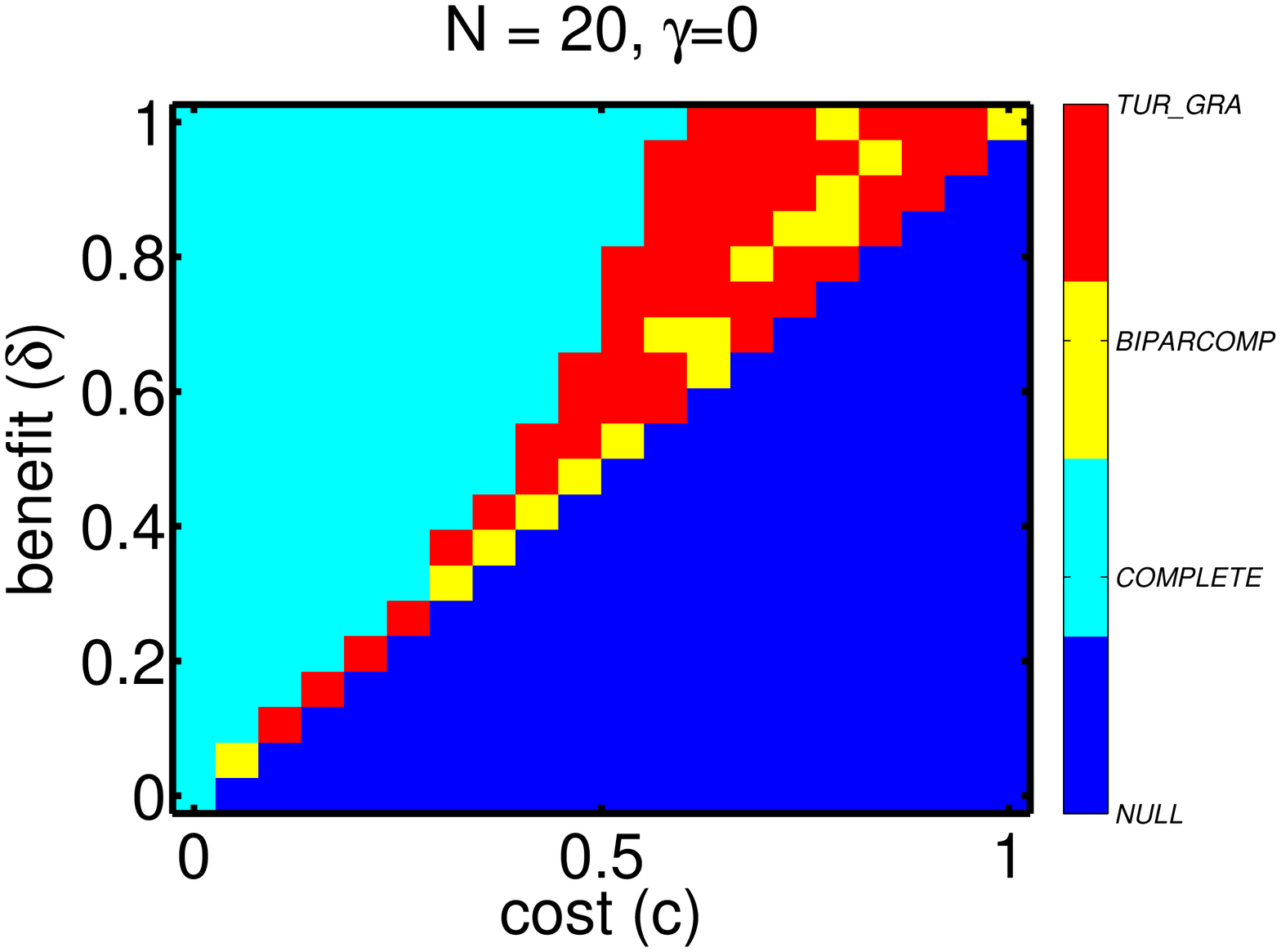, keepaspectratio}
\end{minipage}
\\
\end{tabular}
\\
\begin{tabular}{cc}
\begin{minipage}{8cm}
\vspace{0.2in}
\centering
\epsfig{height=8cm, width=8cm, angle=0.0,figure=./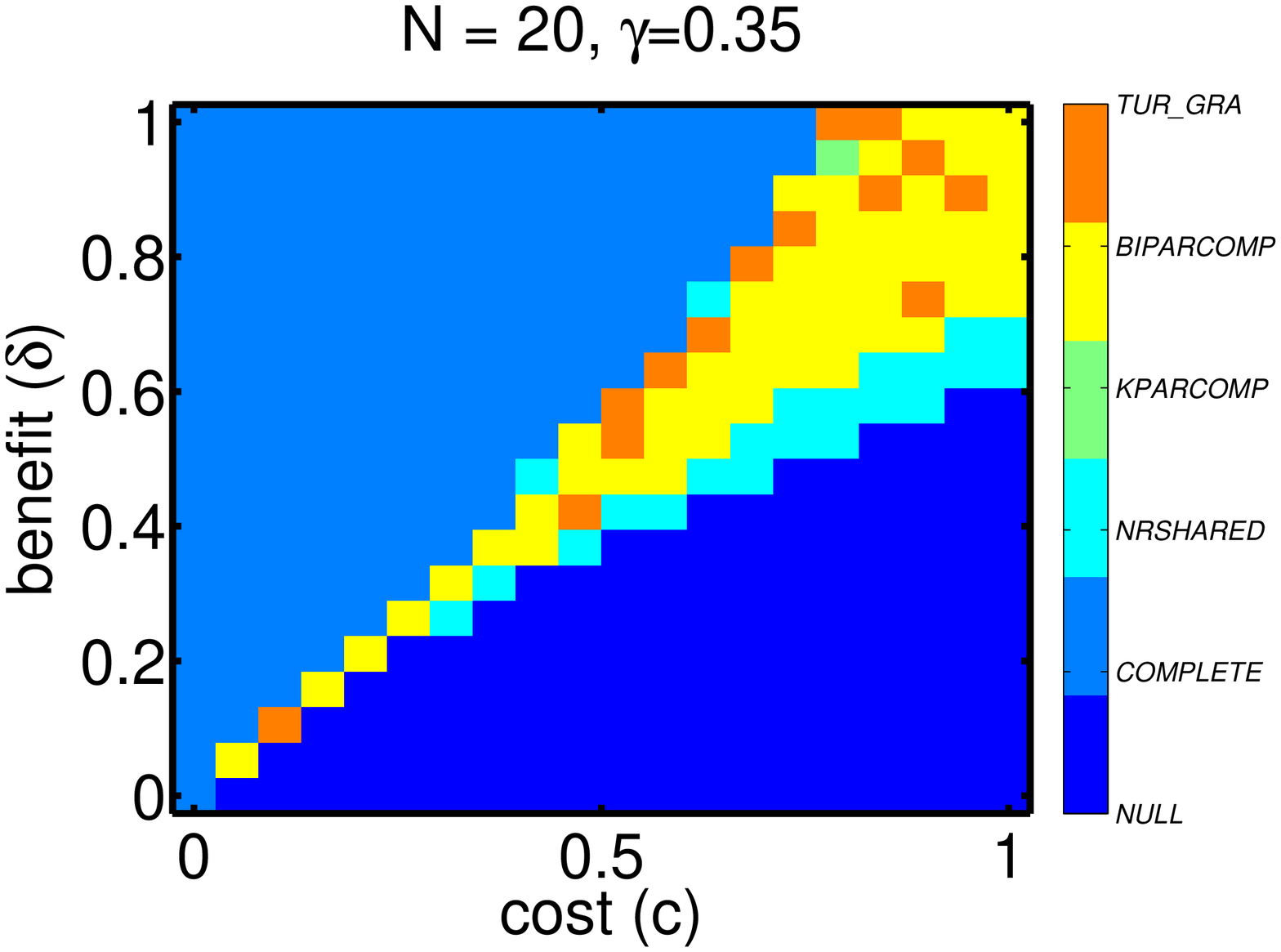, keepaspectratio}
\end{minipage}
&
\begin{minipage}{8 cm}
\vspace{0.2in}
\centering
\epsfig{height=8cm, width=8cm, angle=0.0,figure=./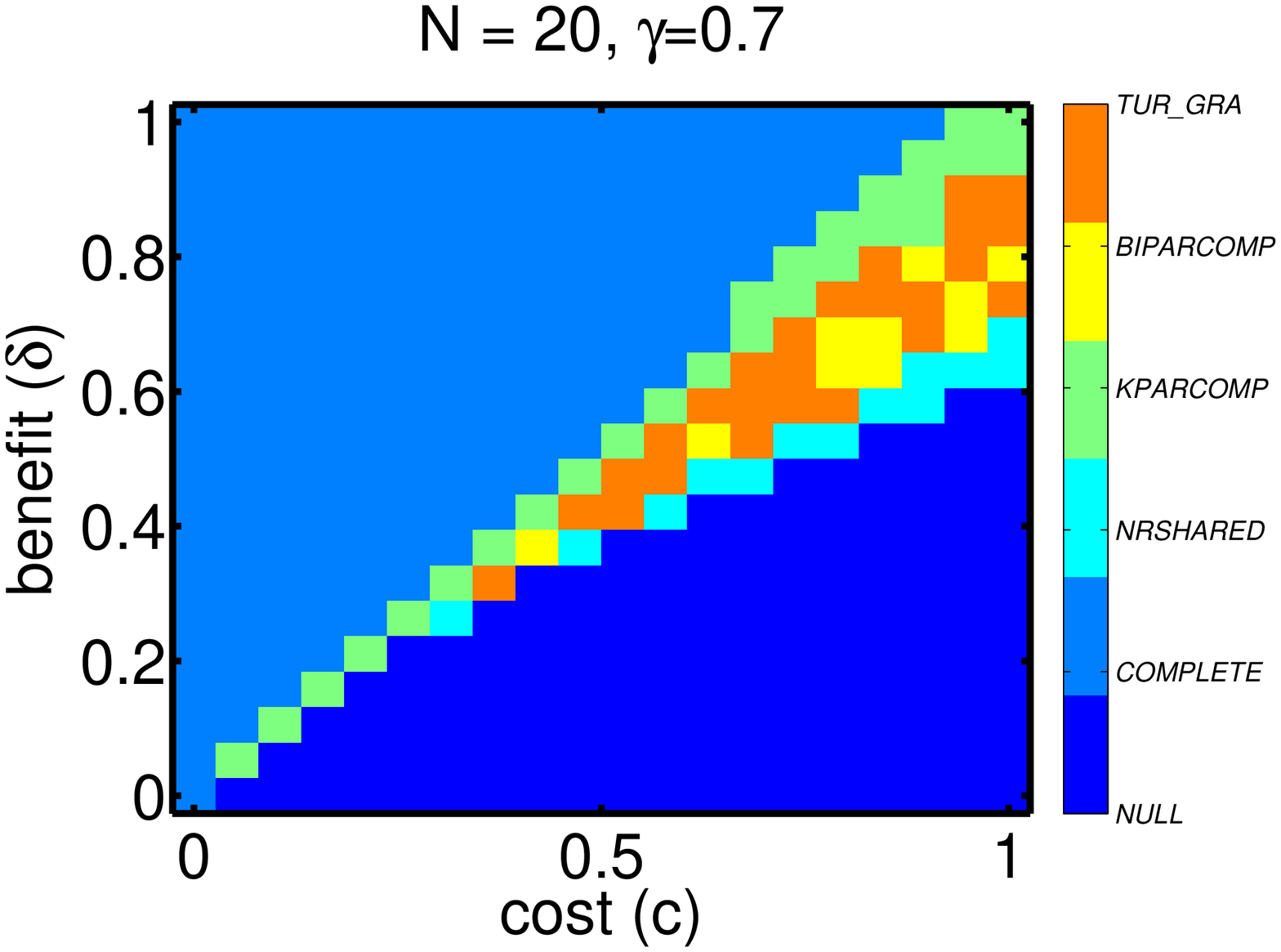, keepaspectratio}
\end{minipage}
\end{tabular}
\\
\caption{Network topologies obtained during simulations \label{fig:Simulation-Results-N}}
\end{figure*}

Figure~\ref{fig:Simulation-Results-N} shows the simulation results for $10$-node and $20$-node networks. The exact parameter configurations and the initial network densities are marked
in Figure~\ref{fig:Simulation-Results-N}. The vertical axis of each plot in Figure~\ref{fig:Simulation-Results-N} is the benefit value ($\delta$), ranging from $0$ to $1$, and the horizontal axis represents the cost parameter ($c$), ranging from $0$ to $1$. As noted earlier, for a $<c,\delta>$ pair, we repeat the simulation for \textit{Num-Repetitions}.
Each repetition for the simulation results in a network that can be classified as one of the structures mentioned in the theoretical analysis. We plot the most frequent (\textit {modal} ) network structure as determined by the frequency with which each of the network structures resulted in \textit{Num-Repetitions} simulation runs. The experiment was repeated starting with different network densities, $\gamma=0, 0.35 \text{ and } 0.7$.
We list some of the abbreviations used in the legends of the plots in Table~\ref{fig:Simulation-Results-N}. 
\begin{table}[h]
\centering
\scriptsize
\begin{tabular}{|c|c||c|c|}
\hline 
TUR\_GRA & Turan Graph & BIPARCOMP& BiPartite Complete \\
\hline
NRSHARED & Near-Shared & KPARCOMP & KPartite Complete \\
\hline 
\end{tabular}
\caption{Some abbreviations used in Figure~\ref{fig:Simulation-Results-N}}\label{abbreviations}
\end{table}

In each of the plots in Figure~\ref{fig:Simulation-Results-N}, we observe that the complete graph is the resultant pairwise stable network (when $\delta > c$, $(\delta - c) \geq \delta^2 $) which  concurs with the theoretical deductions that the complete graph is the unique pairwise stable network in this region (Table~\ref{summarytable2} and Figure~\ref{fig:validation}(c)).

We can also infer from Figure~\ref{fig:validation}(a), Figure~\ref{fig:validation}(b) and Figure~\ref{fig:validation}(d) that there is an overlap in the stability regions among complete and complete bipartite and also between null and complete bipartite networks. However, as observed through simulations (Figure~\ref{fig:Simulation-Results-N}), we see that the complete bipartite network emerges as the \textit{modal} pairwise stable network in its regions of overlap with the aforementioned networks. This can be attributed to the fact there are a large  number of possible bipartite graphs whereas there is only one null network and one complete network. Hence, the likelihood of the null and complete emerging in a region where the bipartite network is also pairwise stable, is small. 

We also observe from some of the plots in Figure~\ref{fig:Simulation-Results-N} that Near-Shared and K-Partite Complete networks emerge as pairwise stable networks under some regions of the parameters. As explained in earlier sections, this can be attributed to the fact that our analytical results (as shown in Table~\ref{summarytable2}) is not exhaustive and there exist some new topologies ( which we characterize as Near-Shared or K-Partite Complete networks) which are also pairwise stable.


\subsection{Network Evolution}

Having studied the macroscopic behaviour of our simulations, we investigate the network formation process from a microscopic viewpoint. We examine various snapshots during the network formation process of a single simulation run which is repeated just once for a fixed parameter of $\delta$ and $c$. We consider $\delta=c=0.5$ as our parameter configuration. We can observe from the our proposed utility model (Equation~\ref{proposedutilitymodel}) that for this configuration the benefits from direct links is $0$ and so, nodes try to maximize the benefits due to bridging behavior. The nodes form/delete links such that they emerge as a bridge in connecting their unconnected neighbors. Hence, we would expect the final pairwise stable network to be consisting of nodes who are filling the positions of structural holes in the network. In other words, the emergent pairwise stable graph should be a \textit{triangle-free} as nodes form links with nodes who are themselves are not connected with each other. 

We depict the snapshots of network formation process in Figure~\ref{fig:NetworkFormationSnapshot}. We can see that initially the nodes are forming links in such a way that triangles are not present but eventually triangles eventually do form due to the cumulative action of other nodes in the network. When triangles emerge in the neighbourhood of a node, it leads to deletion of links from that node (as the node will benefit strictly from deletion) and the final emergent network (Figure~\ref{fig:NetworkFormationSnapshot}(l)) is a bipartite complete network (which is triangle-free) with alternate nodes in the ring layout depiction in Figure~\ref{fig:NetworkFormationSnapshot}(l) belonging to the same partition.

\begin{figure*}[htb!]
\begin{tabular}{||c|c|c||}
\hline
\hline
\begin{minipage}{5.5 cm}
\vspace{0.2in}
\centering
\epsfig{height=4cm, width=4cm, angle=0.0,figure=./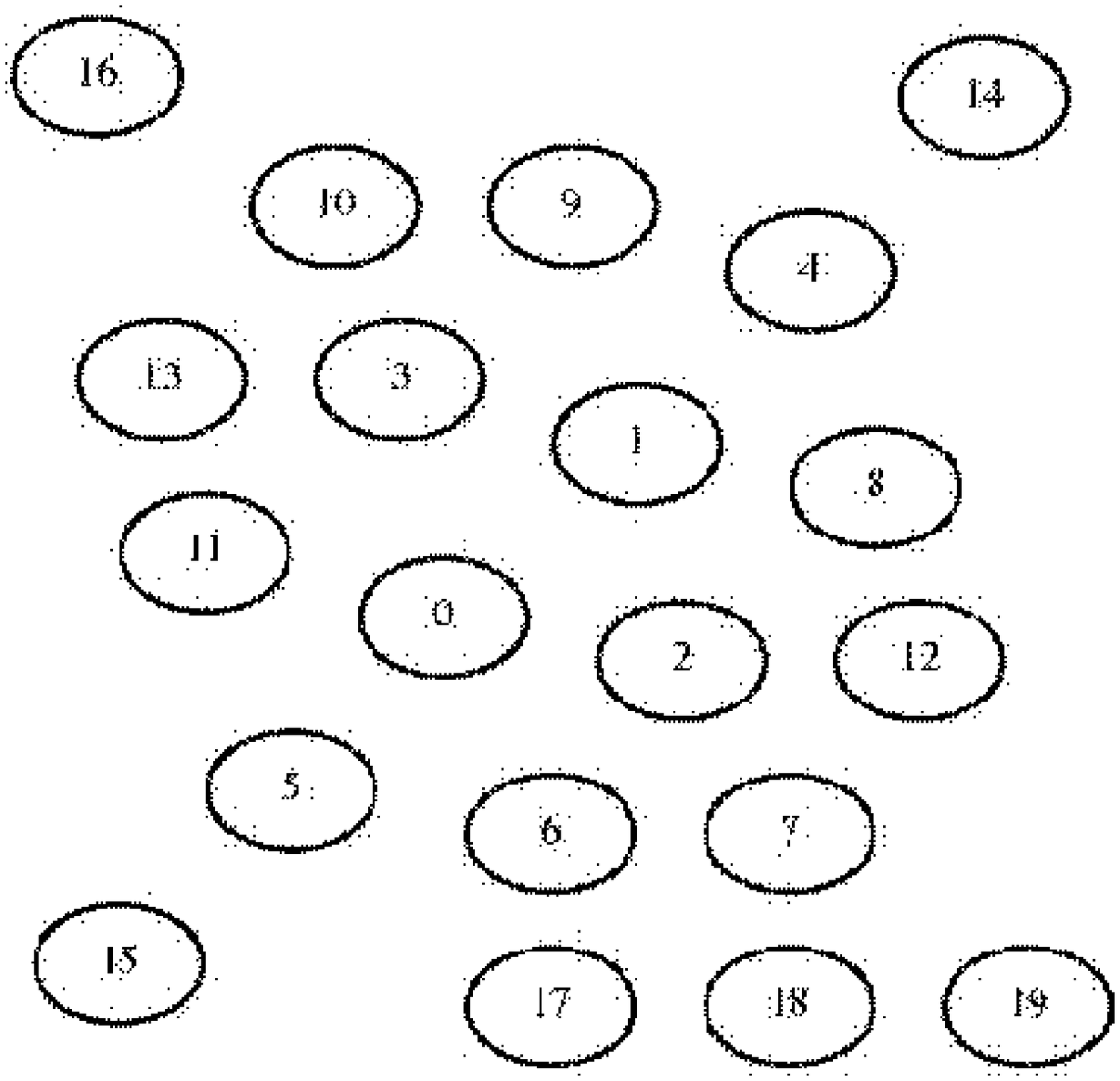, keepaspectratio}
\end{minipage}
&
\begin{minipage}{5.5 cm}
\vspace{0.2in}
\centering
\epsfig{height=4cm, width=4cm, angle=0.0,figure=./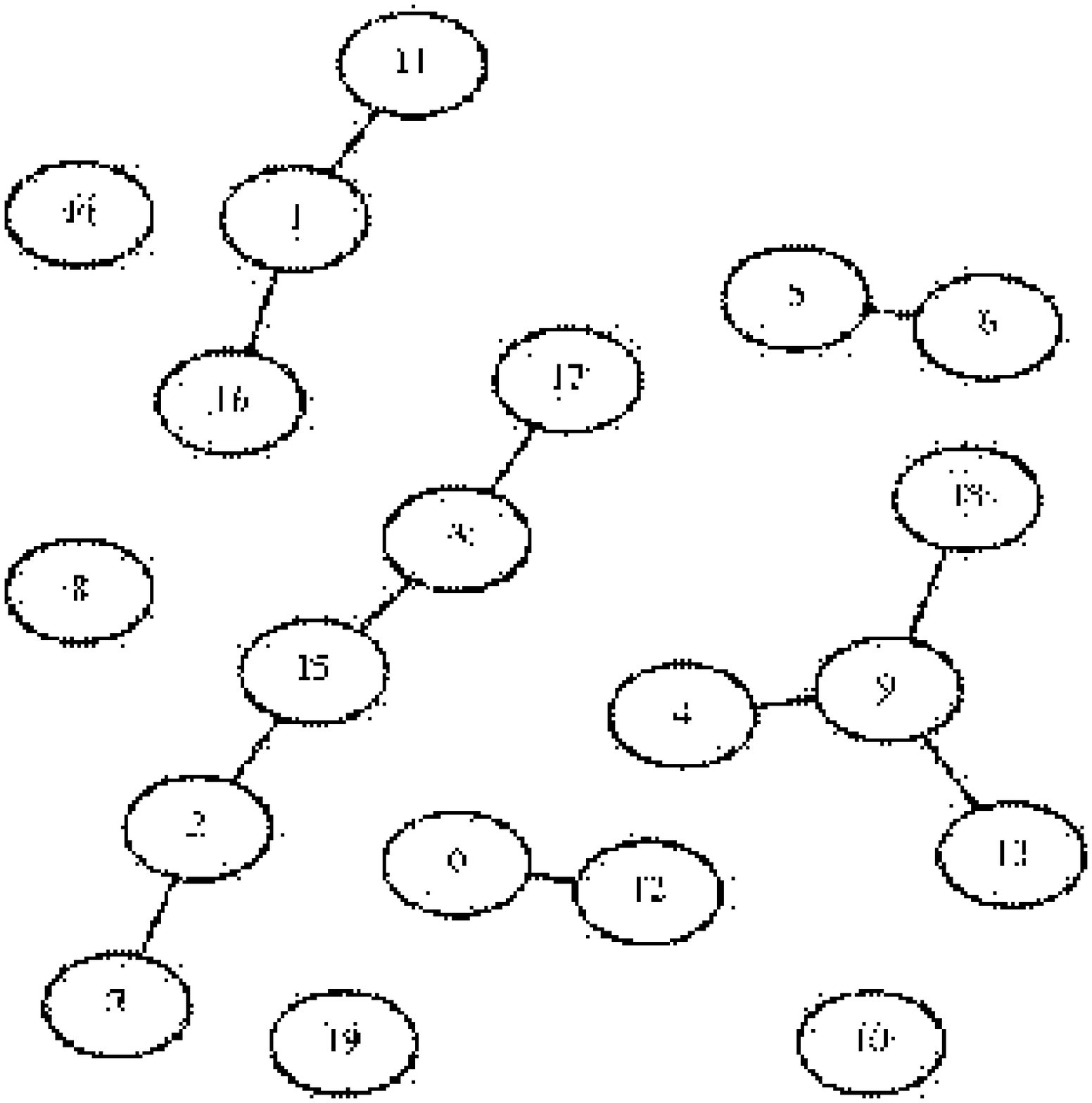, keepaspectratio}
\end{minipage}
&
\begin{minipage}{5.5 cm}
\vspace{0.2in}
\centering
\epsfig{height=4cm, width=4cm, angle=0.0,figure=./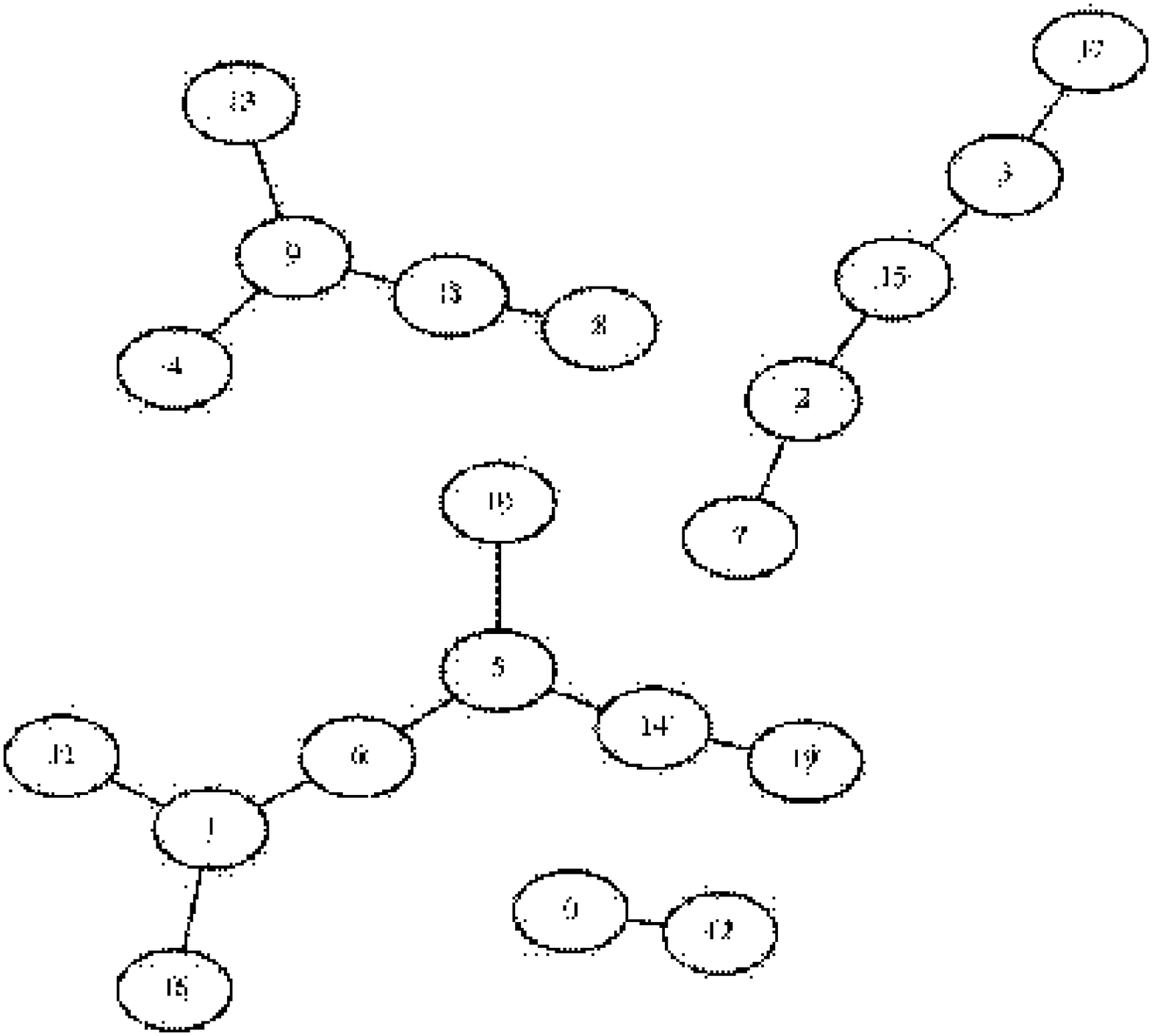, keepaspectratio}
\end{minipage}
\\
(a) & (b) & (c)
\\
\hline
\begin{minipage}{5.5 cm}
\vspace{0.2in}
\centering
\epsfig{height=4cm, width=4cm, angle=0.0,figure=./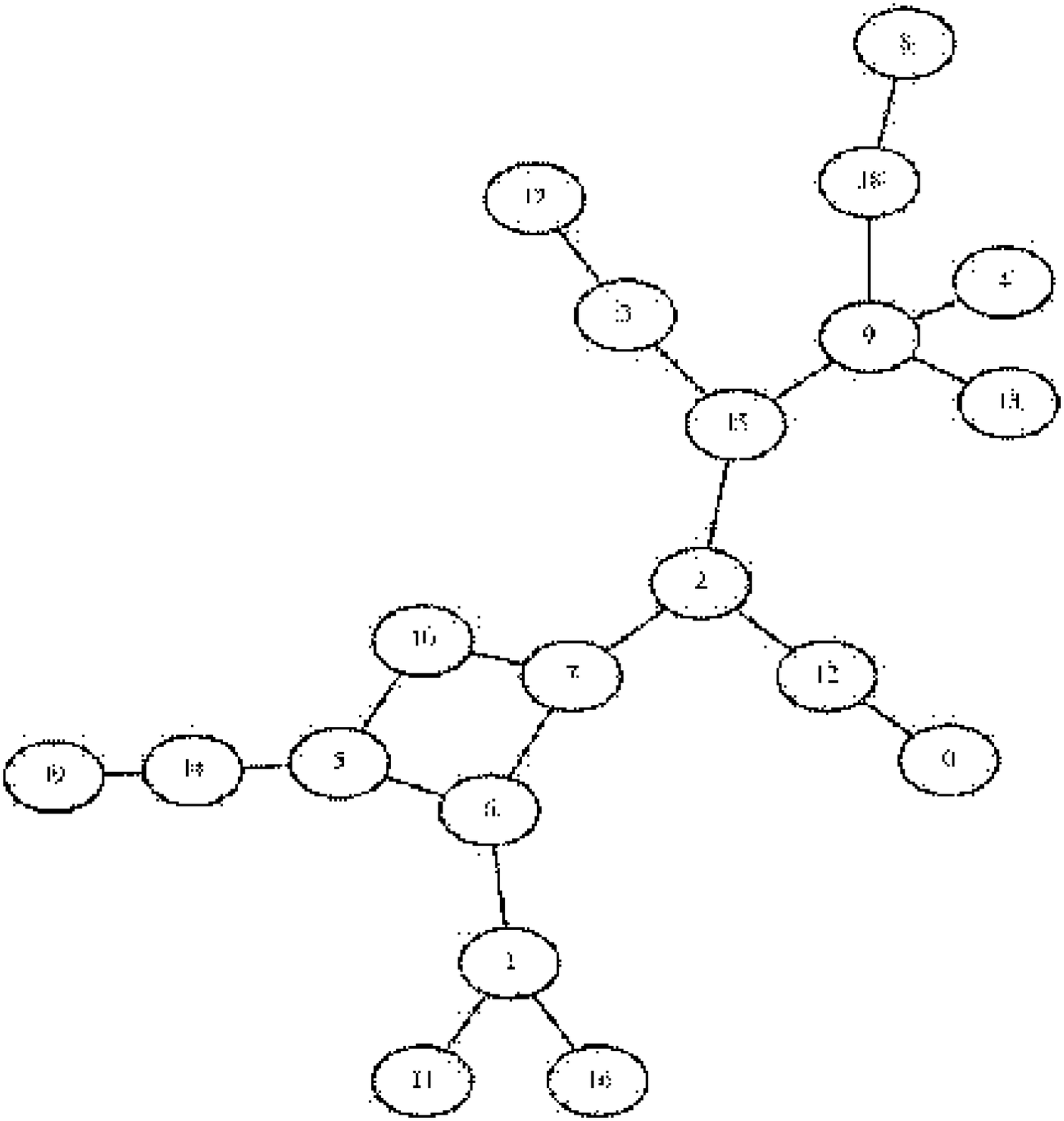, keepaspectratio}
\end{minipage}
&
\begin{minipage}{5.5 cm}
\vspace{0.2in}
\centering
\epsfig{height=4cm, width=4cm, angle=0.0,figure=./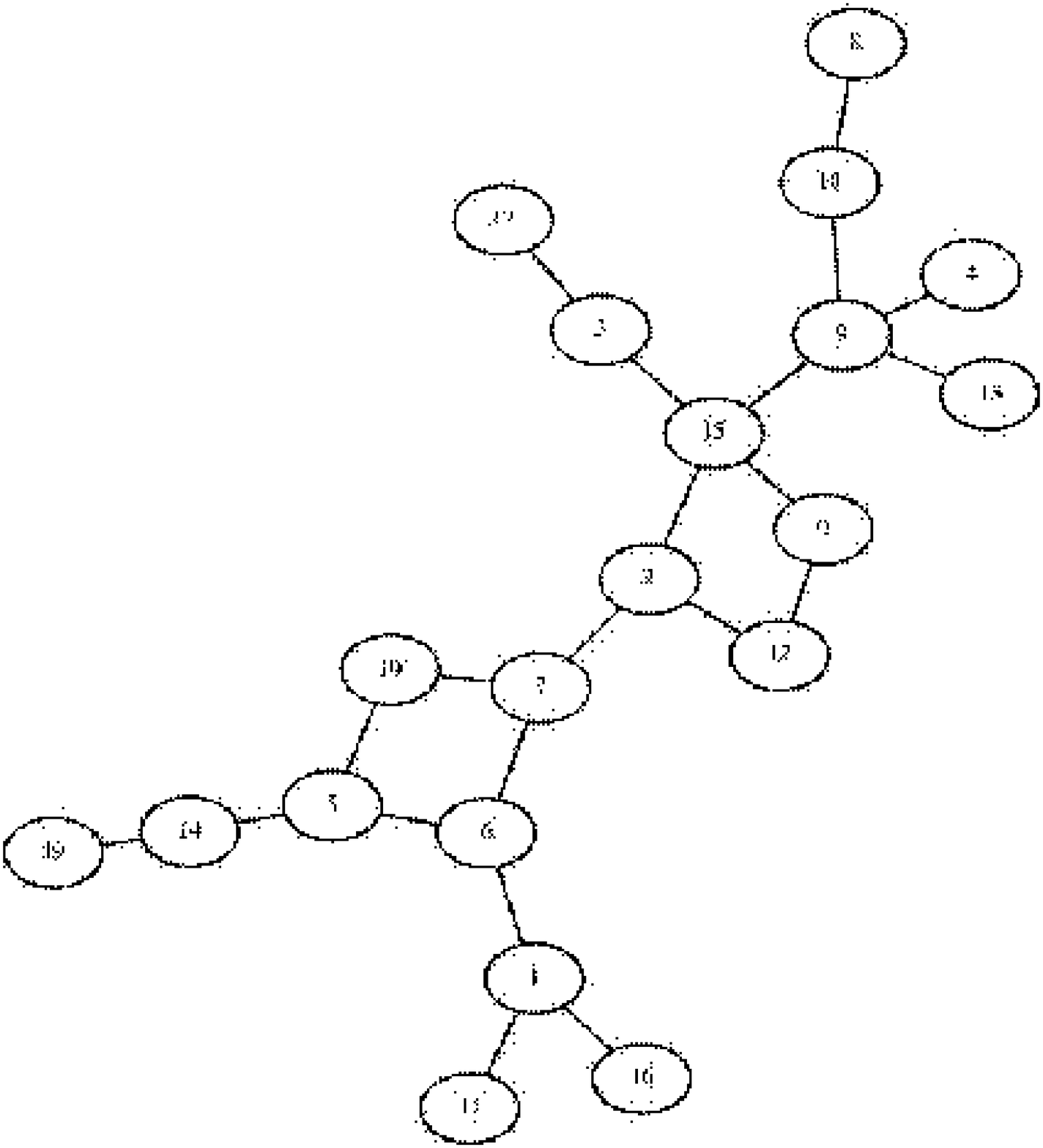, keepaspectratio}
\end{minipage}
&
\begin{minipage}{5.5 cm}
\vspace{0.2in}
\centering
\epsfig{height=4cm, width=4cm, angle=0.0,figure=./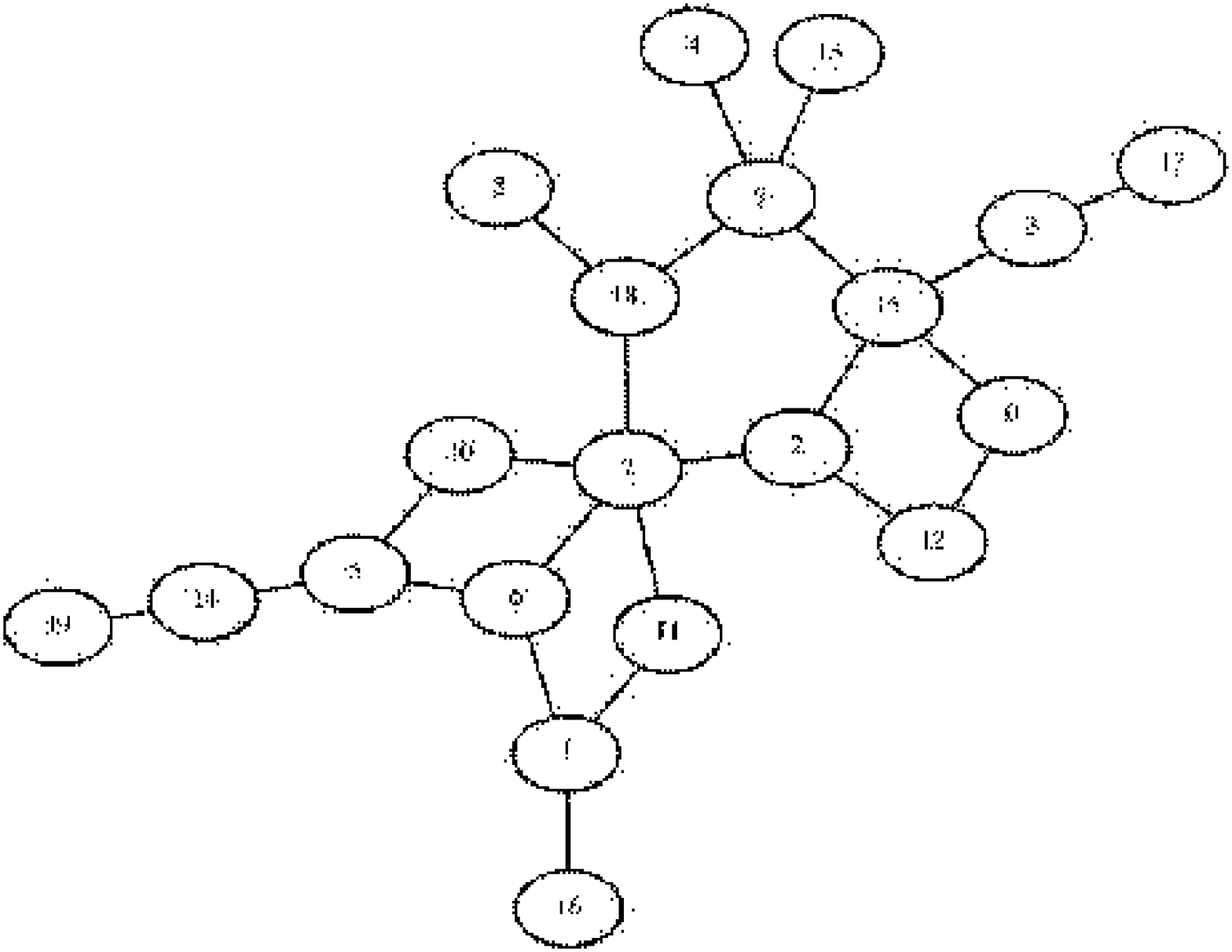, keepaspectratio}
\end{minipage}
\\
(d) & (e) & (f)
\\
\hline
\begin{minipage}{5.5 cm}
\vspace{0.2in}
\centering
\epsfig{height=4cm, width=4cm, angle=0.0,figure=./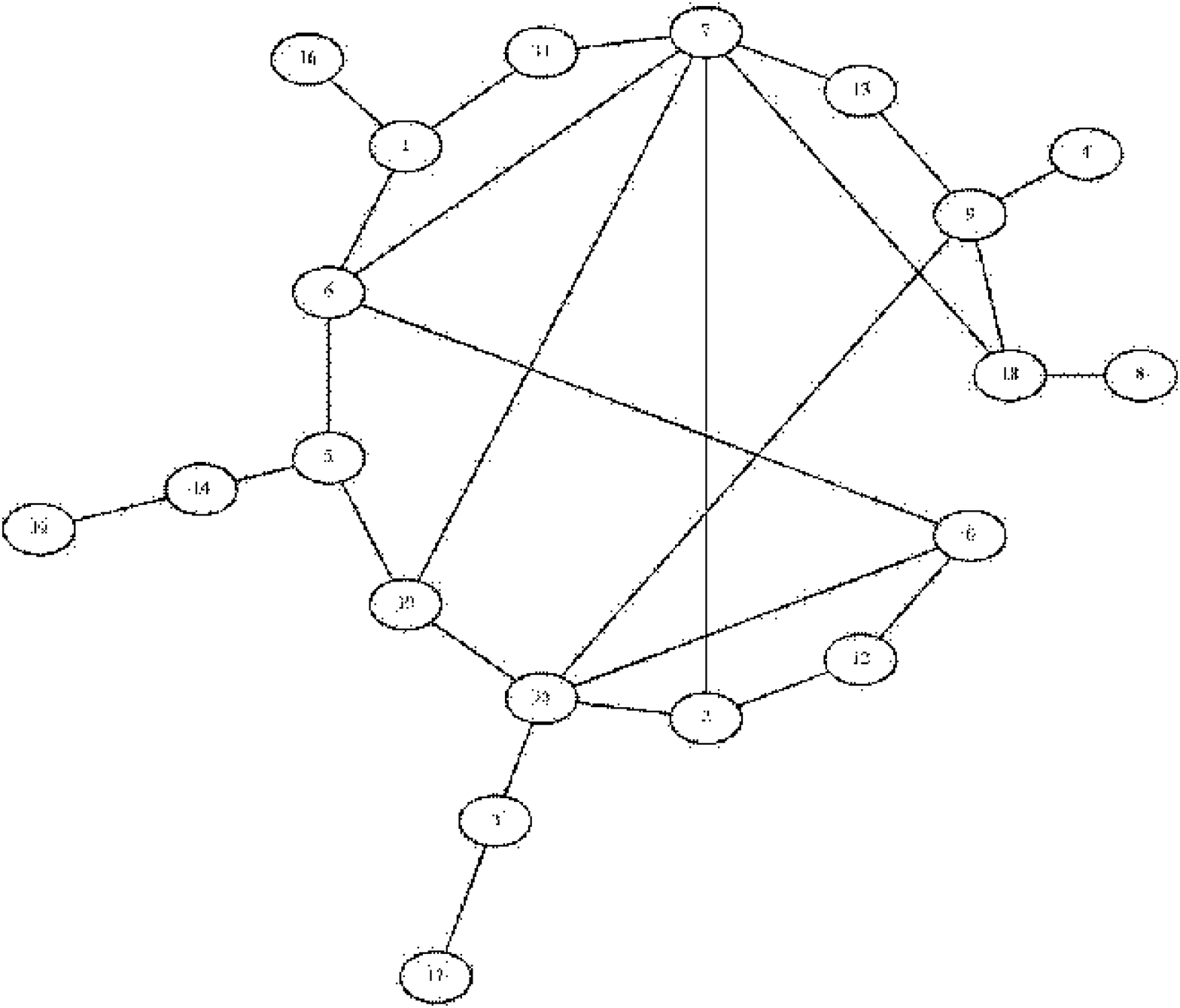, keepaspectratio}
\end{minipage}
&
\begin{minipage}{5.5 cm}
\vspace{0.2in}
\centering
\epsfig{height=4cm, width=4cm, angle=0.0,figure=./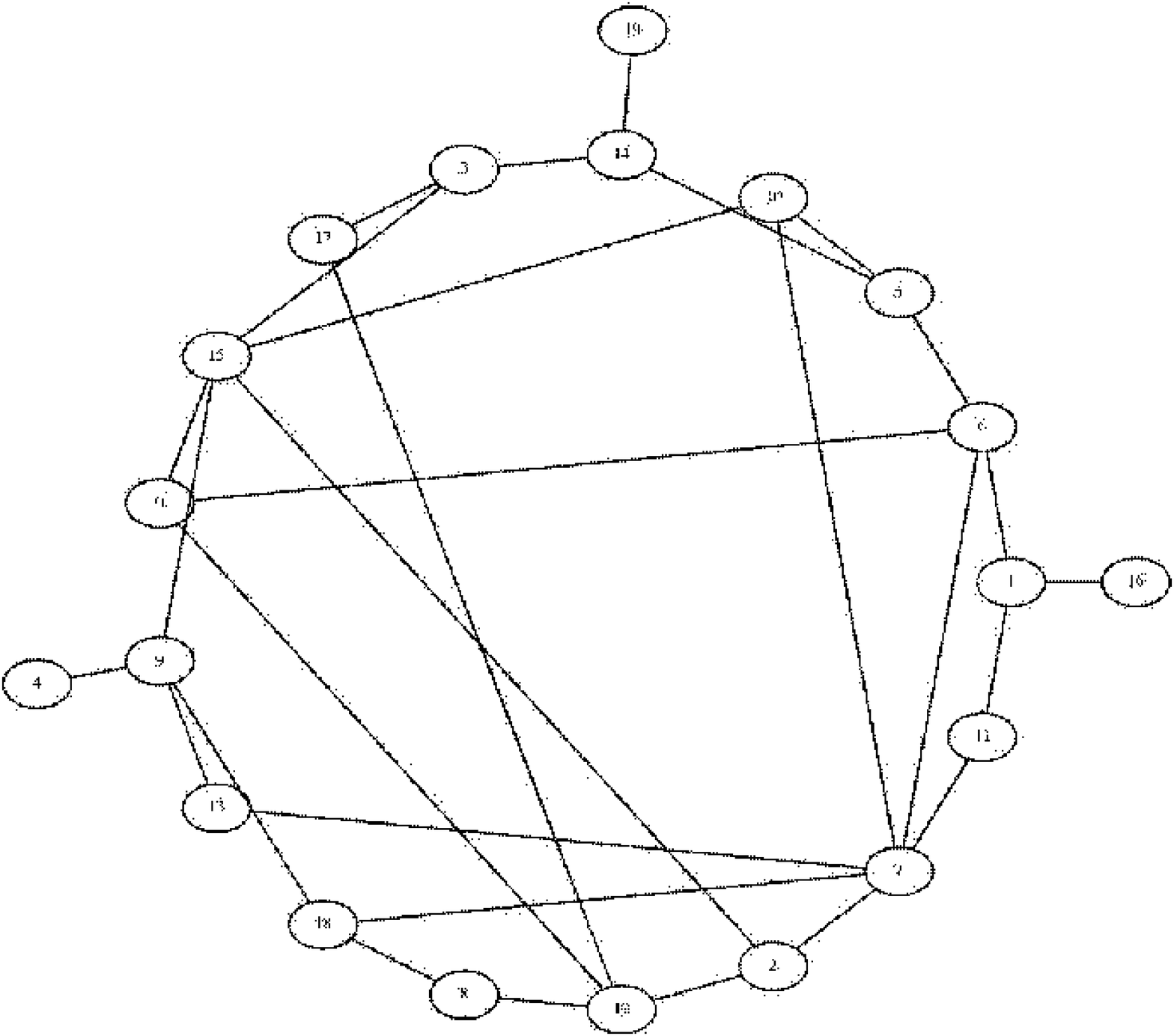, keepaspectratio}
\end{minipage}
&
\begin{minipage}{5.5 cm}
\vspace{0.2in}
\centering
\epsfig{height=4cm, width=4cm, angle=0.0,figure=./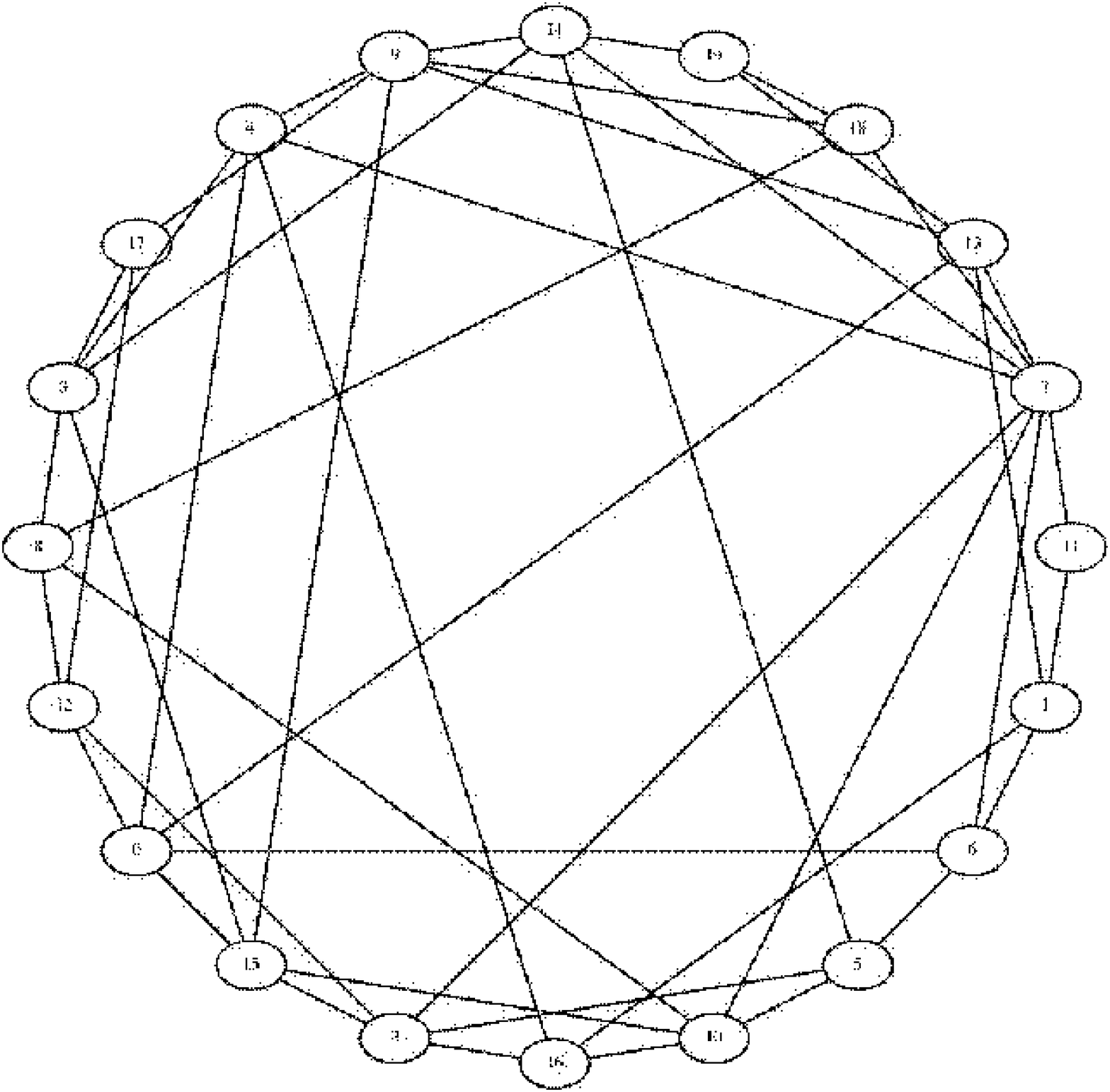, keepaspectratio}
\end{minipage}
\\
(g) & (h) & (i)
\\
\hline
\begin{minipage}{5.5 cm}
\vspace{0.2in}
\centering
\epsfig{height=4cm, width=4cm, angle=0.0,figure=./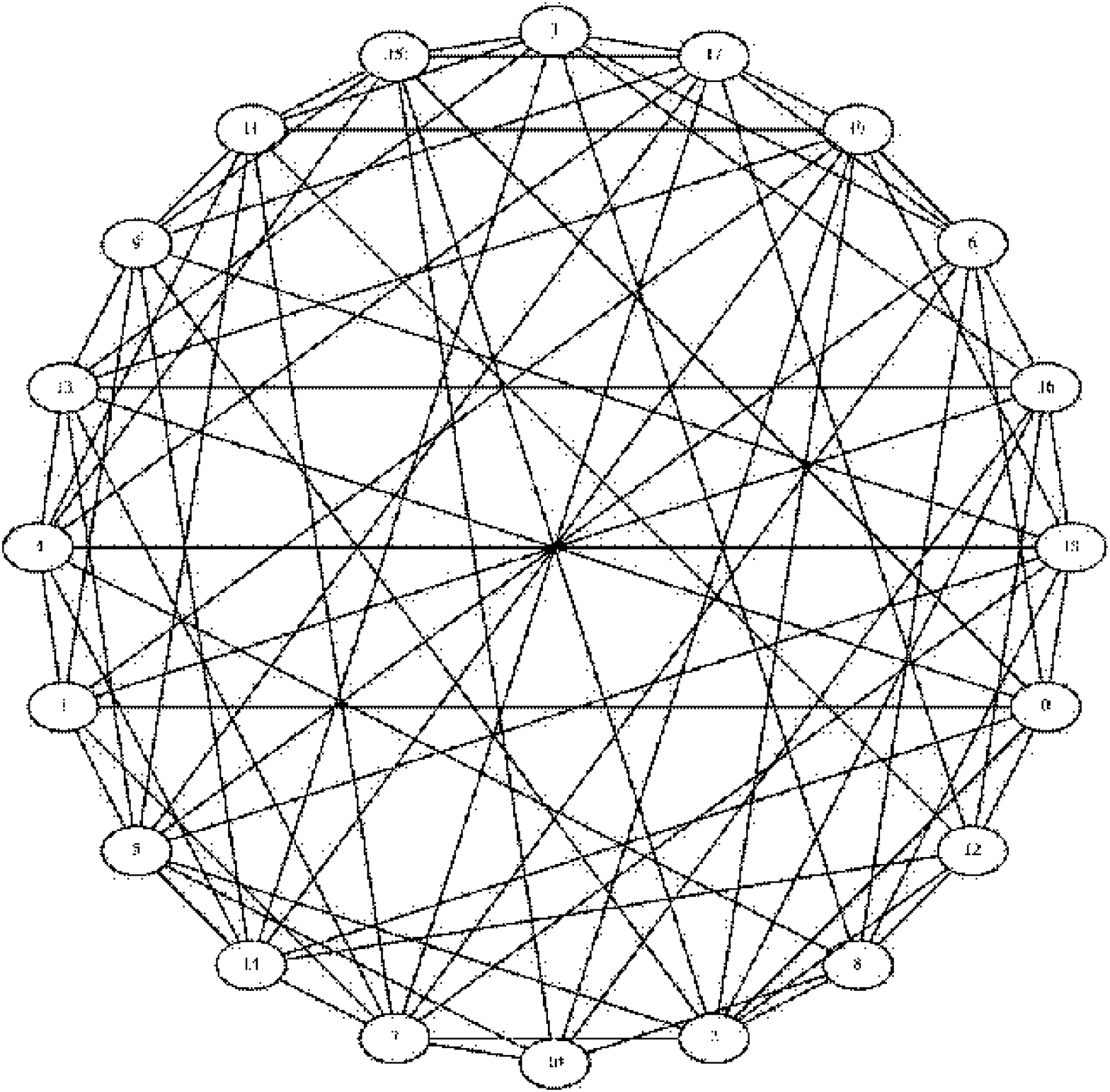, keepaspectratio}
\end{minipage}
&
\begin{minipage}{5.5 cm}
\vspace{0.2in}
\centering
\epsfig{height=4cm, width=4cm, angle=0.0,figure=./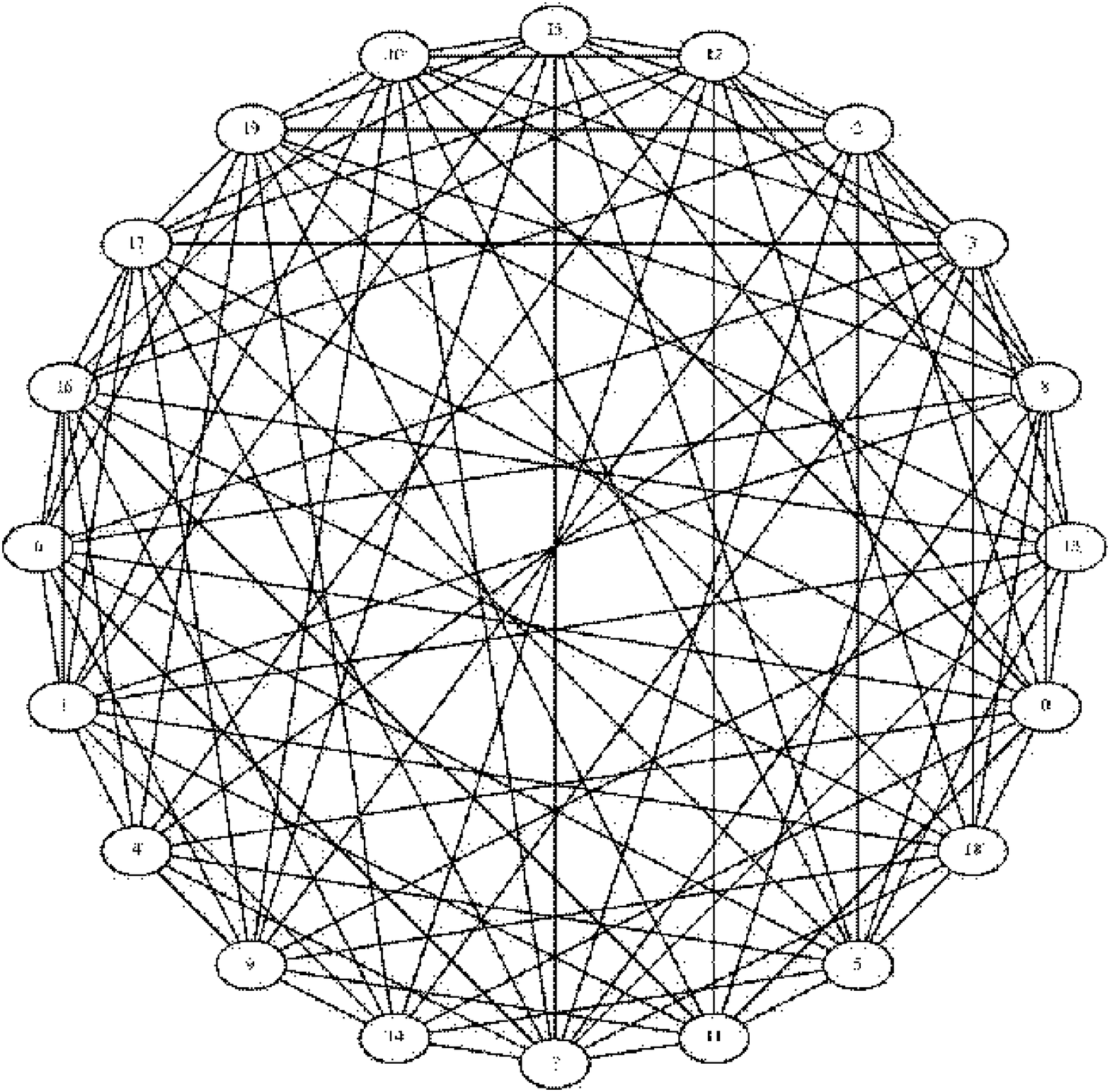, keepaspectratio}
\end{minipage}
&
\begin{minipage}{5.5 cm}
\vspace{0.2in}
\centering
\epsfig{height=4cm, width=4cm, angle=0.0,figure=./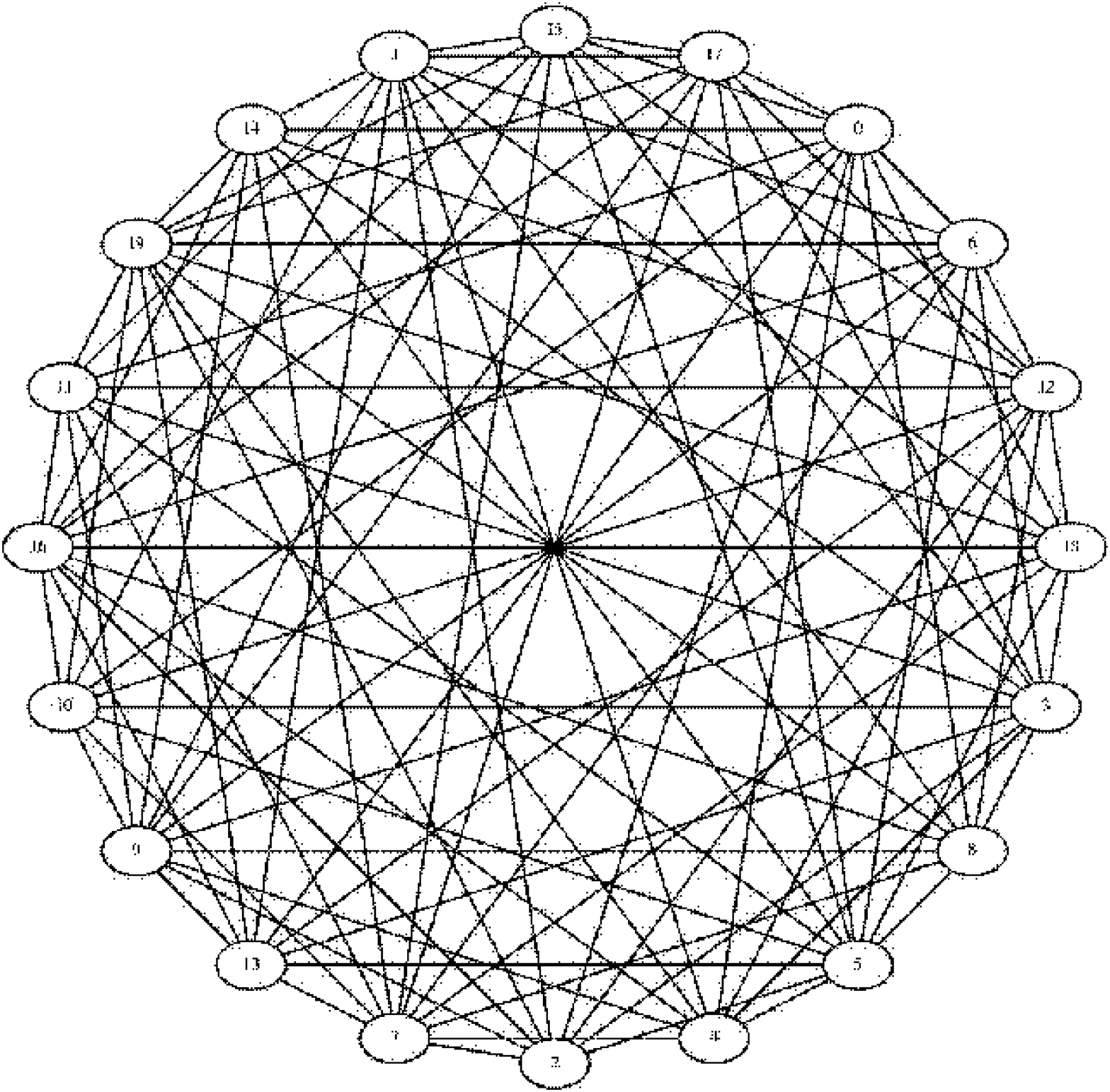, keepaspectratio}
\end{minipage}
\\
(j) & (k) & (l)
\\
\hline
\hline
\end{tabular}
\caption{Evolution of the network formation process ($N=20, \delta=0.5, c=0.5$)\label{fig:NetworkFormationSnapshot}}
\end{figure*}

In complex network literature, the number of triangles in the network is a important parameter which was first studied by Watts and Strogatz~\cite{watts:98} by definition the notion of \textit{clustering}, sometimes also known as network transitivity. Clustering refers to the increased propensity of pairs of people to be acquainted with one another if they have another acquaintance in common. Watts and Strogatz~\cite{watts:98} define a \textit{clustering coefficient} (denoted by $C$) that measures the degree of clustering in a undirected unweighted graph.
\begin{align}
\nonumber C &= \displaystyle \frac{3 \times \text{Number of triangles on the graph}}{\text{Number of connected triples of vertices}} 
\end{align}
The factor three accounts for the fact that each triangle can be seen as consisting of three different connected triples, one with each of the vertices as central vertex, and assures that $0 \leq C \leq  1$. A triangle is a set of three vertices with edges between each pair of vertices; a connected triple is a set of three vertices where each vertex can be reached from each other (directly or indirectly), i.e. two vertices must be adjacent to another vertex (the central vertex). 


It can be observed from the utility model proposed in equation (\ref{proposedutilitymodel}) in Section~\ref{utilitymodel} that $\Biggl(\displaystyle\frac{\sigma_i}{{d_i \choose 2}}\Biggr)$ component in the utility model corresponds to the clustering coefficient of node $i$. Thus, in our utility model, nodes benefit from having lesser clustering coefficient as this will lead to the formation of structural holes, which in turn leads to increase in the payoff for the node. We elaborate more on this when we discuss efficient network topologies in Section~\ref{sec:Efficiency}.

\begin{figure*}[h]
\begin{tabular}{cc}
\begin{minipage}{8 cm}
\vspace{0.2in}
\centering
\epsfig{height=7.5cm, width=7.5cm, angle=0.0,figure=./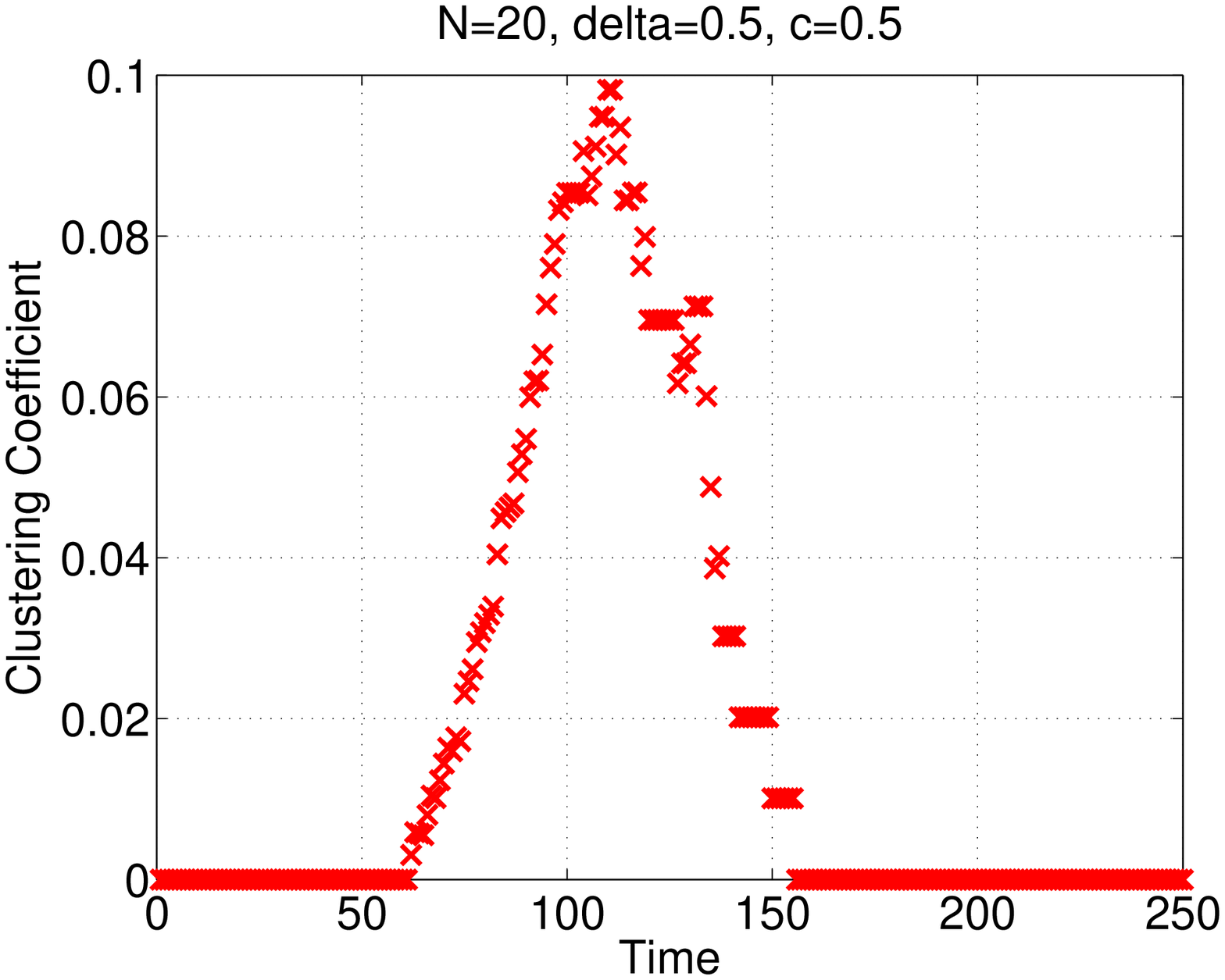}
\end{minipage}
&
\begin{minipage}{8 cm}
\vspace{0.2in}
\centering
\epsfig{height=7.5cm, width=7.5cm, angle=0.0,figure=./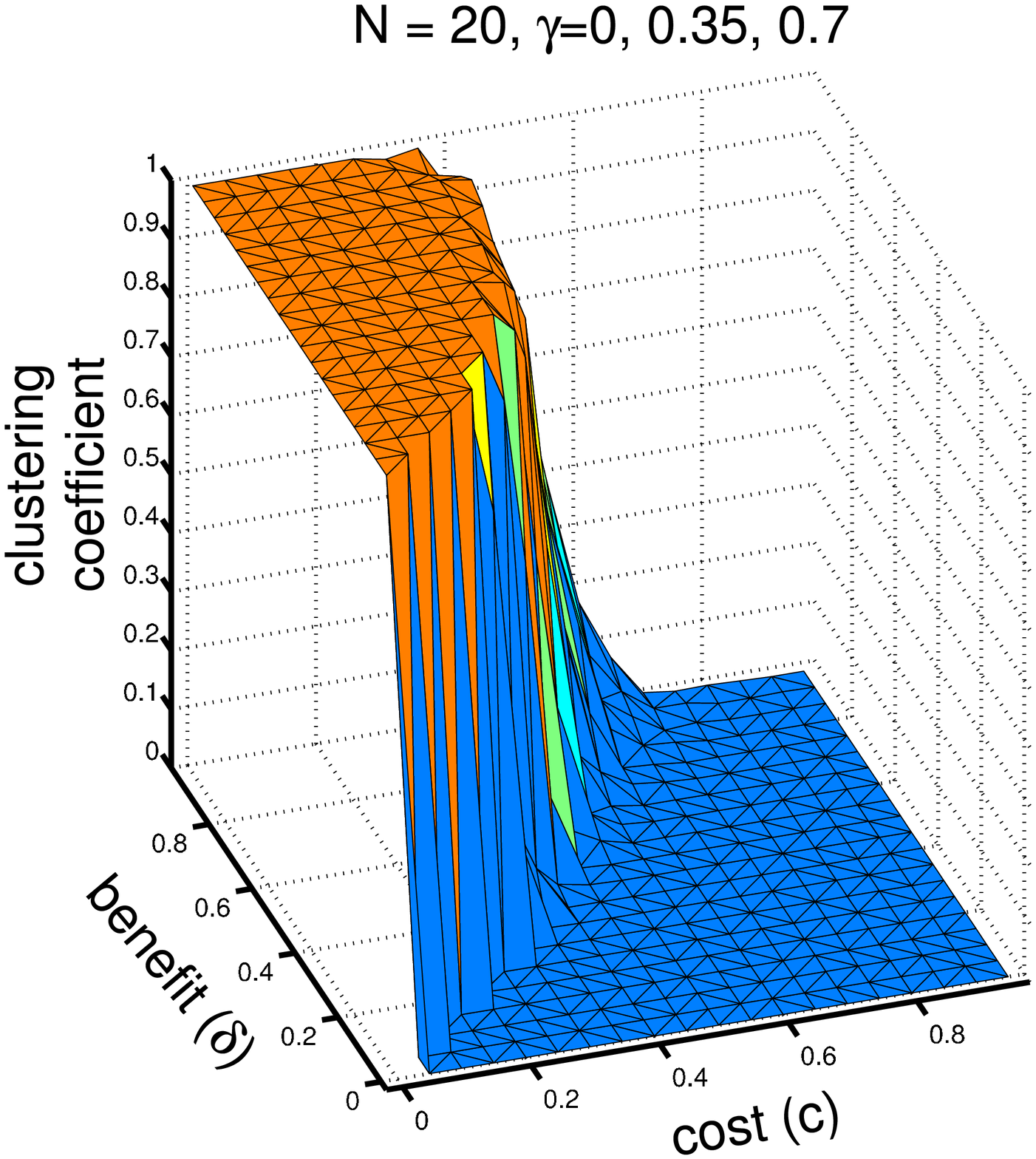}
\end{minipage}
\\
(a) & (b) 
\end{tabular}
\caption{Study of Clustering Coefficient ($N=20$)\label{fig:clusteringcoefficient}}
\end{figure*}

We now study how the clustering coefficient changes as the network evolves through the different phases shown in Figure~\ref{fig:NetworkFormationSnapshot}. We plot this result in Figure~\ref{fig:clusteringcoefficient}(a). We see that upto time epoch $50$ clustering coefficient is $0$. Later there is a increase in the value which is followed by the reduction in the clustering coefficient back to $0$ (at time epoch $150$) when the pairwise stable network emerges. As explained before, this is indeed the expected behaviour during the network formation process for the parameters $\delta=c=0.5$. 

We also study the average clustering co-efficient in all the pairwise stable networks that emerge for different values of $\delta$ and $c$. We take the average over running \textit{Num-repetitions} number of times. The result is shown in the 3d plot in Figure~\ref{fig:clusteringcoefficient} (b). We can see that the clustering coefficient assumes value of $1$ in the regions where the complete network is stable and $0$ when the null network is stable. In other regions, the clustering coefficient value is between $0$ and $1$ which indicates a tradeoff between the benefits from direct links and the benefits from bridging benefits to the nodes in the network.

\begin{figure*}[h]
\begin{tabular}{ccc}
\begin{minipage}{5.5 cm}
\vspace{0.2in}
\centering
\epsfig{height=5.5cm, width=5.5cm, angle=0.0,figure=./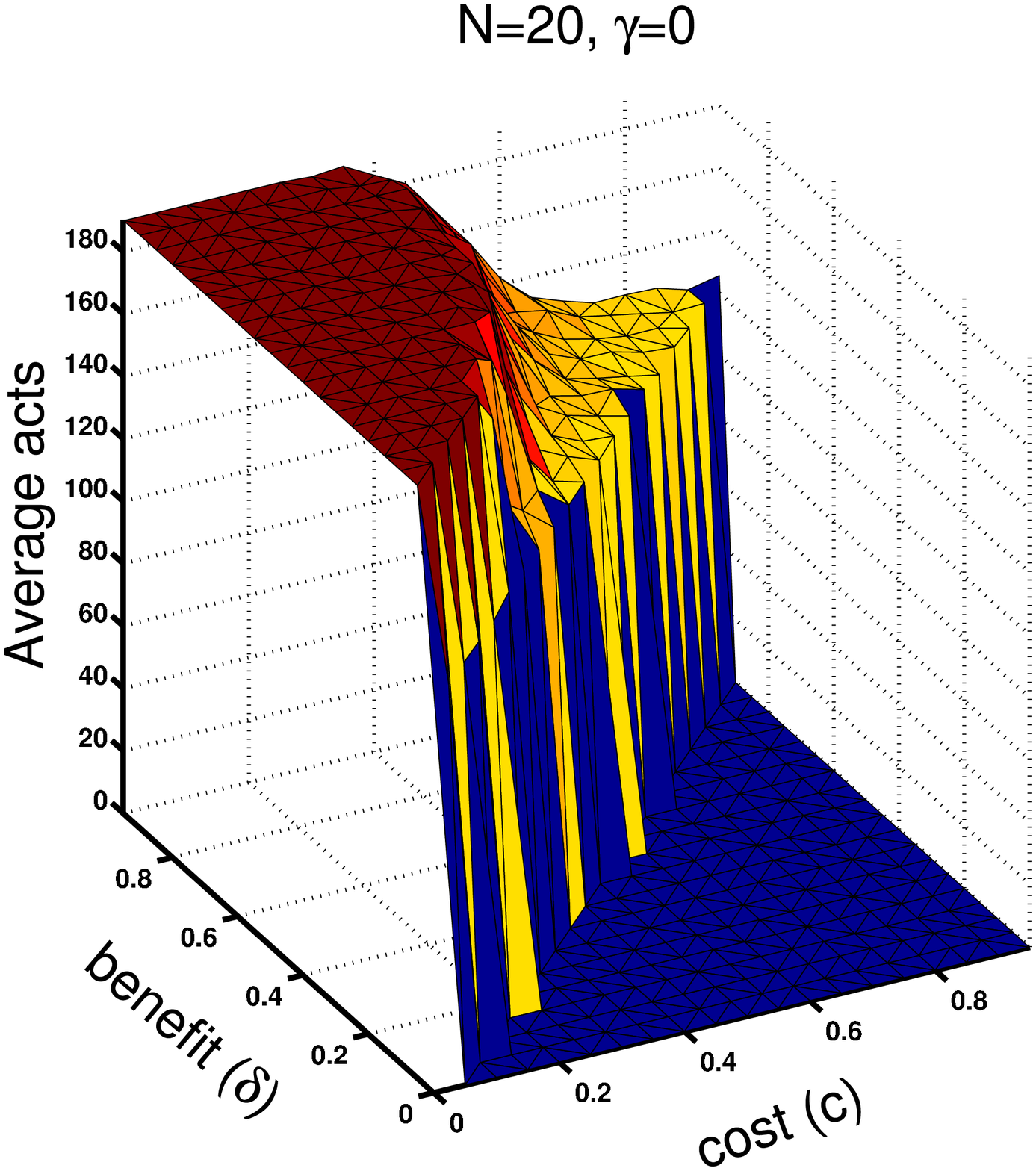, keepaspectratio}
\end{minipage}
&
\begin{minipage}{5.5 cm}
\vspace{0.2in}
\centering
\epsfig{height=5.5cm, width=5.5cm, angle=0.0,figure=./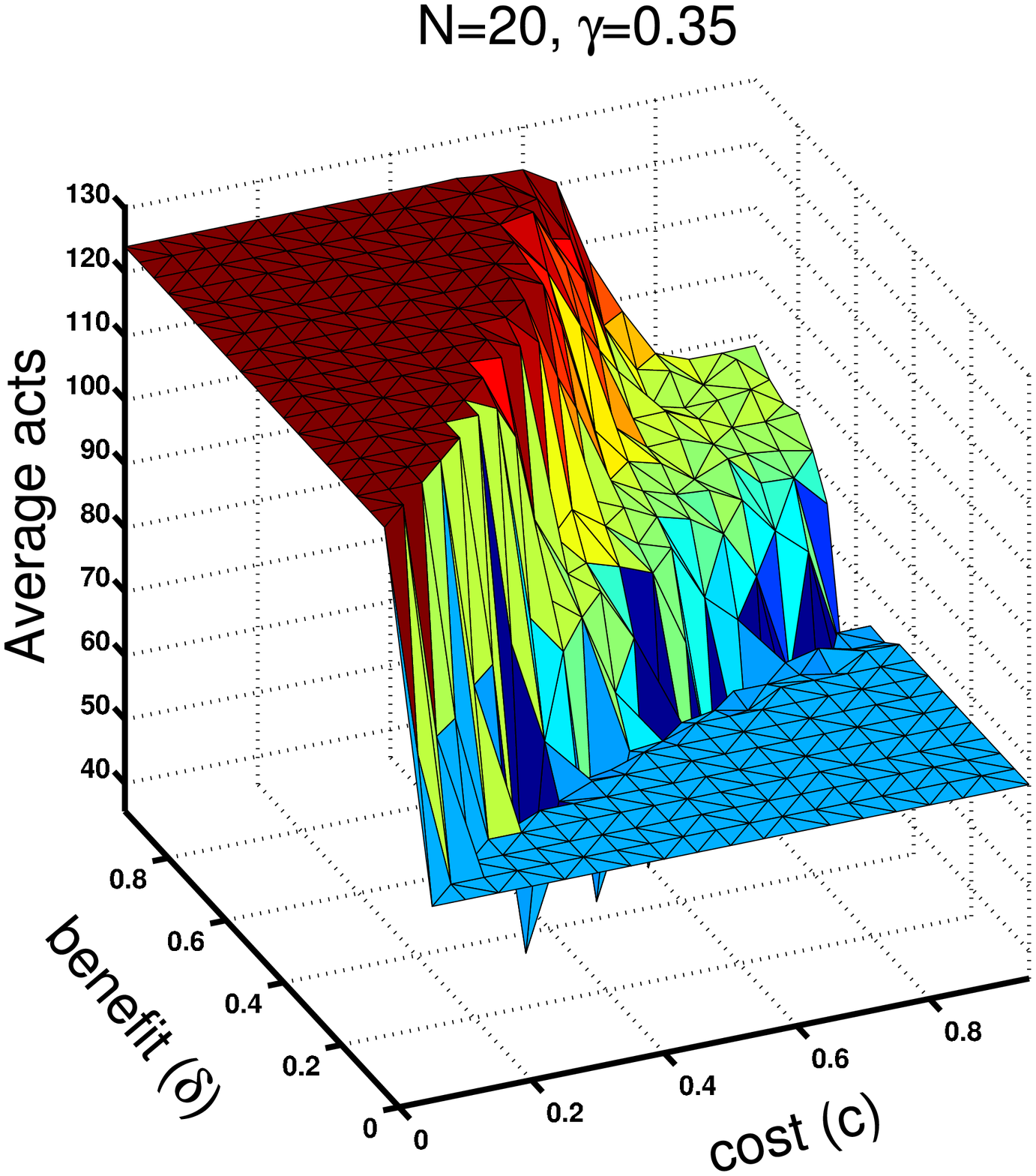, keepaspectratio}
\end{minipage}
&
\begin{minipage}{5.5 cm}
\vspace{0.2in}
\centering
\epsfig{height=5.5cm, width=5.5cm, angle=0.0,figure=./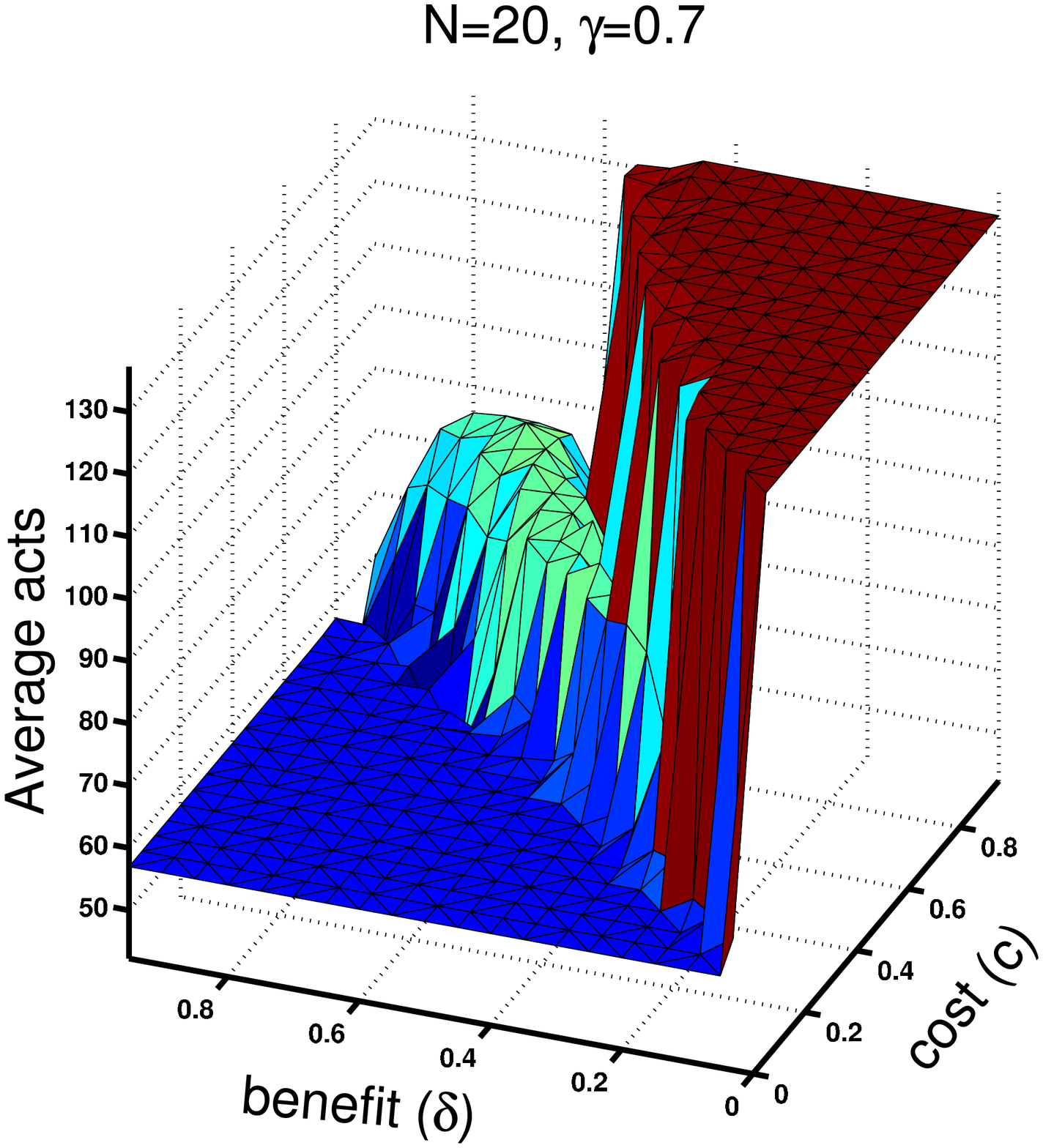, keepaspectratio}
\end{minipage}
\\
(a) & (b) & (c)
\end{tabular}
\caption{Study of number of acts before converging to pairwise stable network ($N=20$ )\label{fig:meanacts}}
\end{figure*}

\subsection{Average Number of Actions before Convergence}

In this section, we will study the effect the initial network density has on the effort needed by the nodes to achieve convergence to a pairwise stable network. A single addition of an edge or a single deletion of an edge by a node is considered to be a single `act' by that player. We now study the mean number of acts performed by the players to converge to a pairwise stable network starting from various initial random networks. We can see from Figure~\ref{fig:meanacts}(a) that the number of changes to the network is more when the $\delta > c$ region and this is because the initial network is a null network and the players need to perform a lot more additions/deletions to the network before reaching the final stable network which is the complete network. When $\delta <c$, the players need not perform any change to the network as the initial null network is already pairwise stable. In fact, we can observe from the Figure~\ref{fig:meanacts} that the number of acts needed to reach the complete network is maximum (about $180$)  when starting with null network than when compared to other scenarios of $\gamma=0.35$ and $\gamma=0.7$ (mean acts is about $130$).

We observe a reversal of the work needed to reach null network in Figure~\ref{fig:meanacts}(c) where more number of changes is needed to reach null network than reaching the complete network. This can be attributed to the fact that the initial network is already a dense network to start with and it takes relatively less effort to reach the complete network than the null network under appropriate configurations of $\delta$ and $c$.

Initial network density of $0.35$ corresponds to a medium-dense network (Figure~\ref{fig:meanacts}(b)) and hence there is a non-zero effort to reach any of the pairwise stable network under any parameter configuration. However, as in Figure~\ref{fig:meanacts}(a), it takes more effort for players to reach the complete network than the null network.

\section{Analytical Characterization of Topologies of Efficient Networks}\label{sec:Efficiency}

In this section, we study the structure of efficient networks, i.e., networks that maximize the
overall utility, under various conditions of $\delta$ and $c$. First, we begin by introducing a few useful classical results in extremal graph theory and we use these results later in our analysis.

\subsection{Triangles in a Graph}

If three nodes $i$, $j$, and $k$ in $G(V, E)$ are such that $i$
and $j$, $j$ and $k$, $k$ and $i$ are connected by edges, then we
say that nodes $i,j,k$ form a triangle in $G$. The number of
triangles in a simple graph $G$ plays a crucial role in the
computation of payoffs to the nodes and we state here some
classical results. We know from Turan's theorem ~\cite{turan},
that it is possible to have a triangle free graph if the following
holds:
\begin{equation}
\label{turantheorem} e \le \bigg\lfloor
\frac{n^2}{4} \bigg\rfloor
\end{equation}
Here $e$ denotes the number of edges and $n$ the number of vertices of
the graph.  Moreover, from ~\cite{nor:ste}, we know that the number of triangles,
$T$, can be lower bounded, if the number of edges exceed the
above value $\lfloor \frac{n^2}{4} \rfloor $, by
\begin{equation} \label{theorem2}
T \geq \frac{n(4e-n^2)}{9}
\end{equation}

In what follows, we refer to the graph having maximum number of
edges with no triangles as the {\em Turan Graph} and we represent it
by $G_{Turan}$. It is easy to verify that such a graph is a
complete bipartite graph, and the the number of vertices in each
partition differs at most by $1$.



\subsection{Finding the Efficient Graph}
\begin{definition}[Efficient Graph]
The utility ($u(G))$ of a given network $G$ is defined as the sum
of payoffs of all the nodes in that network. That is,
\begin{align}
u(G) &= \sum_{i=1}^{n} u_{i}(G). 
\end{align}
A graph that maximizes the above expression (i.e. sum of payoffs
of nodes) is called an efficient graph.
\end{definition}
We now present a series of results on the topologies of efficient
networks using the proposed framework. These results are based on
different ranges for the values of $\delta$ and $c$.


\begin{proposition}
When $ \delta <  c $ and $\delta^2 < (c - \delta)$, the null graph
is the unique efficient graph.
\end{proposition}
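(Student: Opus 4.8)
The plan is to prove the global statement directly from the definition of the total utility, showing that $u(G)=\sum_{i} u_i(G)\le 0$ for every graph $G$, with equality attained only by the null graph (where all degrees, hence all payoffs, vanish). First I would split the summation according to the two terms of Equation~(\ref{proposedutilitymodel}). Summing the direct-benefit term over all nodes and applying the handshake identity $\sum_i d_i = 2e$, where $e$ denotes the number of edges, gives
\begin{align}
\nonumber \sum_{i} d_i(\delta - c) = 2e(\delta - c).
\end{align}

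Next I would bound the aggregate bridging term from above. For each node the factor $1 - \frac{\sigma_i}{\binom{d_i}{2}}$ is a fraction of non-neighbouring pairs and therefore lies in $[0,1]$; consequently the bridging benefit of node $i$ is non-negative and at most $d_i\delta^2$. Summing and again using $\sum_i d_i = 2e$ yields
\begin{align}
\nonumber \sum_{i} d_i\Biggl(1 - \frac{\sigma_i}{\binom{d_i}{2}}\Biggr)\delta^2 \le 2e\delta^2.
\end{align}
Adding the two displays produces the single central inequality
\begin{align}
\nonumber u(G) \le 2e\bigl[(\delta - c) + \delta^2\bigr].
\end{align}
The hypothesis $\delta^2 < (c-\delta)$ is exactly the assertion that the bracketed coefficient $(\delta - c) + \delta^2$ is strictly negative. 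Hence any graph with at least one edge ($e\ge 1$) has $u(G)<0$, whereas the null graph attains $u=0$, establishing that the null graph is the \emph{unique} efficient graph.

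I expect the only delicate point to be the treatment of the bridging term in the degenerate cases $d_i\in\{0,1\}$, where $\binom{d_i}{2}=0$ renders the written fraction ill-defined. The resolution is that a node with fewer than two neighbours has no pair of neighbours to bridge, so its bridging benefit is $0$ and the bound $\le d_i\delta^2$ holds trivially there as well. Beyond this, the argument requires no case analysis on the graph structure: the entire conclusion follows from the sign of the coefficient $(\delta - c)+\delta^2$, which the assumed parameter region forces to be negative.
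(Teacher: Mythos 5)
Your proof is correct and rests on exactly the same observation as the paper's: bounding the bridging factor $1-\sigma_i/\binom{d_i}{2}$ by $1$ gives $u_i(G)\le d_i(\delta-c+\delta^2)$, and the hypothesis $\delta^2<(c-\delta)$ makes this coefficient negative, so every graph with an edge has strictly negative total utility while the null graph attains $0$. The paper states this node-by-node while you aggregate via the handshake identity, but that is a cosmetic difference; your explicit handling of the degenerate cases $d_i\in\{0,1\}$ is a small point of care the paper leaves implicit.
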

\begin{proof}
For any node $i$, $d_{i} > 0 $ implies that the utility of that
node is negative thus reducing the overall network utility.  This
follows from $(\delta - c + \delta^{2}) $ being negative.
\end{proof}

\begin{proposition}
\label{eff-thm} When $ \delta = c $, the Turan graph is the
unique efficient graph.
\end{proposition}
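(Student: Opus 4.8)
The plan is to substitute $\delta=c$ into the utility model and reduce the claim to a purely combinatorial extremal problem, which the two classical results quoted in Section~\ref{sec:Efficiency} --- the triangle-free edge bound (\ref{turantheorem}) and the triangle lower bound (\ref{theorem2}) --- are tailor-made to settle. Setting $\delta=c$ annihilates the first term of (\ref{proposedutilitymodel}), so every node's payoff is pure bridging benefit and
\begin{align}
\nonumber u(G) = \delta^2\,\phi(G), \qquad \phi(G) := \sum_{i=1}^{n} d_i\Bigl(1-\tfrac{\sigma_i}{\binom{d_i}{2}}\Bigr).
\end{align}
Maximizing $u$ is thus equivalent to maximizing $\phi$. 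For a node with $d_i\ge 2$ I would rewrite the summand as $d_i-\frac{2\sigma_i}{d_i-1}$ (a node of degree $\le 1$ contributes $0$, since it can bridge nothing). Writing $T$ for the number of triangles of $G$ and using $\sum_i\sigma_i=3T$ (each triangle is counted once at each of its three vertices), the two facts that drive everything are: $\phi(G)\le \sum_i d_i = 2e$, and the ``triangle penalty'' $\sum_i\frac{2\sigma_i}{d_i-1}$ is strictly positive as soon as $G$ contains a triangle. Here $e$ denotes the number of edges.

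I would then split on the size of $e$ relative to $\lfloor n^2/4\rfloor$. \textbf{Sparse case} $e\le\lfloor n^2/4\rfloor$: here $\phi(G)\le 2e\le 2\lfloor n^2/4\rfloor$. For both inequalities to be tight one needs (i) the triangle penalty to vanish, i.e.\ $G$ triangle-free with no degree-one vertex, and (ii) $e=\lfloor n^2/4\rfloor$. By the uniqueness clause of Tur\'an's extremal result, a triangle-free graph with $\lfloor n^2/4\rfloor$ edges must be the complete equi-bipartite graph $G_{Turan}$, which for $n\ge 4$ has minimum degree $\lfloor n/2\rfloor\ge 2$ and hence attains $\phi(G_{Turan})=2\lfloor n^2/4\rfloor$. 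So in this regime $G_{Turan}$ is the unique maximizer (the degenerate cases $n\le 3$ I would simply check by hand).

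\textbf{Dense case} $e>\lfloor n^2/4\rfloor$: this is the step I expect to be the main obstacle, since now $G$ necessarily has triangles and I must show the penalty they force strictly outweighs the extra edges. The plan is to lower-bound the penalty using $d_i\le n-1$, so that $\frac{\sigma_i}{d_i-1}\ge\frac{\sigma_i}{n-2}$ and therefore
\begin{align}
\nonumber \phi(G) \le 2e-\frac{2}{n-2}\sum_i\sigma_i = 2e-\frac{6T}{n-2}.
\end{align}
Substituting the triangle bound $T\ge\frac{n(4e-n^2)}{9}$ from (\ref{theorem2}) gives $\phi(G)\le 2e-\frac{2n(4e-n^2)}{3(n-2)}$, and this upper bound is \emph{decreasing} in $e$ for every $n\ge 3$ (the coefficient of $e$ is negative because $\frac{8n}{3(n-2)}>2$). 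It then suffices to verify $\phi(G)<2\lfloor n^2/4\rfloor$ at the single smallest admissible value $e=\lfloor n^2/4\rfloor+1$; a direct substitution, handling $n$ even and $n$ odd separately because of the floor, collapses to an elementary inequality such as $2<\frac{2n}{n-2}$, valid for all $n\ge 3$. This closes the dense case with strict inequality, so no graph with more than $\lfloor n^2/4\rfloor$ edges can match $G_{Turan}$.

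Combining both cases yields that $G_{Turan}$ is the unique efficient graph when $\delta=c$. The two delicate points I would take care over are the degree-$\le 1$ convention --- such a vertex contributes $0$ rather than $d_i$, so it can only lower $\phi$ and never appears in the optimum for $n\ge 4$ --- and the floor function in the dense-case arithmetic, whose even and odd parities must be checked separately.
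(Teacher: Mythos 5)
Your proposal is correct and follows essentially the same route as the paper's own proof: with $\delta=c$ the total utility reduces to $\delta^2\bigl(2e - \sum_i \tfrac{2\sigma_i}{d_i-1}\bigr)$, the triangle penalty is bounded via $d_i \le n-1$ and $\sum_i \sigma_i = 3T$, and Equation~(\ref{theorem2}) together with Tur\'an's bound (\ref{turantheorem}) eliminates all graphs with more than $\lfloor n^2/4\rfloor$ edges, leaving $G_{Turan}$ as the unique maximizer. Your write-up is somewhat more careful than the paper's (explicit handling of degree-$\le 1$ vertices, the uniqueness clause of Tur\'an's theorem for graphs with exactly $\lfloor n^2/4\rfloor$ edges, and the even/odd parity of $n$ in the dense case), but these are refinements of the same argument rather than a different approach.
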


\begin{proof}
We will analyze the efficiency of an arbitrary graph (denoted by $G$) as follows.
\small{
\begin{align}
\nonumber u(G) &= \sum_{i=1}^{n} u_{i}(G) =
\sum_{i=1}^{n} d_{i} \delta^{2} \left( 1 -
\displaystyle\frac{\sigma_{i}}{\binom{d_{i}}{2}} \right )
\end{align}
\begin{align}
\nonumber &= \delta^{2} \sum_{i=1}^{n} d_{i} - \delta^{2} \sum_{i=1}^{n}
                 \displaystyle\frac{2 \sigma_{i}}{(d_{i} - 1)}
\end{align}
\begin{align}
\nonumber & \le \delta^{2} \sum_{i=1}^{n} d_{i} - \frac{\delta^{2}}{(n-2)}
            \sum_{i=1}^{n} 2 \sigma_{i}
\end{align}
\begin{align}\label{efficiency_equation}
&=\delta^{2} \sum_{i=1}^{n} d_{i} - \frac{\delta^{2}}{(n-2)} (2 \times 3 \times T_{3}(G))
\end{align}}\normalsize
where, $T_{3}(G) $ is the number of triangles in the graph $G$. The last step
of the above simplification is due to the fact that the number of links
among the neighbours of a node $i$ is the number of triangles in the graph in
which node $i$ is one of the vertices of the triangle. The factor $3$ in the
last step is due to the fact that every triangle contributes to the
$\sigma_{i}$ of $3$ nodes. We know that, for an efficient graph, Equation~(\ref{efficiency_equation}) should be maximized and that happens when the
number of triangles in a graph is minimized while simultaneously  maximizing
the number of edges in the graph.

The Turan graph (refer Equation~(\ref{turantheorem})) is a graph with
maximum edges that has no triangles.  So an efficient graph must
have an efficiency greater than or equal to that of a Turan graph.
Thus, it is clear that there is no need to consider graphs with
edges lesser than that of a Turan graph.  Let us consider the case
when a graph (denoted by $\overline{G}$) has more edges than the
Turan graph. Let $\overline{G}$ have $\lfloor \frac{n^2}{4}
\rfloor + x$ edges where $x>0$. From
Equation~(\ref{efficiency_equation}), we know that \small{
\begin{align}\label{eqn5.5}
\nonumber u(\overline{G}) & = \sum_{i = 1}^{n}u_{i}(G) = \delta^{2} \displaystyle\sum_{i=1}^{n} d_{i} - \delta^{2} \displaystyle\sum_{i=1}^{n} \frac{2 \sigma_{i}}{(d_{i} - 1)} \\
&\le \delta^{2} \left(2 \left(\bigg\lfloor \frac{n^2}{4}
\bigg\rfloor +x\right)\right) - \frac{\delta^{2}}{(n-2)} (6
T_{3}(\overline{G}))
\end{align}}\normalsize
where $T_{3}(\overline{G})$ is the number of triangles in
$\overline{G}$. From Equation~(\ref{theorem2}), we have
\small{\begin{align}\label{eff_g_dash} u(\overline{G}) & \le
\delta^{2} \left(2\left( \bigg\lfloor \frac{n^2}{4} \bigg\rfloor
+x\right)\right) - \frac{\delta^{2}}{(n-2)} \left(6 n
\left(\frac{4e-n^2}{9}\right) \right)
\end{align}}\normalsize
Since $T_3(G_{Turan}) = 0$, the efficiency of the Turan graph is:
\small{\begin{align} u(G_{Turan}) = \sum_{i}
u_{i}(G_{Turan}) &=\delta^{2} \left(2\times \bigg\lfloor
\frac{n^2}{4} \bigg\rfloor \right) 
\end{align}}\normalsize

The change in efficiency ($\Delta{u}$) between the two graphs is
\begin{equation}
\label{deltagain}
\Delta{u} = u(\overline{G}) - u(G_{Turan}) \le 2 \delta^2 \left (x  - \frac{n}{(n-2)} \frac{4x}{3} \right)
\end{equation}
which is clearly negative for any $x>0$. This implies that the
Turan graph is the unique efficient graph.
\end{proof}

\begin{proposition}
When $ \delta < c $ and $ \delta^2 > (c - \delta) $, the Turan
graph is the unique efficient graph.
\end{proposition}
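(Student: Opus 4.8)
The plan is to mirror the proof of Proposition~\ref{eff-thm}, since the only change from the $\delta=c$ case is that the net per-edge coefficient is now $(\delta-c+\delta^2)$ rather than $\delta^2$. First I would sum the utility~(\ref{proposedutilitymodel}) over all nodes and use $\sum_{i=1}^{n} d_i = 2e$, where $e$ is the number of edges, to obtain
\begin{align}
\nonumber u(G) = 2e(\delta - c + \delta^2) - \delta^2 \sum_{i=1}^{n} \frac{2\sigma_i}{d_i - 1},
\end{align}
which is exactly (\ref{efficiency_equation}) with $\delta^2$ replaced by $(\delta-c+\delta^2)$ in the edge term. The two hypotheses fix both the sign and the magnitude of this coefficient: $\delta^2 > (c-\delta)$ forces $\delta - c + \delta^2 > 0$, so that adding edges is rewarded exactly as in the Turan case, while $\delta < c$ forces $\delta - c + \delta^2 < \delta^2$, so that each edge is rewarded strictly \emph{less} than before.

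Next I would split the analysis by the number of edges. Since the subtracted sum is non-negative, $u(G) \le 2e(\delta - c + \delta^2)$. Hence any graph with $e \le \lfloor n^2/4 \rfloor$ satisfies $u(G) \le 2\lfloor n^2/4 \rfloor(\delta - c + \delta^2) = u(G_{Turan})$, and equality requires both the maximal edge count and $\sigma_i = 0$ for all $i$; since $G_{Turan}$ is the unique triangle-free graph attaining $\lfloor n^2/4\rfloor$ edges, every such $G \ne G_{Turan}$ is strictly worse. For a graph $\overline{G}$ with $\lfloor n^2/4 \rfloor + x$ edges, $x > 0$, I would reuse the chain (\ref{eqn5.5})--(\ref{eff_g_dash}) verbatim: bound $\tfrac{1}{d_i-1} \ge \tfrac{1}{n-2}$, rewrite $\sum_i 2\sigma_i = 6\,T_3(\overline{G})$, and apply the triangle lower bound~(\ref{theorem2}). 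This gives
\begin{align}
\nonumber \Delta u = u(\overline{G}) - u(G_{Turan}) \le 2x(\delta - c + \delta^2) - \frac{8\delta^2 n x}{3(n-2)}.
\end{align}
Because $(\delta - c + \delta^2) < \delta^2$, the right-hand side is strictly below $2\delta^2 x\bigl(1 - \tfrac{4n}{3(n-2)}\bigr)$, which is precisely the negative quantity of~(\ref{deltagain}); thus $\Delta u < 0$ for every $x > 0$. Combining the two cases, $G_{Turan}$ is the unique maximizer.

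The main obstacle is not the algebra, which is in fact lighter here than in Proposition~\ref{eff-thm}, but the bookkeeping at the two boundaries of the argument. I must invoke the uniqueness of $G_{Turan}$ as the extremal triangle-free graph (the complete balanced bipartite graph) to eliminate non-Turan graphs that still carry exactly $\lfloor n^2/4\rfloor$ edges, and I should recheck the floor arithmetic for odd $n$, where $4e - n^2 = 4x - 1$ rather than $4x$, to confirm that the estimate feeding~(\ref{theorem2}) and the final sign of $\Delta u$ both survive unchanged.
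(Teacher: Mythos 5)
Your proof is correct, but it takes a genuinely different route from the paper's. You re-run the quantitative machinery of Proposition~\ref{eff-thm} with the per-edge coefficient $(\delta-c+\delta^2)$ in place of $\delta^2$: the same summed-utility identity, the same bound $\tfrac{1}{d_i-1}\ge\tfrac{1}{n-2}$, the same triangle lower bound~(\ref{theorem2}), plus an explicit uniqueness step at $e=\lfloor n^2/4\rfloor$ via the extremal (Mantel/Turan) uniqueness of the balanced complete bipartite graph. The paper instead argues by contradiction and reduction, using the \emph{statement} of Proposition~\ref{eff-thm} as a black box rather than its proof: it first rules out graphs with fewer edges (using $\delta-c+\delta^2>0$) or equally many edges (such graphs must contain triangles); then, for a supposed efficient $\overline{G}$ with more edges than $G_{Turan}$, it observes that since $\delta<c$ makes the direct-link term of $\overline{G}$ strictly more negative than that of $G_{Turan}$, the bridging term of $\overline{G}$ would have to strictly exceed that of $G_{Turan}$ --- but the bridging term is exactly the entire utility when $\delta=c$, so $\overline{G}$ would beat $G_{Turan}$ in the $\delta=c$ case, contradicting Proposition~\ref{eff-thm}. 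The paper's reduction is shorter and requires no new computation; your direct argument is self-contained, yields an explicit strict gap $\Delta u<0$, and is in fact more careful than the paper on two boundary points: the equal-edge-count case (the paper only says the rival ``can contain triangles,'' whereas what is really needed is that any non-Turan graph with $\lfloor n^2/4\rfloor$ edges \emph{must} contain one), and the floor arithmetic for odd $n$, where $4e-n^2=4x-1$; there one checks $3(n-2)x<n(4x-1)$ for all $x\ge1$, so the sign conclusion survives --- a point the paper's own proof of Proposition~\ref{eff-thm} silently glosses over as well.
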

\begin{proof}
We prove this by contradiction. Assume that $\overline{G}$ is any
graph other than the Turan graph and $\overline{G}$ is efficient.
We show below that $\overline{G}$ cannot have lesser number of edges than $G_{turan}$,

\begin{align}
\nonumber u(\overline{G}) &= \sum_{i=1}^{n} u_{i}(\overline{G}) =
(\delta-c)\sum_{i=1}^{n} d_i + \sum_{i=1}^{n} d_{i} \delta^{2}
\left( 1 - \displaystyle\frac{\sigma_{i}}{\binom{d_{i}}{2}} \right)  \\
\nonumber & \le \left( \delta - c + \delta^2 \right)   \sum_{i=1}^{n} d_i \\
\nonumber & <  u(G_{turan}) \mbox{~whenever,~} \sum_{i=1}^{n} d_i < 2 \bigg\lfloor \frac{n^2}{4} \bigg\rfloor
\end{align}

And observe, if  $\overline{G}$ has same number of edges as
$G_{turan}$ and is different from it, it can contain triangles and
will have an utility less than that of $G_{turan}$, as the benefit
from bridging would go down and the benefit from direct links would
remain unchanged.

Thus $\overline{G}$ contains more edges than $G_{turan}$.
Observe, that the benefit from direct links is negative
$(\delta - c) \sum_{i=0}^{n} d_i < 0 $, and
$\overline{G}$ has an higher utility compared to that of $G_{turan}$. It has to be that the
bridging benefits in $\overline{G}$ has to be greater than that of the
Turan graph, as the utility due to direct links term has become more
negative compared to its value in $G_{turan}$

\begin{align}
\nonumber u(\overline{G}) &= \sum_{i=1}^{n} u_{i}(\overline{G}) =
\underbrace{(\delta-c)\sum_{i=1}^{n} d_i}_{\text{negative}} +
\underbrace{\sum_{i=1}^{n} d_{i} \delta^{2} \left( 1 -
\displaystyle\frac{\sigma_{i}}{\binom{d_{i}}{2}} \right
)}_{\text{utility more than } G_{Turan}}
\end{align}

This implies that this graph would give a higher utility for the
$\delta = c$ case, as the first term is $0$ there. This
contradicts Theorem \ref{eff-thm} and so our assumption must be
wrong. Hence the Turan graph is efficient.
\end{proof}

\begin{table}[h]
\vspace{0.1in} \centering
\begin {tabular} {||l||l||}
\hline
\hline
{\textbf{Parameter Range}} & {\textbf{Efficient Topologies}} \\
\hline
$ \delta <  c $ and $\delta^2 < (c - \delta)$ & Null network \\
\hline
$ \delta < c $ and $ \delta^2 > (c - \delta) $  & Turan network \\
\hline
$ \delta = c $ & Turan network \\
\hline
$ \delta > c $ and $\delta^2 > 3(\delta - c) $ & Turan network \\
\hline
$ \delta > c $ and $ (\delta - c ) > 2\delta^2$ & Complete network  \\
\hline
\hline
\end {tabular}
\caption{Characterization of Topologies of Efficient Networks in NFLP}\label{summarytable3}
\end{table}

\begin{proposition}
When $ \delta > c $ and $\delta^2 \geq 3(\delta - c) $, the Turan
graph is the unique efficient graph.
\end{proposition}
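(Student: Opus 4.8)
The plan is to decompose the total utility into a direct-link part and a bridging part, exploit the uniqueness of the Turan graph as the extremal triangle-free graph, and control any graph that exceeds the Turan edge count by recycling the two extremal estimates already used in Proposition~\ref{eff-thm}. First I would write
\begin{align}
\nonumber u(G) = (\delta-c)\sum_{i=1}^{n} d_i + \delta^2 B(G), \quad \text{where } B(G) = \sum_{i=1}^{n} d_i\Biggl(1-\frac{\sigma_i}{\binom{d_i}{2}}\Biggr),
\end{align}
and record that $\sum_i d_i = 2e$ with $e$ the number of edges. Since $\delta>c$, both $(\delta-c)$ and $(\delta-c+\delta^2)$ are strictly positive. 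Because the Turan graph is triangle-free, $\sigma_i=0$ for every node, so $B(G_{Turan})=2\lfloor n^2/4\rfloor$ and $u(G_{Turan}) = 2\lfloor n^2/4\rfloor(\delta-c+\delta^2)$.

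Next I would compare an arbitrary graph $\overline{G}$ to $G_{Turan}$ in three cases according to its edge count. If $\overline{G}$ has fewer than $\lfloor n^2/4\rfloor$ edges, the bound $0 \le B(\overline{G}) \le 2e$ together with the positivity of $(\delta-c+\delta^2)$ gives $u(\overline{G}) \le 2e(\delta-c+\delta^2) < u(G_{Turan})$. If $\overline{G}$ has exactly $\lfloor n^2/4\rfloor$ edges but is not the Turan graph, then by the uniqueness clause of Turan's theorem (Equation~(\ref{turantheorem})) it must contain a triangle, so $B(\overline{G})<2e$ while the direct-link term is unchanged, again forcing $u(\overline{G})<u(G_{Turan})$.

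The substantive case is $e = \lfloor n^2/4\rfloor + x$ with $x>0$. Here I would reuse the chain from Proposition~\ref{eff-thm}: from $\tfrac{1}{d_i-1}\ge \tfrac{1}{n-2}$ and $\sum_i\sigma_i = 3T_3(\overline{G})$ one gets $B(\overline{G}) \le 2e - \tfrac{6T_3(\overline{G})}{n-2}$, and the triangle bound of Equation~(\ref{theorem2}) then yields $B(\overline{G}) - B(G_{Turan}) \le 2x - \tfrac{2n(4e-n^2)}{3(n-2)}$. Adding the positive gain $2x(\delta-c)$ from the direct-link term and specializing to even $n$ (where $4e-n^2=4x$), the difference simplifies to
\begin{align}
\nonumber \Delta u = u(\overline{G})-u(G_{Turan}) \le 2x\Biggl[(\delta-c) - \delta^2\,\frac{n+6}{3(n-2)}\Biggr].
\end{align}
Because $\tfrac{n+6}{3(n-2)}>\tfrac13$ for every $n\ge 3$, the hypothesis $\delta^2\ge 3(\delta-c)$ (i.e. $(\delta-c)\le\delta^2/3$) renders the bracket strictly negative, so $\Delta u<0$. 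This is also where the constant $3$ originates: pushing the requirement over all $n$ forces the $n$-independent threshold $\delta^2= 3(\delta-c)$ in the limit $n\to\infty$.

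The delicate point, and the main obstacle I would have to handle carefully, is the floor $\lfloor n^2/4\rfloor$ for odd $n$, where $4e-n^2 = 4x-1$ rather than $4x$; the bound of Equation~(\ref{theorem2}) is then slightly too weak at the smallest excess $x=1$, so I would supplement it with a direct count of the triangles produced by a single intra-partition edge (at least $\lfloor n/2\rfloor$ of them), which makes $B(\overline{G}) - B(G_{Turan})$ an explicit negative value and lets me verify $\Delta u<0$ under $\delta^2 \ge 3(\delta-c)$ by hand, reserving Equation~(\ref{theorem2}) for larger $x$. Assembling the three cases shows that every graph distinct from $G_{Turan}$ has strictly smaller total utility, so the Turan graph is the unique efficient graph.
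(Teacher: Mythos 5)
Your main argument is, in substance, identical to the paper's own proof: the same decomposition $u(G)=(\delta-c)\sum_i d_i+\delta^2 B(G)$, the same estimate $B(\overline{G})\le 2e-\frac{6T_3(\overline{G})}{n-2}$ obtained from $d_i-1\le n-2$ and $\sum_i\sigma_i=3T_3(\overline{G})$, the same use of Equation~(\ref{theorem2}) for graphs with $\lfloor n^2/4\rfloor+x$ edges, and the same arithmetic producing the threshold $3$ (the paper writes the bracket as $(\delta-c+\delta^2)-\frac{4n\delta^2}{3(n-2)}<(\delta-c)-\frac{\delta^2}{3}\le 0$; your form $(\delta-c)-\delta^2\frac{n+6}{3(n-2)}$ is the same quantity). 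Your explicit treatment of graphs with fewer than $\lfloor n^2/4\rfloor$ edges, and of triangle-containing graphs with exactly $\lfloor n^2/4\rfloor$ edges, is what the paper leaves implicit here (it is spelled out in the neighbouring propositions), so that part is a faithful completion rather than a different route.

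The genuine problem is your odd-$n$ patch. You correctly spotted that the paper silently assumes $4e-n^2=4x$ (true only for even $n$), but your claim that Equation~(\ref{theorem2}) is too weak \emph{only} at $x=1$, and can be ``reserved for larger $x$,'' is quantitatively wrong. For odd $n$ one has $4e-n^2=4x-1$, and at the boundary $\delta^2=3(\delta-c)$ the chain gives
\begin{align}
\nonumber u(\overline{G})-u(G_{Turan}) \;\le\; 2x\cdot\tfrac{4\delta^2}{3}-\frac{2\delta^2 n(4x-1)}{3(n-2)}\;=\;\frac{2\delta^2}{3}\cdot\frac{n-8x}{n-2},
\end{align}
which is inconclusive for \emph{every} $x\le n/8$, a range that grows with $n$, not just for $x=1$. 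Moreover, ``the triangles produced by a single intra-partition edge'' is not a legitimate count for an arbitrary graph with $\lfloor n^2/4\rfloor+x$ edges: such a graph need not contain the Turan graph, so there are no ``partitions'' to speak of. What you actually need on the range $1\le x\le n/8$ is a supersaturation theorem of Rademacher--Erd\H{o}s--Lov\'asz--Simonovits type, namely $T_3(\overline{G})\ge x\lfloor n/2\rfloor$ for $x<n/2$; with it the bracket becomes $\frac{8}{3}-\frac{3(n-1)}{n-2}<0$ and the argument closes, but this is a citation to a nontrivial theorem, not a by-hand verification. To be fair, the paper's own proof has exactly the same hole for odd $n$ and does not even acknowledge it --- so you did better than the paper in diagnosis, but your repair as stated does not close the gap.
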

\begin{proof}
Let $\overline{G}$ be the efficient graph. Using a similar
analysis that lead to Equation~(\ref{eff_g_dash}), we can see that
\small{
\begin{align}\label{eff_g_dash1}
\nonumber u(\overline{G}) &\le (\delta+c+\delta^{2}) \left(2\left(\displaystyle \bigg\lfloor \frac{n^2}{4} \bigg\rfloor +x\right)\right) - \frac{\delta^{2}}{(n-2)} \left(6 n \left( \frac{4e-n^2}{9}\right)\right) \\
&= (\delta+c+\delta^{2}) \left(2 \left(\displaystyle \bigg\lfloor
\frac{n^2}{4} \bigg\rfloor +x\right)\right) -
\frac{\delta^{2}n}{(n-2)} \left(\frac{8x}{3}\right)
\end{align}
} \normalsize{\noindent For the Turan graph, it can also be seen
by simple analysis that } \small{
\begin{align}
\nonumber u(G_{Turan}) &= \displaystyle 2\bigg\lfloor \frac{n^2}{4} \bigg\rfloor \left( \delta-c+\delta^2 \right) \\
\nonumber \Rightarrow u(\overline{G})-u(G_{Turan}) & \le 2x\left( (\delta-c+\delta^2) - \displaystyle \frac{4n\delta^2} {3(n-2)} \right) \\
 & < 2x\left( (\delta-c+ \delta^2) - \displaystyle \frac{4\delta^2}{3} \right)
\end{align}
} \normalsize{\noindent Thus, when $\delta^2 \geq 3(\delta-c)$, the Turan graph is
    the unique efficient graph.}
\end{proof}





\begin{proposition}
When $ \delta > c $ and $ (\delta - c ) > 2\delta^2$ , the
complete graph is the efficient graph.
\end{proposition}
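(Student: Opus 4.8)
The plan is to compare an arbitrary graph $G$ directly against the complete graph $K_n$ and to show that, under the stated hypothesis, the preference for direct links is so strong that every missing edge costs more than any bridging benefit it could enable. First I would decompose the total utility into a direct-link part and a bridging part. Writing $e$ for the number of edges and $B(G) = \sum_{i} d_i\bigl(1 - \sigma_i/\binom{d_i}{2}\bigr) \ge 0$ for the aggregate bridging benefit, summing Equation~(\ref{proposedutilitymodel}) over all nodes gives $u(G) = 2e(\delta-c) + \delta^2 B(G)$, since $\sum_i d_i = 2e$. For the complete graph every pair of neighbours of a node is itself adjacent, so $B(K_n)=0$ and $u(K_n) = n(n-1)(\delta - c)$.

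Next I would introduce $m = \binom{n}{2} - e$, the number of non-edges of $G$, so that $2e = n(n-1) - 2m$ and hence
\begin{align}
\nonumber u(G) - u(K_n) = -2m(\delta - c) + \delta^2 B(G).
\end{align}
Thus it suffices to prove $\delta^2 B(G) < 2m(\delta-c)$ whenever $G \ne K_n$, i.e.\ whenever $m \ge 1$. Here the hypothesis $(\delta - c) > 2\delta^2$ enters decisively: it yields $2m(\delta-c) > 4m\delta^2$, so the whole proposition reduces to the purely combinatorial claim that the aggregate bridging benefit never exceeds four times the number of missing edges, namely $B(G) \le 4m$. The strictness of the final inequality is then supplied by the strict hypothesis, so even graphs attaining $B(G) = 4m$ (for instance $C_4 = K_{2,2}$, which one checks is the extremal configuration) are strictly beaten by $K_n$.

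To establish $B(G) \le 4m$ I would split on the density of $G$. When $G$ is sparse the crude bound $B(G) \le \sum_i d_i = 2e$ already suffices, because $2e \le 4m$ is equivalent to $3e \le n(n-1)$, which holds for all $e \le \lfloor n^2/4\rfloor$ once $n \ge 4$ (and the finitely many cases $n \le 3$ are checked by hand). The substantive case is the dense regime, where $e$ is close to $\binom{n}{2}$ and the estimate $B(G) \le 2e$ is far too lossy. Here I would rewrite $B(G) = \sum_i 2\bar\sigma_i/(d_i-1)$, where $\bar\sigma_i = \binom{d_i}{2} - \sigma_i$ counts the non-adjacent pairs among the neighbours of $i$, and double count so that each non-edge $\{j,k\}$ is charged $\sum_{i} 2/(d_i-1)$ over the common neighbours $i$ of $j$ and $k$.

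The hard part will be controlling this charge, and it is the technical heart of the proof. An individual non-edge can receive more than $4$ (consider a non-edge with many low-degree common neighbours), so a naive per-edge bound fails and a genuinely global argument is required. The structural fact I would exploit is the degree-deficiency identity $\sum_i \bigl((n-1)-d_i\bigr) = 2m$: a common neighbour $i$ with small $d_i$ is exactly what makes $2/(d_i-1)$ large, but such a node must itself be incident to many missing edges, so its excess charge can be redistributed onto those incident non-edges while keeping the total charge on every missing edge at most $4$. Making this redistribution rigorous is the main obstacle; notably, the triangle lower bound of Equation~(\ref{theorem2}) is too weak near the complete graph to deliver $B(G)\le 4m$ on its own, so the dense case genuinely demands the combinatorial charging argument rather than a direct appeal to the cited extremal bound.
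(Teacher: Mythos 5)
Your global decomposition is correct and, in fact, isolates the exact combinatorial content of the proposition: with $u(G) = 2e(\delta-c)+\delta^2 B(G)$ and $m = \binom{n}{2}-e$, the statement (over the whole region $(\delta-c)>2\delta^2$) is equivalent to the claim $B(G)\le 4m$, and your example $C_4=K_{2,2}$ (where $B=8$, $m=2$) shows the constant $4$ cannot be improved. The sparse half of your argument is also sound: $B(G)\le\sum_i d_i = 2e$, and $2e\le 4m$ whenever $3e\le n(n-1)$. The genuine gap is that the dense half --- which is precisely where the inequality is tight, and where, as you correctly observe, both the naive per-non-edge bound and the triangle bound of Equation~(\ref{theorem2}) are too weak --- is never proved. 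You describe a charging scheme (redistribute the excess charge of low-degree common neighbours onto the non-edges they are incident to, via $\sum_i \bigl((n-1)-d_i\bigr)=2m$) and then explicitly concede that making it rigorous is ``the main obstacle.'' A proof whose central lemma is announced but not established is a plan, not a proof; as submitted, the claim remains open exactly in the regime $e > n(n-1)/3$ where the real work lies.

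For comparison, the paper takes an entirely different, local route: starting from any non-complete graph, it adds an edge between two non-adjacent nodes of smallest degree, argues that the two endpoints jointly gain at least $2\delta^2$ (each gains $(\delta-c)>2\delta^2$ from the link and loses at most $\delta^2$ of bridging benefit) while the common neighbours jointly lose at most $2\delta^2$, and iterates until $K_n$ is reached. Your tight example is interesting precisely because it exposes the weakness of that greedy route: adding a diagonal to $C_4$ changes total utility by $2(\delta-c)-6\delta^2$, which is strictly negative for admissible parameters with $2\delta^2<\delta-c<3\delta^2$ (e.g.\ $\delta=0.4$, $c=0.04$), so utility is not monotone along the paper's edge-addition path; indeed the paper's bound on common-neighbour losses tacitly assumes $d_k-1\ge\min(d_i,d_j)$, which fails when the common neighbours also have minimum degree. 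So your global framing is arguably the more promising one --- but until you actually prove $B(G)\le 4m$ for dense graphs, the proposition has not been proved.
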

\begin{proof}
It can be shown that starting with an arbitrary graph
$\overline{G}$ (which is not a complete graph), adding an edge
between two nodes $i$ and $j$ (with smallest degree) increases the
cumulative utility of these two nodes by at least $2\delta^2$. At
the same time, there is a decrease in utility of a \textit{common}
neighbour of nodes $i$ and $j$, say node $k$, as there is a
decrease in the bridging benefits of node $k$. It can be shown
that the cumulative decrease in utility of all such common
neighbours formed is $\displaystyle\frac{2\delta^2}{d_k-1} min
(d_i,d_j)$ which is less than equal to $2\delta^2$. Repeating the
above process, we get the complete network.
\end{proof}

\begin{conjecture}\label{conj1}
When $ \delta > c $ and $ (\delta - c) \leq \delta^2 < 3(\delta -
c) $, the Turan graph is the efficient graph.
\end{conjecture}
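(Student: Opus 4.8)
The plan is to follow the comparison-with-$G_{Turan}$ template of the preceding propositions, writing the total utility as
\begin{align}
\nonumber u(G) = 2e\,(\delta - c + \delta^2) - \delta^2 \sum_{i=1}^{n} \frac{2\sigma_i}{d_i - 1},
\end{align}
and abbreviating $\alpha = \delta - c + \delta^2 > 0$ (positive since $\delta > c$) and $B(G) = \sum_i \frac{2\sigma_i}{d_i-1}$, so that $u(G_{Turan}) = 2\lfloor n^2/4\rfloor\,\alpha$ because $G_{Turan}$ is triangle-free. First I would dispose of every competitor $\overline{G}$ with at most $\lfloor n^2/4\rfloor$ edges: there $u(\overline{G}) \le 2e\,\alpha \le 2\lfloor n^2/4\rfloor\,\alpha = u(G_{Turan})$, with equality forcing both $e = \lfloor n^2/4\rfloor$ and $B(\overline{G}) = 0$, i.e.\ a triangle-free graph with the maximum number of edges, which is exactly $G_{Turan}$. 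The whole problem therefore reduces to competitors with $\lfloor n^2/4\rfloor + x$ edges for some $x > 0$, where $\Delta u = u(\overline{G}) - u(G_{Turan}) = 2x\alpha - \delta^2 B(\overline{G})$, and the goal is to prove $\Delta u < 0$ for every such $x$.

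At this point the reason the statement is only a conjecture becomes transparent. Substituting the triangle bound (\ref{theorem2}) together with the uniform estimate $d_i - 1 \le n-2$ into $B(\overline{G})$, exactly as in the proof leading to (\ref{eff_g_dash}), yields $B(\overline{G}) \ge \frac{8nx}{3(n-2)}$ and hence $\Delta u < 2x\bigl((\delta - c) - \tfrac{\delta^2}{3}\bigr)$, which is negative \emph{precisely} when $\delta^2 > 3(\delta - c)$ -- the regime already settled. Throughout the conjecture's regime $(\delta - c) \le \delta^2 < 3(\delta - c)$ this bound is non-negative, hence useless, so a strictly sharper lower bound on $B(\overline{G})$ is required. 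I would split the task by the size of $x$. For $x = \Theta(n^2)$ (competitors a constant fraction denser than $G_{Turan}$) I would replace the linear estimate (\ref{theorem2}) by the tight edge--triangle relation (Kruskal--Katona / Razborov), which forces superlinearly many triangles once the edge density exceeds $1/2$; as an anchor, the densest competitor $K_n$ has $u(K_n) = n(n-1)(\delta-c)$ with zero bridging benefit, and a direct computation gives $u(K_n) < u(G_{Turan})$ throughout the conjecture regime.

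The genuinely hard case is small $x$, where triangles are scarce and the weighting $\frac{1}{d_i-1}$ controls the estimate. Here the clean plan is a local exchange / symmetrization argument in the spirit of the Zykov proof of Tur\'an's theorem: show that $G_{Turan}$ is the unique local maximum by exhibiting, for any $\overline{G}$ with more than $\lfloor n^2/4\rfloor$ edges that is not complete bipartite, a degree-balancing or compression move toward a balanced bipartition that does not decrease $u$ and strictly increases it unless $\overline{G}=G_{Turan}$. The per-edge calculation cooperates at the base point: adding one within-class edge to $G_{Turan}$ raises $B$ by $\tfrac{2n}{n-2}+4$ while raising $2e\alpha$ by only $2\alpha$, so the move strictly lowers $u$ whenever $2\alpha < \delta^2\bigl(\tfrac{2n}{n-2}+4\bigr)$; since $2\alpha = 2(\delta-c)+2\delta^2 < 4\delta^2 < 6\delta^2$ under $(\delta-c)\le\delta^2$, this holds for all $n$, confirming that $G_{Turan}$ is at least a strict local maximum in the regime.

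The main obstacle -- and, I suspect, the reason the authors stopped at a conjecture -- is that $B(G)$ is not a clean subgraph-density functional but a \emph{degree-normalized} triangle count,
\begin{align}
\nonumber B(G) = \sum_{\{i,j,k\}\in \mathcal{T}(G)} \Bigl(\frac{2}{d_i-1} + \frac{2}{d_j-1} + \frac{2}{d_k-1}\Bigr),
\end{align}
where $\mathcal{T}(G)$ is the set of triangles. The $\frac{1}{d_i-1}$ factors put the problem outside the reach of off-the-shelf density machinery (Razborov / flag algebras, graph limits), and, worse, they actively oppose the lower bound: a competitor that concentrates all its extra triangles on a few high-degree hub vertices keeps those factors small, and hence $B(G)$ small, while still packing in $x$ extra edges. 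Ruling out such degree-concentrated configurations uniformly in $n$, and verifying that each symmetrization step is monotone under this non-standard weighting, is where the difficulty concentrates; the decisive step would be to pair each extra edge with a guaranteed quota of triangles carried by vertices whose degrees provably cannot all be large at once, which is exactly what would upgrade the conjecture to a theorem.
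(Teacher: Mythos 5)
The statement you were asked to prove is, in the paper, explicitly a \emph{conjecture}: the authors supply no proof of it, so there is no ``paper proof'' to compare against. Read in that light, your proposal is valuable but must not be mistaken for a proof. Your individual computations are correct and reproduce exactly the paper's own machinery: the decomposition $u(G) = 2e(\delta-c+\delta^2) - \delta^2\sum_i \frac{2\sigma_i}{d_i-1}$, the disposal of all competitors with at most $\lfloor n^2/4\rfloor$ edges (with uniqueness via Mantel/Tur\'an), and the observation that combining the triangle lower bound $T \geq n(4e-n^2)/9$ with $d_i - 1 \leq n-2$ yields $\Delta u < 2x\bigl((\delta-c) - \delta^2/3\bigr)$, which is negative precisely when $\delta^2 > 3(\delta-c)$. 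This is exactly the argument the paper uses for its proposition in the regime $\delta^2 \geq 3(\delta-c)$, and your diagnosis that it saturates at that boundary --- and is therefore useless throughout the conjecture's regime --- is the correct explanation of why the authors stopped at a conjecture. Your side calculations (the per-edge cost of a within-class edge at $G_{Turan}$ raising $B$ by $4 + \frac{2n}{n-2}$, and $u(K_n) = n(n-1)(\delta-c) < u(G_{Turan})$) also check out.

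The genuine gap is the one you yourself flag: the case of competitors with $\lfloor n^2/4\rfloor + x$ edges for small or moderate $x$ is never closed. Local maximality of $G_{Turan}$ plus the two extreme anchors ($e \leq \lfloor n^2/4\rfloor$ and $K_n$) does not exclude intermediate configurations, and the Zykov-style symmetrization you invoke is only a plan: you never establish that any compression move is monotone for the functional $u$, whose bridging term $B(G)$ is a degree-normalized triangle count rather than a clean density, and the adversarial configurations you describe (extra triangles concentrated on a few high-degree hubs, keeping each $\frac{1}{d_i-1}$ factor small) are exactly the ones a symmetrization step could fail against. So the statement remains unproven by your argument --- but this is the same gap the authors have, which is why it appears in the paper as Conjecture~\ref{conj1} rather than as a proposition. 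What your write-up adds beyond the paper is a correct proof of the sub-case $e \leq \lfloor n^2/4\rfloor$ (with uniqueness), a verification that $G_{Turan}$ is a strict local maximum in the whole conjectured regime, and a precise identification of where any future proof must do new work.
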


\begin{conjecture}\label{conj2}
When $ \delta > c $ and $ (\delta - c) \le 2\delta^2 $:

    (i) if $(\delta - c) > \frac{n}{n-2} \delta^2$, then the complete graph is the efficient
    graph.

    (ii) if $(\delta - c) < \frac{n}{n-2} \delta^2$, then the Turan graph is the efficient
    graph.
\end{conjecture}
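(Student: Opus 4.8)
\emph{Proof proposal.} The plan is to reduce the conjecture to a direct comparison between the two extreme candidates, the complete graph $K_n$ and the Turan graph $G_{Turan}$, and then to argue that no graph with an intermediate number of edges can beat both of them. Starting from the aggregate form of the utility derived in the proof of Proposition~\ref{eff-thm}, namely
\begin{align}
\nonumber u(G) = (\delta - c + \delta^2)\,2e - \delta^2 \sum_{i=1}^{n} \frac{2\sigma_i}{d_i - 1},
\end{align}
I would first evaluate the two endpoints exactly. For $G_{Turan}$ we have $\sigma_i = 0$, giving $u(G_{Turan}) = 2\lfloor n^2/4 \rfloor(\delta - c + \delta^2)$; for $K_n$ every pair of neighbours is adjacent, so all bridging vanishes and $u(K_n) = n(n-1)(\delta - c)$. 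Setting $u(K_n) = u(G_{Turan})$ and solving (for $n$ even) yields precisely the stated crossover $(\delta - c) = \tfrac{n}{n-2}\delta^2$; for $n$ odd the same computation gives $\tfrac{n+1}{n-1}\delta^2$, a minor parity shift I would record as a separate case. Once the reduction below is secured, parts (i) and (ii) follow immediately by selecting whichever endpoint has the larger utility on each side of the threshold, and by noting that $\tfrac{n}{n-2}\delta^2 < 2\delta^2$ for $n > 4$, so the interval in part (i) is nonempty and consistent with the already-proven result for $(\delta - c) > 2\delta^2$.

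The reduction has two halves. Among triangle-free graphs $\sigma_i = 0$ for all $i$, so $u(G) = (\delta - c + \delta^2)\,2e$ is strictly increasing in $e$ (the coefficient is positive since $\delta > c$); by Tur\'an's theorem (Equation~\ref{turantheorem}) the maximiser among them is $G_{Turan}$. Hence any graph that strictly beats $G_{Turan}$ must contain a triangle, and therefore have $e > \lfloor n^2/4 \rfloor$. For such graphs I would carry over the bounding step of Proposition~\ref{eff-thm}, using $d_i \le n-1$ to write $\tfrac{2\sigma_i}{d_i-1} \ge \tfrac{6 T_3(G)}{n-2}$ and hence
\begin{align}
\nonumber u(G) \le (\delta - c + \delta^2)\,2e - \frac{6\delta^2}{n-2}\,T_3(G),
\end{align}
after which a lower bound on $T_3(G)$ in terms of the edge surplus $x = e - \lfloor n^2/4 \rfloor$ turns the right-hand side into a function of $x$ alone; the goal is to show that this function never exceeds $\max\{u(K_n), u(G_{Turan})\}$ on $x \in [0, \binom{n}{2} - \lfloor n^2/4 \rfloor]$.

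The hard part will be obtaining a triangle lower bound sharp enough to recover the \emph{exact} threshold $\tfrac{n}{n-2}\delta^2$. The Nordhaus--Stewart estimate $T_3 \ge \tfrac{n(4e - n^2)}{9}$ (Equation~\ref{theorem2}) already eliminates interior graphs in the thinner regime $\delta^2 \ge 3(\delta - c)$, but near the conjectured crossover it is too weak, because its linear-in-$x$ penalty vanishes at the wrong constant. I would therefore substitute the sharp minimum-triangle count for a prescribed edge number (the Kruskal--Katona / Lov\'asz--Simonovits extremal bound), whose growth is fast enough that the per-edge bridging loss dominates the per-edge direct gain throughout $(\delta - c) \le 2\delta^2$. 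The genuine obstacle, and the reason the statement is posed as a conjecture, is the non-separability of the bridging penalty: $\sum_i \tfrac{2\sigma_i}{d_i-1}$ depends not on $T_3(G)$ alone but on how triangles distribute across vertices of differing degree, so a clean edge-count reduction really requires establishing that the constrained optimum $M(e) = \max_{|E|=e} u(G)$ is convex for $e \ge \lfloor n^2/4 \rfloor$, which would force its maximiser on that range to be either $G_{Turan}$ or $K_n$. For part (i) a cruder but more tractable route is to sharpen the marginal edge-addition argument used in the $(\delta - c) > 2\delta^2$ proposition, where inserting an edge between two minimum-degree vertices yields a direct gain of $2(\delta-c)$ against a bridging loss at their common neighbours bounded through $\tfrac{1}{d_k - 1} \le \tfrac{1}{n-2}$; pinning the cumulative constant over the full addition sequence to exactly $\tfrac{n}{n-2}$, rather than the coarser $2$, is the delicate remaining step.
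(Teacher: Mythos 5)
You should first be clear that the paper does not prove this statement at all: it is posed as Conjecture~\ref{conj2} and left open (the paper proves the adjacent regimes $(\delta-c) > 2\delta^2$ and $\delta^2 \ge 3(\delta-c)$, and only uses the conjectures to motivate the price-of-stability bounds). So there is no paper proof to match, and the question is whether your proposal closes the gap on its own; it does not, as you yourself concede. What is correct and genuinely useful in your write-up is the endpoint computation: $u(G_{Turan}) = 2\lfloor n^2/4\rfloor(\delta-c+\delta^2)$, $u(K_n) = n(n-1)(\delta-c)$, and equating them for even $n$ recovers exactly the stated threshold $(\delta-c) = \frac{n}{n-2}\delta^2$, with $\frac{n+1}{n-1}\delta^2$ for odd $n$ (a parity discrepancy worth recording, since the conjecture as stated ignores it). This correctly identifies the conjecture as the claim that the maximum of $u$ over all graphs is attained at one of the two endpoints, and your observation that any competitor to $G_{Turan}$ must contain a triangle is also sound.

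That endpoint claim is, however, the entire difficulty, and the route you outline provably cannot establish it. The bound you import from Proposition~\ref{eff-thm}, $u(G) \le (\delta-c+\delta^2)2e - \frac{6\delta^2}{n-2}T_3(G)$, replaces each $1/(d_i-1)$ by $1/(n-2)$ and is therefore essentially tight only when triangles sit on vertices of degree near $n-1$. For graphs just above the Turan density the relevant degrees are about $n/2$, so the true per-triangle penalty is roughly $12\delta^2/n$, about double what the bound charges. Concretely, using even the sharp minimum-triangle count $T_3^{\min}(\lfloor n^2/4\rfloor + x) \approx x\,n/2$ for small $x$ (Lov\'asz--Simonovits), the bounding function has derivative in $x$ approximately $2(\delta-c)+2\delta^2 - \frac{3n}{n-2}\delta^2$, which at the threshold $(\delta-c)=\frac{n}{n-2}\delta^2$ equals $\delta^2\frac{n-4}{n-2} > 0$ for $n>4$: the upper bound strictly increases above $u(G_{Turan})$ in the interior, so it cannot certify part (ii) near the crossover no matter which triangle-count lower bound you plug in. This is exactly the non-separability you flag --- the penalty depends on how triangles distribute over degrees, not on $T_3$ alone --- and the repairs you name (convexity of the constrained optimum $M(e)$, or pinning the cumulative constant $\frac{n}{n-2}$ in the marginal edge-addition argument for part (i)) are stated but not carried out. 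The proposal is thus a sensible program with correct boundary calculations, but it contains the same open core as the paper: the statement remains unproven.
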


We summarize the above results on efficiency in
Table~\ref{summarytable3}.


\section{Price of Stability \textbf{(P}\small{\textbf{o}}\normalsize\textbf{S)} of the Proposed Model }\label{POS}
Recall that PoS \cite{anshelevich:08} is the ratio of the sum of payoffs of the players in a best pairwise stable network
to that of an efficient network. In NFLP, a best pairwise stable network means a pairwise stable network with a maximum value of the sum of payoffs of the players. By invoking the results derived in the previous sections, we now  present our results on PoS for the proposed model.
\begin{theorem}\label{pos-thm1} The price of stability (PoS) is $1$ in each of the following
scenarios: \\
\noindent(i) $ \delta > c $ and $ (\delta - c ) > 2\delta^2$, \\
(ii) $ \delta > c $, $\delta^2 > (\delta - c) $ and $\delta^2 \geq 3(\delta - c) $, \\
(iii)$ \delta = c $, \\
(iv) $ \delta < c $ and $ \delta^2 > (c - \delta) $.
\end{theorem}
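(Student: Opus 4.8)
The plan is to exploit the fact that the price of stability is bounded above by $1$ by definition: an efficient network maximizes the sum of payoffs, so any pairwise stable network can do at most as well, whence PoS $\le 1$ always, with equality precisely when some pairwise stable network is itself efficient. Accordingly, in each of the four scenarios I would first identify the (unique) efficient topology using the results summarized in Table~\ref{summarytable3}, and then verify, using the stability characterizations of Propositions~\ref{lem:StabilityConditions1}--\ref{kpartite-result}, that this very topology is pairwise stable under the stated parameter conditions. Once the efficient network is exhibited as pairwise stable, it is simultaneously the best pairwise stable network, and PoS $=1$ follows immediately.

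Carrying this out case by case: in scenario (i), $\delta>c$ and $(\delta-c)>2\delta^2$, so the efficient graph is the complete graph; since $\delta>c$ gives $(c-\delta)<0$, Proposition~\ref{lem:StabilityConditions} makes the complete network pairwise stable, and hence PoS $=1$. In scenarios (ii), (iii), and (iv) the efficient graph is the Turan graph (the complete equi-bipartite network). Because the Turan graph is a special case of a complete bipartite network, Proposition~\ref{lem:StabilityConditions1} guarantees its pairwise stability whenever $(\delta-c)\le\delta^2$ and $(c-\delta)\le\delta^2$. I would then check these two inequalities in each remaining case: for (iii), $\delta=c$ reduces both to $0\le\delta^2$; for (iv), $\delta<c$ gives $(\delta-c)<0\le\delta^2$ while $\delta^2>(c-\delta)$ is assumed; for (ii), $\delta>c$ gives $(c-\delta)<0\le\delta^2$ while $\delta^2>(\delta-c)$ is assumed. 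In all three cases both stability inequalities hold, so the Turan graph is pairwise stable and coincides with the efficient graph, yielding PoS $=1$.

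The argument is essentially one of alignment: the parameter regions named in the theorem are exactly those on which the efficiency characterization of Table~\ref{summarytable3} and the pairwise-stability inequalities of Proposition~\ref{lem:StabilityConditions1} (or Proposition~\ref{lem:StabilityConditions} for the complete network) hold simultaneously for the \emph{same} topology. The only point requiring genuine care, and the main potential obstacle, is confirming that the efficiency conditions truly imply the stability inequalities in each region rather than merely overlapping them; for instance, in scenario (ii) one must note that $\delta^2\ge 3(\delta-c)$ together with $\delta>c$ already forces $\delta^2>(\delta-c)$, so the stability condition $(\delta-c)\le\delta^2$ is automatically met. Since PoS $\le 1$ holds universally and each scenario furnishes a pairwise stable network that attains the efficient value, the bound is tight and PoS $=1$ in every case.
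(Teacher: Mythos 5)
Your proposal is correct and follows essentially the same route as the paper: the paper's proof is precisely the observation that the efficient topologies in Table~\ref{summarytable3} (complete graph in scenario (i), Turan graph in (ii)--(iv)) are themselves pairwise stable under the corresponding conditions in Table~\ref{summarytable2}, so the best pairwise stable network attains the efficient value and PoS $=1$. Your case-by-case verification via Propositions~\ref{lem:StabilityConditions1} and~\ref{lem:StabilityConditions} simply makes explicit the details the paper leaves to the reader.
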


This theorem can be proved easily using the results summarized
in Table~\ref{summarytable2} and Table~\ref{summarytable3}.

\noindent Note: Since the null network is the only efficient network
when $\delta < c $ and $\delta^2 < (c - \delta)$, PoS is not
defined in this region.

In view of Conjecture~\ref{conj1}, the following result presents
bounds on PoS.

\normalsize
\begin{proposition}
\label{pos-thm2} When $ \delta > c $ and $ (\delta - c) \leq
\delta^2 < 3(\delta - c) $, PoS $ > \frac{1}{2}$.
\end{proposition}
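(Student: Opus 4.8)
The plan is to bound the price of stability directly, \emph{without} relying on Conjecture~\ref{conj1}, by lower-bounding the numerator (the value of a best pairwise stable network) and upper-bounding the denominator (the value of an efficient network) separately, and then showing that the resulting ratio already exceeds $\frac{1}{2}$. The key observation is that the Turan graph is an explicit pairwise stable network available throughout this region: the hypothesis $(\delta-c)\le\delta^2$, together with $(c-\delta)<0\le\delta^2$ (which holds because $\delta>c$), is exactly what Proposition~\ref{lem:StabilityConditions1} requires for the complete bipartite network to be pairwise stable. Hence a best pairwise stable network has total payoff at least $u(G_{Turan})$.

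First I would record the value $u(G_{Turan}) = 2\lfloor n^2/4\rfloor(\delta-c+\delta^2)$, which follows because the Turan graph is triangle-free, so every node receives its full bridging benefit ($\sigma_i=0$); this is the same computation already used in the proof of Proposition~\ref{eff-thm}. Since $\delta>c$ forces $\delta-c+\delta^2>0$, this quantity is strictly positive, so the ratio defining PoS is well defined and positive.

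Next I would upper-bound the efficient value by the crudest possible estimate. For any graph $G$ and any node $i$, the bridging factor satisfies $1-\sigma_i/\binom{d_i}{2}\le 1$, so $u_i(G)\le d_i(\delta-c+\delta^2)$; summing over nodes and using $\sum_i d_i = 2e \le n(n-1)$ gives $u(G)\le n(n-1)(\delta-c+\delta^2)$ for \emph{every} $G$, in particular for the efficient graph $G^{*}$. Combining the two bounds,
\begin{align}
\nonumber \text{PoS} \;\ge\; \frac{u(G_{Turan})}{u(G^{*})} \;\ge\; \frac{2\lfloor n^2/4\rfloor (\delta-c+\delta^2)}{n(n-1)(\delta-c+\delta^2)} \;=\; \frac{2\lfloor n^2/4\rfloor}{n(n-1)}.
\end{align}
It then remains to verify the purely combinatorial inequality $\frac{2\lfloor n^2/4\rfloor}{n(n-1)}>\frac{1}{2}$ by cases: for even $n$ it equals $\frac{n}{2(n-1)}$ and for odd $n$ it equals $\frac{n+1}{2n}$, both of which strictly exceed $\frac{1}{2}$.

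The only delicate point is the decoupling from the conjecture: because the upper bound on $u(G^{*})$ holds for all graphs, and the lower bound on the numerator uses only that the Turan graph is pairwise stable, the argument never needs to identify the actual efficient topology. The main obstacle is therefore conceptual rather than analytic: one must resist sharpening the denominator bound with the triangle-counting estimate of Equation~(\ref{theorem2}) (which would be required to push the ratio toward $1$, and which is precisely what the unproven conjecture would give), and instead accept that the loose edge-count bound already yields the claimed $\frac{1}{2}$ once the floor function is handled.
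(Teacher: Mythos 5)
Your proof is correct, and it lands on exactly the same quantitative bound as the paper --- the numerator witness is $u(G_{Turan}) = 2\lfloor n^2/4\rfloor(\delta-c+\delta^2)$, the denominator is bounded by $n(n-1)(\delta-c+\delta^2)$, and the ratio works out to $\frac{1}{2}+\frac{1}{2n}$ --- but your route to the denominator bound is genuinely cleaner than the paper's. The paper opens with ``Let Conjecture~\ref{conj1} be false,'' assumes the efficient graph $\overline{G}$ has strictly more edges than the Turan graph, writes its edge count as $\lfloor n^2/4\rfloor + x$ with $x>0$, invokes the triangle lower bound of Equation~(\ref{theorem2}), and then simply discards the resulting negative correction term --- arriving at the very bound $u(\overline{G})\le(\delta-c+\delta^2)n(n-1)$ that you obtain in one line from the observations that the bridging factor is at most $1$ and $\sum_i d_i \le n(n-1)$. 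Because your estimate holds for \emph{every} graph, your argument needs no case split on the conjecture, no claim about the efficient graph's edge count, and no triangle counting; it is unconditional where the paper's is (at least nominally) conditional. A second improvement: the paper asserts that the Turan graph is the pairwise stable graph ``with the highest utility'' (citing Table~\ref{summarytable2}), a stronger claim than is ever justified or needed; you correctly use only that the Turan graph \emph{is} pairwise stable --- which, as you note, follows from Proposition~\ref{lem:StabilityConditions1} since $\delta>c$ gives $(c-\delta)<0\le\delta^2$ and the region hypothesis gives $(\delta-c)\le\delta^2$ --- so that a best pairwise stable network has value at least $u(G_{Turan})$. Your explicit even/odd treatment of the floor is equivalent to the paper's use of $2\lfloor n^2/4\rfloor\ge(n^2-1)/2$. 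In short: same theorem, same constant, but a tighter and more self-contained argument.
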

\begin{proof}
We know that, under the conditions $ \delta > c $ and $ (\delta - c) <\delta^2 < 3(\delta - c) $, the pairwise stable graph with the highest utility is the Turan graph (as seen from Table~\ref{summarytable2}).
Let Conjecture~\ref{conj1} be false.
In this scenario, let us denote the efficient graph by
$\overline{G}$. We will now evaluate an upper bound on the maximum
efficiency of $\overline{G}$. $\overline{G}$ has to have more
direct links than the Turan graph (as $\delta > c $) to be a
candidate for efficient graph. Let $\overline{G}$ have
$\left(\displaystyle \bigg\lfloor \frac{n^2}{4} \bigg\rfloor +
x\right)$ edges where $x>0$. \vspace{-0.1in} \small{
\begin{align}\label{efficiency_equation2}
\nonumber u(\overline{G}) = \sum_{i=1}^{n} u_{i}(\overline{G}) = (\delta-c)\sum_{i=1}^{n} d_i+ \sum_{i=1}^{n} d_{i} \delta^{2} \left( 1 - \displaystyle\frac{\sigma_{i}}{\binom{d_{i}}{2}} \right ) \\
\nonumber = (\delta-c+\delta^{2})\sum_{i=1}^{n} d_i - \delta^{2}\left(\displaystyle\frac{2\sigma_{i}}{d_i-1} \right )
\end{align}
} \normalsize Since  $d_i$ can be at most $(n-1)$, \small{
\begin{align}
\nonumber u(\overline{G}) \leq (\delta-c+\delta^{2}) n(n-1) - \left(\frac{2\delta^{2}}{n-2}\right) \sum_{i=1}^{n}  \sigma_i \\
\nonumber u(\overline{G}) \leq (\delta-c+\delta^{2}) n(n-1) -
\left(\frac{2\delta^{2}}{n-2}\right) T_3(\overline{G})
\end{align}
} \normalsize By Equation~(\ref{theorem2}), we have \small{
\begin{align}
\nonumber u(\overline{G}) &\leq  (\delta-c+\delta^{2}) n(n-1) - \left(\frac{2\delta^{2}}{n-2}\right) \left(\displaystyle \frac {n(4e-n^2)}{9}\right) \\
\nonumber &= (\delta-c+\delta^{2}) n(n-1) -
\left(\frac{\delta^{2}n}{n-2}\right) \left(\displaystyle \frac
{8x}{9}\right) \vspace{-0.1in}
\end{align}
} \normalsize Since
$\displaystyle\left(\frac{\delta^{2}n}{n-2}\right)
\left(\displaystyle \frac {8x}{9}\right) > 0$, we have \small{
\begin{align}
\nonumber u(\overline{G}) \leq  (\delta-c+\delta^{2}) n(n-1)
\end{align}
} \normalsize
The Turan graph is pairwise stable under these conditions (refer
Table~\ref{summarytable2}). Hence we get the following:
\small{
\begin{align}
\nonumber u(G_{Turan}) & = (\delta-c+\delta^2) \left(\displaystyle 2 \bigg\lfloor \frac{n^2}{4} \bigg\rfloor \right) \\
\nonumber PoS & \ge \frac{u(G_{Turan})}{u(\overline{G})} \geq
\displaystyle \frac{(\delta-c+\delta^2)
\left(\displaystyle\frac{n^2 - 1}{2}\right)}{(\delta-c+\delta^{2})
n(n-1) } = \displaystyle \frac{1}{2} + \frac{1}{2n}
\end{align}
} \normalsize This implies that $ PoS > \frac{1}{2}$.
\end{proof}
\textit{Remark:} In view of Conjecture~\ref{conj2}, it can be
noted that a similar bound can be obtained in the region
\text{$\delta
> c$ and $(\delta-c) \le 2\delta^2$}. The details are not provided
here due to space constraints.

From Theorem \ref{pos-thm1} and Theorem \ref{pos-thm2} along with the simulation results, we conclude that, under mild conditions, the proposed NFLP produces efficient networks that are pairwise stable. This is desirable from the view of
system design.




%

\section{Conclusions and Future Work}\label{conclusion}

In this paper, we proposed a network formation game with localized payoffs (NFLP) and studied the topologies of pairwise stable and
efficient networks. We gained additional insights about the network formation process through detailed simulations.
We also studied the tradeoff between pairwise stability and efficiency using the notion of PoS. In particular, we computed the PoS of the proposed NFLP. Except for a few configurations of $\delta$ and $c$, we have shown that PoS is $1$. This means that, under mild conditions, that NFLP produces efficient networks that are pairwise stable.

In the utility function we defined in Section~\ref{utilitymodel}, the payoff of any node had two components - benefit from direct links and benefit from bridging. The pairwise stable network topologies of our model (Section \ref{sec:Stability}) shows that there are no bridges in the equilibrium networks. Bridges can also be considered as bottlenecks of information flow. Since every node is striving to obtain a bridging position there are no bridges in the equilibrium networks, this suggests that the proposed utility model avoids bottlenecks in decentralized network formation. Here are a few pointers for future work. First, the framework in this paper can be extended to the case of directed graphs and weighed graphs. This involves certain challenges such as defining the utility model appropriately. Second, the setting in this paper can be extended by varying the notions of stability and efficiency. We note that there are several possible notions of stability and efficiency that exist in the literature. The choice of an appropriate notion of stability as well as efficiency is a topic of debate.

Further, our model gives us some valuable hints at the networks formed in real world as well. Some noted work in complex network literature has observed the emergence of bipartite graphs in real world scenarios \cite{albert:02, newmanstrogatzwatts}. An important example has been the class of collaboration networks. It has been observed that the network of actors basically is a uni-mode bipartite graph \cite{newmanstrogatzwatts}. Other important examples of real world bipartite networks include boards of directors of companies, co-ownership networks of companies and collaboration networks of scientists and movie actors. In the analysis of our proposed model in this paper, we have seen the emergence of important graph structures like the Turan graph and in general, bipartite graphs and $k$-partite graphs during the network formation process under many configurations. Though our model does not precisely solve the difficult problem of identification of \textit{all} parameters affecting network formation, it nevertheless offers valuable hints about some of the important  parameters affecting real world network formation. The studies on our utility model of network formation also offers strong evidence that incorporation of important game theoretic concepts like pairwise stability is vital to the understanding of complex network formation behaviour.

It is the goal of our future work to expand the horizon of our understanding of other class of real world networks namely the Internet (or the world wide web), epidemic networks, friendship networks, power grid networks, etc, and propose suitable strategic complex network formation models that, at least, approximately imitate the formation behaviour of some of these important real world networks.

\small{
\bibliographystyle{IEEEtran}
\bibliography{SNF-finalversion-SocialNetworks}}

\end{document}